\documentclass[sn-apa]{sn-jnl}
\usepackage{marginnote}
\usepackage{amsfonts}
\usepackage{amsmath}
\usepackage{hyperref}
\usepackage{tabularx}
\usepackage{caption}
\usepackage{graphicx}
\usepackage{rotating}
\usepackage{amssymb}
\usepackage[running]{lineno}

\jyear{2021}%

\theoremstyle{thmstyleone}%
\newtheorem{theorem}{Theorem}
\newtheorem{proposition}[theorem]{Proposition}%

\theoremstyle{thmstyletwo}%
\newtheorem{remark}{Remark}%

\theoremstyle{thmstylethree}%
\usepackage{natbib}
\raggedbottom

 \newtheorem{lem}{Lemma}[]

\DeclareMathOperator*{\argmax}{arg\,max}

\DeclareMathOperator*{\T1}{\top}

\usepackage{latexsym}
\usepackage{graphbox}

\begin{document}

\title[Classification of multivariate functional data with PLS approaches]{Classification of multivariate functional data on different domains with Partial Least Squares approaches}


 \author*[1,2]{\fnm{Issam-Ali} \sur{Moindjié}}\email{issam-ali.moindjie@inria.fr}

 \author[1,2]{\fnm{Sophie} \sur{Dabo-Niang}}\email{sophie.dabo@univ-lille.fr}

  \author[1,2,3]{\fnm{Cristian} \sur{Preda}}\email{cristian.preda@univ-lille.fr}




\affil[1]{
\orgdiv{MODAL team}, \orgname{Inria
de l'Université de Lille}, \orgaddress{  \country{France}}}

\affil[2]{
\orgdiv{University of Lille}, \orgname{CNRS, UMR 8524 }, \orgaddress{ \city{Laboratoire Paul Painlevé}, \postcode{F-59000, France}}}

\affil[3]{
\orgdiv{Institute of Statistics and Applied Mathematics of the Romanian Academy}, \orgaddress{ \city{Bucharest}, \postcode{050711, Romania}}}



\abstract{Classification (supervised-learning) of multivariate functional data is
considered when the elements of the random functional vector of interest are defined on different domains.

In this setting, PLS classification and tree PLS-based methods for multivariate functional data are presented.
From a computational point of view, we show that the PLS components of the regression with multivariate functional data can be obtained using only the PLS methodology with univariate functional data.
This offers an alternative way to present the PLS algorithm for multivariate functional data.

Numerical simulation and real data applications highlight the performance of the proposed methods.}

\keywords{multivariate functional data analysis, supervised learning,  classification, partial least squares regression (PLS)}



\maketitle

\section{Introduction}\label{sec1}
In many areas, high-frequency data are monitored in time and space. For example, (i) in medicine, a patient's state can be diagnosed by time-related recordings (e.g. electroencephalogram, electrocardiogram) 
or/and images (e.g. fMRI) and (ii) in finance,  the stocks markets are naturally recorded in time and space. Analyzing such data requires adapted techniques, mainly because of the high dimension and their complex time and space correlation structure. 
Since the pioneer works of \cite{ramsay2008}, this data is well-known in statistics as functional data and, nowadays, it is a well-established statistical research domain. Viewed as a sample of a random variable with values in some infinite dimensional space, functional data is mostly associated with a random variable indexed by a continuous parameter such as the time, wavelengths, or percentage of some cycle.\par
Dimension reduction techniques are used in order to tackle the issue of high dimension and correlation. Among these, the most basic and  elementary one is the selection of privileged features of data by expert's knowledge (see e.g \cite{saikhu2019}, \cite{javed2020}). Some other works focused on deep learning models, in particular, long short-term memory models have been proposed for time series (\cite{LSTM}, \cite{LSTM_2}, \cite{multivariate}). They have the advantage of being less dependent on prior knowledge but are usually not interpretable. Maybe the most used methodologies for dealing with functional data are based on building latent models such as principal component analysis/regression (PCA, PCR) (\cite{ramsay2008} \cite{MFPCA_1}), \cite{escabias2004principal} and partial least squares (PLS) \cite{pls_exp} \cite{preda2007PLS}.  
\par In this paper we are focused on supervised classification with binary response $Y$ and multivariate functional data predictor $X  = (X^{(1)}, \ldots, X^{(d)})^{\T1}$, where for $j=1,\ldots,d$, $X^{(j)}$ are univariate functional random variables, $X^{(j)}=\{X^{(j)}(t), t\in \mathcal{I}_j\}$, and  $\mathcal{I}_j$ is some compact continuous index set. 
\par 
The supervised classification of univariate functional data ($d=1$)
has been the source of various contributions. \cite{FLDA} extended multivariate linear discriminant analysis (LDA) to 
irregularly sampled curves. As maximizing the between-class variance with respect to the total variance leads to an ill-posed problem, \cite{preda2007PLS} proposed a partial least square-based classification approach for univariate functional data. Using the concept of depth, \cite{lopez2006depth} introduced  robust procedures to classify 
functional data. Non-parametric approaches have also been investigated, using distances and similarities measures, see e.g \cite{ferraty}, and \cite{galeano2015} {\color{black} for an overview of the use of Mahalanobis distance}. Tree-based techniques applied to functional data classification are quite recent: \cite{maturo2022} introduced tree models using functional principal component scores as features, and \cite{moller2018} presented a tree based on curve distances.
\par  In the multivariate functional data setting, the supervised classification 
is mainly investigated when all domains $\mathcal{I}_j$ are identical,  $\mathcal{I}_j=[0, T]$ for $j=1, \ldots, d$ and $ T> 0$,  that is, all the $d$-components of $X$ are defined on the same domain. Under this assumption, \cite{blanquero2019} proposed a methodology that allows for optimal selection of the most informative time instants in the data. In \cite{gorecki}, regression models are used to classify multivariate functional data by reduction dimension techniques based on basis projection. Recently, \cite{LDA} proposed a  linear discriminant analysis. To avoid the ill-posed problem of the maximization  of the between variance in the functional case, the authors use discretization techniques, by pooling data at  specific time points. \par 
The classification of  multivariate functional data with different domains (the domains $\mathcal{I}_j$ are different) is rarely explored. This framework  is more flexible and  makes possible the use of different types of data simultaneously (e.g. time series, images) \citep{happ2018multivariate}. To the best of our knowledge, only \cite{golovkine} proposed a supervised classification method in this setting. They introduced a tree-based method for unsupervised clustering  and demonstrated the applicability of their method to supervised classification. Their method is based on principal component analysis for multivariate functional defined on different domains (MFPCA), presented in \citep{happ2018multivariate}.\par The use of MFPCA as ordinary principal component analysis (PCA) for supervised learning leads to some non-trivial issues, such as the
number and the selection of the principal components to be retained in the model. Therefore, the partial least square (PLS) approach has been an interesting alternative, as the obtained PLS components are based on  the relationship between predictors and the response. Since the introduction of PLS regression on univariate functional data predictors in \cite{preda2002regression}, many contributions have been proposed,  particularly in the univariate functional framework. As already mentioned above,  
\cite{preda2007PLS} demonstrated the ability to use PLS for linear discriminant analysis. In \cite{pls_exp} the authors show the relationship between the PLS of univariate functional and ordinary PLS on the coefficients obtained from basis expansion approximation.  An alternative non-iterative functional partial least square for regression is developed in \cite{delaigle2012}, and they  demonstrate consistency and establish convergence rates. For interpretability purposes, \cite{sparsefpls} has recently introduced a modified partial least squares approach to obtaining the sparsity of the coefficient function.\\ To the best of our knowledge,  PLS regression  for multivariate functional data has been explored only in one domain setting. In \cite{FPLS0}, the authors proposed a two-step approach for dealing with multivariate functional covariates. The first step consists of computing independently the PLS components for each (univariate) dimension $X^{(j)}, j=1, \ldots, d$. Then, they extract new uncorrelated features based on linear combinations of the obtained PLS components.  In \cite{MFPLS2022} the aim is to provide  a robust version of PLS for multivariate functional data. They have extended \cite{pls_exp} basis expansion  results on multivariate functional data and proposed a partial robust M-regression in this framework. 
\par We extend the recent contribution in \cite{MFPLS2022}  by investigating more exhaustively PLS procedures, in particular, we derive the relationship between the PLS  regression with univariate functional data (FPLS) and the PLS {\color{black} regression} with multivariate functional data (MFPLS).
{ \color{black} This relationship provides a way to estimate the PLS components for multivariate functional data  from the corresponding univariate ones. In the one-domain setting, it provides, from a computational point of view, a new estimating procedure. In the case of different domains, this relationship makes it possible to treat each univariate functional data with a different domain separately and then combine the FPLS components to obtain the MFPLS ones.
}

\par
The different dimension framework makes it possible to mix functional data of heterogeneous types (e.g. images and time series).  Inspired by  the Tree Penalized Linear Discriminant Analysis (TPLDA)  introduced in  \cite{poterie2019classification}, we propose a tree classifier based on PLS regression scores. Similarly to the TPLDA, our tree model uses a predefined structure group of dimensions. \par 
The paper is organized as follows. 
Section \ref{s_2} presents the PLS methodology for binary classification. It introduces the PLS regression with multivariate functional data defined on different domains and establishes the relationship with the univariate functional PLS approach. The presentation of the TMFPLS methodology ends this section. 
Section \ref{sec_sims} presents simulation studies for regression and classification purposes and compares the performances of our approaches with existing methods.  We also apply the MFPLS and TMFPLS methods to benchmark data for multivariate time series classification in Section \ref{appli}. A discussion is given in Section \ref{conc}. The appendix contains detailed
proofs of some theoretical results. The supplementary material includes  additional figures related to the numerical experiments.


\section{Methods}
\label{s_2}
\subsection{Basic principles and notations}
We are dealing with  multivariate functional data defined on different domains in a similar framework as \cite{happ2018multivariate}. As a general model for multivariate functional data analysis, let $X$ be a stochastic process represented by a \textit{d-}dimensional vector of functional random variables 
$X  =(X^{(1)}, \ldots, X^{(d) } )^{\T1}, $
defined on the probability space $(\Omega, \mathcal{A}, \mathbb{P})$. \par 
In the classical setting (\cite{ramsay2008}, \cite{MFPCA_1}, \cite{gorecki}), the components $X^{(j)}$, $j=1, \ldots, d$, are assumed real-valued stochastic processes defined on some finite continuous interval $[0,T]$. In our setting, we consider the general framework where each component $X^{(j)}$ is defined on some specific continuous compact domain $\mathcal{I}_j$ of $\mathbb{R}^{d_j}$, with $d_j \in \mathbb{N}-\{0\}$. Thus, for $d_j = 1$ we deal in general with time or wavelength domains whereas for $d_j=2$, the domain $\mathcal{I}_j$ indexes images or more complex shapes. It is also assumed that $X^{(j)}$ is a $L_2$-continuous process, and it has squared integrable paths, i.e. each trajectory of $X^{(j)}$ belongs to the Hilbert space of the square-integrable functions defined on $\mathcal{I}_j$,  $L_2(\mathcal{I}_j)$. These general hypotheses ensure that integrals involving the variables $X^{(j)}$ are well-defined. Let define  $\mathcal{H}= L_2(\mathcal{I}_1) \times ... \times L_2(\mathcal{I}_d)$ be the Hilbert space of vector functions

$$\mathcal{H} = \{f = (f^{(1)}, \ldots, f^{(d)})^{\T1}, f^{(j)} \in  L_{2}(\mathcal{I}_j)\;,  j=1, \ldots, d\}$$
endowed with the inner product 
 \begin{equation*}
     \langle \langle f, g\rangle \rangle_\mathcal{H} = \sum_{j=1}^d \langle f^{(j)}, g^{(j)}\rangle_{L_2(\mathcal{I}_j) } = \sum_{j=1}^d  \int_{\mathcal{I}_j} f^{(j)}(t)g^{(j)}(t) dt.
 \end{equation*}
 where $dt$ is the Lebesgue measure on $\mathcal{I}_j$. In the following, if there's no confusion, the index $\mathcal{H}$ will be omitted and  $\vert \vert \vert, \vert \vert \vert $ will denote the norm induced by $\langle \langle, \rangle \rangle$.
\subsection{The Linear Functional Regression Model}
When the aim is the prediction (the supervised context), the stochastic process $X$ is associated to a response variable of interest $Y$ through the conditional expectation $\mathbb{E}(Y\vert X).$\\ 
\noindent Let consider the real-valued response variable $Y$ be defined on the same probability space as $X$, $$Y:\Omega \rightarrow \mathbb{R}.$$
 Without loss of generality, we assume that $Y$ and $X$ are zero-mean, 
\begin{equation}
    \mathbb{E}(Y) = 0,\;  \mathbb{E}(X^{(j)}(t)) = 0 \;,  j =1, \ldots, d \;  t \in \mathcal{I}_j 
\label{z_mean}
\end{equation} 
and $Y$ has a finite variance.

The functional linear regression model  assumes that $\mathbb{E}(Y\vert X)$ exists and is a linear operator as a function of $X$. Thus, we have:  

\begin{equation}
    Y=\langle \langle X, \beta \rangle \rangle + \epsilon
    \label{eq_lm},
\end{equation}
where 
\begin{itemize}
\item[-] $\beta \in \mathcal{H}$ denotes the regression parameter (coefficient) function, 
$$\beta = \left(\beta^{(1)}, \ldots, \beta^{(d)}\right)^{\T1}, $$
\item[-] $\epsilon$ denotes the residual term which is assumed to be of finite variance $\mathbb{E}(\epsilon^2)= \sigma^2$ and uncorrelated to $X$.
\end{itemize}
In the integral form, the model in (\ref{eq_lm}) is written as :

\begin{equation}
    Y= \sum_{j=1}^{d}\int_{\mathcal{I}_j}X^{(j)}(t)\beta^{(j)}(t)dt + \epsilon.
    \label{eq_lm_int}
\end{equation}




%

\noindent Under the least squares criterion, the estimation of the coefficient function  $\beta$  is, in general, an ill-posed inverse problem (\cite{pls_exp}, \cite{preda2002regression}, \cite{preda2007PLS}). From a theoretical point of view, this is due to the infinite dimension of the predictor $X$, which makes that its covariance operator is not invertible \citep{cardot1999}. Hence, dimension reduction methods such as principal component analysis (\cite{happ2018multivariate}) and expansion of $X$ into a basis of functions (\cite{pls_exp}) can be used in order to obtain an approximation of linear form in (\ref{eq_lm_int}).  

\subsubsection{Expansion (of the predictor) into a basis of functions}
\label{exp}
For each dimension $j$, $j=1, \ldots, d$,  let consider in $L_2(\mathcal{I}_j)$ the set $\Psi^{(j)} = \{\psi^{(j)}_1, \ldots , \psi^{(j)}_{M_j}\}$ of $M_j$ linearly independent functions. Denote with $M = \sum_{j=1}^{d}M_j$.  

Assuming that the functional predictor $X$ and the regression coefficient function $\beta$ admit the expansions 
\begin{equation}
    X^{(j)}(t)  = \sum_{k=1}^{M_j} a_{k}^{(j)} \psi^{(j)}_k(t),  \ \ \ \; 
    \beta^{(j)}(t)  = \sum_{k=1}^{M_j} b^{(j)}_k \psi^{(j)}_k(t), 
    \label{b_1}
\end{equation}
 $\forall t \in \mathcal{I}_j, \; j = 1,\ldots, d$, the functional regression model in (\ref{eq_lm_int}) is equivalent to the multiple linear regression model: 

\begin{equation}
    Y= (\mathrm{F}a)^{\T1}  b + \epsilon 
    \label{reg_lin} 
\end{equation}
where
\begin{itemize}
\item[-] $a$ is the vector of size $M$ obtained by concatenation of vectors $a^{(j)}= (a^{(j)}_1, a^{(j)}_2, \ldots , a^{(j)}_{M_j} )^{\T1}$, $j = 1, \ldots, d, $
\item[-] $b$ is the coefficient vector of size $M$ obtained by concatenation of vectors  
$b^{(j)}= (b^{(j)}_1, b^{(j)}_2, \ldots , b^{(j)}_{M_j} )^{\T1}$, $j = 1, \ldots, d, $ and 
\item[-] $\mathrm{F}$ is the block matrix of size $M\times M$ with diagonal blocks $\mathrm{F}^{(j)}$, $j = 1, \ldots, d$, 

\begin{equation}
    \mathrm{F}= \begin{pmatrix} \mathrm{F}^{(1)} & 0 & \ldots& 0 \\ 
    0& \mathrm{F}^{(2)}& \ldots& 0 \\
    \vdots  &  & & \vdots   \\ 
    0 & 0 & \ldots & \mathrm{F}^{(d)}
    \label{F}
    \end{pmatrix}. 
\end{equation} 
For each $j = 1,\ldots, d$, $\mathrm{F}^{(j)}$ is the matrix of inner products between the basis functions, with elements $\mathrm{F}^{(j)}_{k,l} = \langle \psi_{k}^{(j)}, \psi_{l}^{(j)}\rangle_{L_2(\mathcal{I}_j)}$, $1\leq k,l\leq M_j$.

\end{itemize}

 \noindent Hence, under the assumption of basis expansion hypothesis (\ref{b_1}), the estimation of the  coefficient function, $\beta$, is equivalent to the estimation of the coefficient vector  $b$ in a classical multiple linear regression model with a design matrix involving the basis expansion coefficients of the predictor (the vector $a$) and the metric provided by the choice of the base's functions (the matrix $F$). \\ 
The least-square criterion for the estimation of $b$ yields in some settings (e.g. large number of basis functions) to  multicollinearity and high dimension issues, similar to the univariate setting (see \cite{pls_exp} for more details).  Two well-established methods of estimation,  principal component regression (PCR) and partial least squares regression (PLS) are reputed for the efficiency of their estimation algorithm and the interpretability of the results. As mentioned in \cite{jong1993pls} in the finite-dimensional setting and in \cite{pls_exp} for the functional one, for a fixed number of components, the PLS regression fits closer than the PCR. Thus, the PLS regression provides a more efficient solution (sum of square errors criterion). Numerical experiments confirm these results for the regression with univariate functional data (see for more details \cite{delaigle2012}, \cite{sparsefpls}). \\ 
In the next section, we present the proposed PLS regression of multivariate functional data. 
\par 
\subsection{PLS regression with multivariate functional data: MFPLS}
\label{PLS_m}
PLS regression penalizes the least squares criterion by maximizing the covariance between 
linear combinations of the predictor variables $X$ (the PLS components) and the response $Y$.
It is based on an iterative algorithm building at each step PLS components as predictors for the final regression model. 
In the multivariate setting, analogously to the univariate case (\cite{preda2002regression}), the weights for the linear combinations are obtained as the solution to the Tucker criterion:  

\begin{equation}
    \max_{w \in \mathcal{H} }\text{Cov}^2(\langle \langle w, X \rangle \rangle, Y) ,
    \label{f_1}
\end{equation}
with  $w = (w^{(1)}, \ldots, w^{(d)})^{\T1}$ such that $\vert\vert\vert w\vert\vert\vert_{\mathcal{H}}=1. $

The following proposition establishes the solution to the above maximization problem.
\begin{proposition} The solution of \eqref{f_1} is given by 
\begin{equation}
    w^{(j)}(t)=\frac{\mathbb{E}(X^{(j)}(t)Y)}{\sqrt{\sum_{k=1}^d\int_{\mathcal{I}_k}\mathbb{E}^2(X^{(k)}(s)Y)ds}}, \forall t\in \mathcal{I}_j,\; j=1, \ldots, d.
\end{equation}
\label{prop_mul}
\qed
\end{proposition}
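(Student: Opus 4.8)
The plan is to reduce the Tucker criterion \eqref{f_1} to the maximization of a linear functional over the unit sphere of $\mathcal{H}$, and then to characterize the maximizer through the Cauchy--Schwarz inequality.

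First I would use the zero-mean assumption \eqref{z_mean} to drop the centering, so that
\begin{equation*}
\text{Cov}(\langle \langle w, X \rangle \rangle, Y) = \mathbb{E}\left( \langle \langle w, X \rangle \rangle\, Y \right) = \mathbb{E}\left( \sum_{j=1}^d \int_{\mathcal{I}_j} w^{(j)}(t) X^{(j)}(t)\, dt \; Y \right).
\end{equation*}
Interchanging expectation and integration --- legitimate under the $L_2$-continuity and square-integrability hypotheses on the $X^{(j)}$ via Fubini's theorem --- yields
\begin{equation*}
\text{Cov}(\langle \langle w, X \rangle \rangle, Y) = \sum_{j=1}^d \int_{\mathcal{I}_j} w^{(j)}(t)\, \mathbb{E}\left( X^{(j)}(t) Y \right) dt.
\end{equation*}

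Next I would introduce the cross-covariance vector $g = (g^{(1)}, \ldots, g^{(d)})^{\T1}$ defined by $g^{(j)}(t) = \mathbb{E}(X^{(j)}(t) Y)$, and check that $g \in \mathcal{H}$: by the Cauchy--Schwarz inequality in $L_2(\Omega)$ one has $\vert g^{(j)}(t)\vert^2 \le \mathbb{E}\left((X^{(j)}(t))^2\right)\mathbb{E}(Y^2)$, and integrating over $\mathcal{I}_j$ gives a finite quantity since the paths are square-integrable and $Y$ has finite variance. With this notation, the previous display is precisely $\langle \langle w, g \rangle \rangle_{\mathcal{H}}$, so that \eqref{f_1} becomes $\max_{\vert\vert\vert w\vert\vert\vert = 1} \langle \langle w, g \rangle \rangle^2$.

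Finally I would apply the Cauchy--Schwarz inequality in $\mathcal{H}$, namely $\vert \langle \langle w, g \rangle \rangle\vert \le \vert\vert\vert w\vert\vert\vert \cdot \vert\vert\vert g\vert\vert\vert = \vert\vert\vert g\vert\vert\vert$ under the constraint, with equality if and only if $w$ is colinear with $g$. The maximizer on the unit sphere is therefore $w = g / \vert\vert\vert g\vert\vert\vert$, which componentwise gives exactly the stated expression, with denominator $\vert\vert\vert g\vert\vert\vert = \sqrt{\sum_{k=1}^d \int_{\mathcal{I}_k} \mathbb{E}^2(X^{(k)}(s) Y)\, ds}$. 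The optimization step is immediate once the objective is recognized as a squared inner product; the only points genuinely requiring the stated regularity are the Fubini interchange and the verification that $g$ belongs to $\mathcal{H}$, so that the reformulation and the norm in the denominator are well defined and finite. I expect these measure-theoretic justifications, rather than the optimization itself, to be the part to handle with care.
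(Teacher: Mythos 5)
Your proof is correct, and it takes a cleaner route than the paper's. The paper works componentwise: it expands $\mathrm{Cov}^2(\langle\langle X,w\rangle\rangle,Y)$ as a squared sum of integrals and then chains \emph{two} Cauchy--Schwarz inequalities --- first on each integral $\int_{\mathcal{I}_j}\mathbb{E}(X^{(j)}(t)Y)w^{(j)}(t)\,dt$, then on the resulting sum over $j$ --- after which it must analyse the equality cases of both inequalities separately (and argue that the first equality condition implies the second) before normalizing. You instead recognize the objective as $\langle\langle w,g\rangle\rangle^2$ with $g^{(j)}(t)=\mathbb{E}(X^{(j)}(t)Y)$ and apply Cauchy--Schwarz \emph{once}, directly in the product Hilbert space $\mathcal{H}$, so that the maximizer on the unit sphere is immediately $w=g/\vert\vert\vert g\vert\vert\vert$ by the colinearity equality condition. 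What your approach buys: a single equality-case analysis instead of two, plus two justifications the paper leaves implicit --- the Fubini interchange and the verification that $g\in\mathcal{H}$, which is exactly what makes the denominator finite. What the paper's componentwise computation buys is self-containedness at the level of scalar integrals, with no appeal to the abstract geometry of $\mathcal{H}$. One shared implicit caveat in both arguments: if $g=0$ the criterion is identically zero and the stated formula is undefined, so the proposition tacitly assumes $\mathbb{E}(XY)\neq 0$; and since the criterion is a \emph{squared} covariance, $-g/\vert\vert\vert g\vert\vert\vert$ is equally optimal, so the solution is unique only up to sign --- neither point is a gap specific to your write-up.
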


\noindent Let denote by $\xi$  the PLS component defined as the linear combination of variables $X$ given by the weights $w$, i.e., 

\begin{equation*}
    \xi= \langle \langle X, w \rangle \rangle = 
    \sum_{k=1}^{d}\int_{\mathcal{I}_k}X^{(k)}(t)w^{(k)}(t)dt.
\end{equation*}

The iterative PLS algorithm works as follows:

\begin{itemize}
\item Step 0: Let $X_{0}=X$ and $Y_{0}=Y$. 
\item Step $h$, $h\geq 1$:  Define $w_h$ as in Proposition \ref{prop_mul} with $X = X_{h-1}$ and $Y= Y_{h-1}$. Then, define the $h$-th PLS component as

\begin{equation*}
    \xi_h = \langle \langle X_{h-1}, w_h \rangle \rangle \text{, }
\end{equation*}

Compute the residuals $X_h$ and $Y_h$ of  the linear regression of $X_{h-1}$ and $Y_{h-1}$ on $\xi_h$,

$$\begin{array}{l}
X_h = X_{h-1} - \rho_{h}\xi_{h}, \\
Y_h = Y_{h-1} - c_{h}\xi_{h}, \\
\end{array}$$
where $\rho_h = \displaystyle\frac{\mathbb{E}(X_{h-1}\xi_{h})}{\mathbb{E}(\xi_h^{2})} \in \mathcal{H}$ and $c_h = \displaystyle\frac{\mathbb{E}(Y_{h-1}\xi_{h})}{\mathbb{E}(\xi_h^{2})} \in \mathbb{R}$.

\item Go to the next step ($h = h+1)$.
\end{itemize}
{\color{black} Moreover, the following proprieties, stated in the univariate setting (Proposition 3 in \cite{preda2002regression}), are still valid in the multivariate case.} Let $L(X)$ denotes the linear space spanned by $X$.
\begin{proposition}{For any $h \geq 1$}, 
$\{\xi_k \}_{k=1}^h$ forms an orthogonal system of $L(X)$ and the following expansion formula hold:
\begin{itemize}
    \item[] \qquad $Y= c_1 \xi_{ 1}+ c_{ 2} \xi_{ 2} +...+ c_{ h} \xi_{ h} + Y_{h},$
    \bigskip
    \item[] \qquad $X^{(j)}(t)= \rho_{1}^{(j)}(t) \xi_{ 1} + \rho_{2}^{(j)}(t) \xi_{ 2} +...+ \rho_{h}^{(j)}(t) \xi_{h} + X_{h}^{(j)}(t),$ \\ $\forall t\in \mathcal{I}_j,\; j=1, \ldots, d. $ 
\end{itemize}
\label{prop_m}
\qed
\end{proposition}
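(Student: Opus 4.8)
The plan is to treat the two assertions separately: the expansion formulas are essentially bookkeeping on the deflation recursions, whereas the orthogonality of the components is the substantive part and, because of an apparent circularity, calls for a deliberately strengthened induction hypothesis.

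First I would dispatch the expansion formulas by unrolling the deflation steps. Writing $Y_{k-1} = c_k \xi_k + Y_k$ and $X_{k-1}^{(j)}(t) = \rho_k^{(j)}(t)\xi_k + X_k^{(j)}(t)$ at each stage and telescoping from $k=h$ down to $k=1$ (using $X_0 = X$, $Y_0 = Y$) yields the two displayed identities immediately. The same unrolling shows that every component lives in $L(X)$: since $\xi_1 = \langle \langle X, w_1 \rangle \rangle \in L(X)$, an easy induction gives that if $X_{k-1}^{(j)}(t) \in L(X)$ for all $j,t$, then $\xi_k = \langle \langle X_{k-1}, w_k \rangle \rangle \in L(X)$ and hence $X_k^{(j)}(t) = X_{k-1}^{(j)}(t) - \rho_k^{(j)}(t)\xi_k \in L(X)$. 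As the inner product on $L(X)\subset L_2(\Omega)$ is the covariance, and all variables are zero-mean, ``orthogonal system of $L(X)$'' reduces to proving $\mathbb{E}(\xi_k \xi_l) = 0$ for $k \neq l$.

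The elementary building block is that each residual process $X_h$ is uncorrelated with the very component $\xi_h$ it was deflated against: from $\rho_h^{(j)}(t) = \mathbb{E}(X_{h-1}^{(j)}(t)\xi_h)/\mathbb{E}(\xi_h^2)$ one computes $\mathbb{E}(X_h^{(j)}(t)\xi_h) = \mathbb{E}(X_{h-1}^{(j)}(t)\xi_h) - \rho_h^{(j)}(t)\mathbb{E}(\xi_h^2) = 0$ for every $j,t$.

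The main obstacle is an apparent circularity. To show $\mathbb{E}(\xi_k \xi_l) = 0$ for $k < l$, I would use $\xi_l = \langle \langle X_{l-1}, w_l \rangle \rangle$ and Fubini (legitimate under the $L_2$ assumptions) to write $\mathbb{E}(\xi_k \xi_l) = \sum_{j}\int_{\mathcal{I}_j}\mathbb{E}(\xi_k X_{l-1}^{(j)}(t))\,w_l^{(j)}(t)\,dt$, which reduces the claim to $\mathbb{E}(\xi_k X_{l-1}^{(j)}(t)) = 0$; but deflating $X$ down to $X_{l-1}$ seems to reintroduce the cross-covariances one is trying to kill. I would resolve this by proving, by induction on $h$, the \emph{coupled} statement $P(h)$: (a) $\mathbb{E}(\xi_k \xi_l) = 0$ for all $1 \le k < l \le h$, and (b) $\mathbb{E}(X_h^{(j)}(t)\xi_k) = 0$ for all $k \le h$ and all $j,t$. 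The base case $h=1$ is exactly the building block. For the step $P(h) \Rightarrow P(h+1)$, the crucial point is the \emph{order} of the two sub-arguments. I would first establish (a) at level $h+1$: for $k \le h$, the displayed reduction gives $\mathbb{E}(\xi_k \xi_{h+1}) = \sum_j \int_{\mathcal{I}_j}\mathbb{E}(\xi_k X_h^{(j)}(t))\,w_{h+1}^{(j)}(t)\,dt = 0$ by part (b) of $P(h)$. Only then would I prove (b) at level $h+1$, via $X_{h+1}^{(j)}(t) = X_h^{(j)}(t) - \rho_{h+1}^{(j)}(t)\xi_{h+1}$: the case $k=h+1$ follows from the definition of $\rho_{h+1}$ as in the building block, while for $k \le h$ one has $\mathbb{E}(X_{h+1}^{(j)}(t)\xi_k) = \mathbb{E}(X_h^{(j)}(t)\xi_k) - \rho_{h+1}^{(j)}(t)\mathbb{E}(\xi_{h+1}\xi_k)$, whose first term vanishes by (b) of $P(h)$ and whose second term vanishes by (a) of $P(h+1)$, just obtained. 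Carrying (b) along inside the induction is precisely what breaks the circularity, and the argument is identical to the univariate case of \cite{preda2002regression} apart from the $\sum_j \int_{\mathcal{I}_j}$ bookkeeping forced by the multivariate, multi-domain inner product.
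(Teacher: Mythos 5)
Your proof is correct, and its skeleton is the same as the paper's: both arguments rest on the building block $\mathbb{E}\bigl(X_h^{(j)}(t)\,\xi_h\bigr)=0$ (which the paper asserts ``analogously'' for higher orders and you actually derive from the definition of $\rho_h$), and both reduce $\mathbb{E}(\xi_k\xi_{h+1})$ to the cross-covariances $\mathbb{E}\bigl(\xi_k X_h^{(j)}(t)\bigr)$ by expanding $\xi_{h+1}=\langle\langle X_h,w_{h+1}\rangle\rangle$ as $\sum_{j}\int_{\mathcal{I}_j}$ against $w_{h+1}^{(j)}$. The genuine difference is how those cross-covariances are killed for $k<h$. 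The paper's induction hypothesis is only the pairwise orthogonality of $\{\xi_1,\ldots,\xi_h\}$; it then treats $k=h$ directly, treats $k=h-1$ by unwrapping a single deflation step $X_h=X_{h-1}-\rho_h\xi_h$, and dispatches $k\le h-2$ with ``the same procedure can be used'' --- which in fact requires unwrapping $X_h$ all the way down to $X_k$, an inner induction the paper never writes out. You instead strengthen the induction hypothesis to the coupled statement $P(h)$: (a) orthogonality up to level $h$ \emph{and} (b) $\mathbb{E}\bigl(X_h^{(j)}(t)\,\xi_k\bigr)=0$ for all $k\le h$, proving (a) at level $h+1$ from (b) of $P(h)$, and then (b) at level $h+1$ from the building block together with the freshly obtained (a). What this buys is uniformity and rigor: every $k\le h$ is handled in one line, no iterated unwrapping is needed, and the residual--component orthogonality that the paper merely asserts becomes part of what is proved. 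Your additional observations --- that the expansion formulas follow by telescoping the deflation recursions, and that the $\xi_k$ lie in $L(X)$ by an easy induction --- cover the parts of the statement that the paper treats as immediate.
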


The right-hand side term of the expansion of $Y$ provides the PLS approximation of order $h$ of (\ref{eq_lm}), 
\begin{equation} \langle\langle X, \beta\rangle\rangle\approx c_1 \xi_{ 1}+ c_{ 2} \xi_{ 2} +...+ c_{ h} \xi_{ h}
\label{pls_approx}
\end{equation} and the residual part, 
$$\varepsilon \approx Y_{h}.$$
Nevertheless, the above properties don't furnish the direct relationhip between  $Y$ and $X$ as in \eqref{eq_lm}. In order to do that, let's write the components $\xi_h$ as a linear function of $X$, i.e., as a dot product in $\mathcal{H}$. 
\begin{lem}{\em Let $\{v_{ k}\}_{k=1}^h$, $v_k$ $\in \text{span}\{w_1,…,w_h\}$,  be defined by :
\begin{equation}
    v_{ h}^{(j)}(t) = w_{ h}^{(j)}(t) - \sum_{k=1}^{h-1} \langle \langle \rho_{ k} , w_ { h} \rangle \rangle v_{k}^{(j)} (t),\;  t\in \mathcal{I}_j \ j= 1, \ldots, d.
    \label{lemm_eq} 
\end{equation}
Then, $\{v_{ k}\}_{k=1}^h$ forms a linearly independent system in $\mathcal{H}$ and

\begin{equation*}
    \xi_{h} = \langle \langle v_{ h},  X \rangle \rangle=\sum_{j=1}^{d}\int_{\mathcal{I}_j}X^{(j)}(t)v_h^{(j)}(t)dt.  
\end{equation*}
\label{lemm_multi}\qed
}
\end{lem}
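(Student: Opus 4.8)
The plan is to prove the two assertions of the lemma — the representation $\xi_h = \langle\langle v_h, X\rangle\rangle$ and the linear independence of $\{v_k\}_{k=1}^h$ — simultaneously by induction on $h$, leaning on the expansion formula for $X$ in Proposition \ref{prop_m} and on the orthogonality of the components $\{\xi_k\}$ established there.

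For the representation I would start from the deflation identity. Proposition \ref{prop_m}, applied at order $h-1$, gives $X^{(j)}(t) = \sum_{k=1}^{h-1}\rho_k^{(j)}(t)\xi_k + X_{h-1}^{(j)}(t)$, so that $X_{h-1}^{(j)}(t) = X^{(j)}(t) - \sum_{k=1}^{h-1}\rho_k^{(j)}(t)\xi_k$. Substituting this into $\xi_h = \langle\langle X_{h-1}, w_h\rangle\rangle$ and using bilinearity of $\langle\langle\cdot,\cdot\rangle\rangle$ (each scalar $\xi_k$ factoring out of the integral) yields
$$\xi_h = \langle\langle X, w_h\rangle\rangle - \sum_{k=1}^{h-1}\langle\langle \rho_k, w_h\rangle\rangle\,\xi_k.$$
The induction hypothesis $\xi_k = \langle\langle v_k, X\rangle\rangle$ for $k<h$ then lets me rewrite the sum as $\langle\langle \sum_{k=1}^{h-1}\langle\langle\rho_k,w_h\rangle\rangle v_k, X\rangle\rangle$, and collecting terms gives exactly $\xi_h = \langle\langle w_h - \sum_{k=1}^{h-1}\langle\langle\rho_k,w_h\rangle\rangle v_k, X\rangle\rangle = \langle\langle v_h, X\rangle\rangle$, matching the definition \eqref{lemm_eq}. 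The base case $h=1$ is immediate, since the corrective sum is empty, $v_1 = w_1$, and $\xi_1 = \langle\langle X_0, w_1\rangle\rangle = \langle\langle X, w_1\rangle\rangle$.

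For linear independence, with the representation in hand I would argue directly: suppose $\sum_{k=1}^h \alpha_k v_k = 0$ in $\mathcal{H}$. Applying $\langle\langle\cdot, X\rangle\rangle$ and invoking the representation gives $\sum_{k=1}^h \alpha_k \xi_k = 0$ as a random variable. Since $\{\xi_k\}_{k=1}^h$ is an orthogonal system by Proposition \ref{prop_m}, taking the covariance with $\xi_m$ forces $\alpha_m\,\mathbb{E}(\xi_m^2) = 0$, hence $\alpha_m = 0$ for every $m$, each nonzero component satisfying $\mathbb{E}(\xi_m^2) > 0$. That $v_k \in \text{span}\{w_1,\ldots,w_k\}$ follows at once from the recursion, as $v_k$ is $w_k$ minus a combination of $v_1,\ldots,v_{k-1}$, each already in that span by induction.

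The only delicate point is the bookkeeping in the inductive substitution: one must apply Proposition \ref{prop_m} at the correct order $h-1$ and keep in mind that the $\xi_k$ are scalars that factor out of the $\mathcal{H}$-inner product, so the induction hypothesis can legitimately be inserted inside the sum. Everything else is bilinearity together with the orthogonality already granted by Proposition \ref{prop_m}; notably, there is no analytic obstacle such as invertibility of the covariance operator, since the implication runs from an identity in $\mathcal{H}$ to an identity among the $\xi_k$ — the easy direction, which never requires inverting the map $v \mapsto \langle\langle v, X\rangle\rangle$.
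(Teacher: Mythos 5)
Your proof of the representation $\xi_h = \langle\langle v_h, X\rangle\rangle$ is essentially the paper's own argument: the same induction, using the expansion of $X$ from Proposition \ref{prop_m} to write $X_{h-1} = X - \sum_{k=1}^{h-1}\rho_k\,\xi_k$, substituting into $\xi_h = \langle\langle X_{h-1}, w_h\rangle\rangle$, factoring the scalars $\xi_k$ out of the inner product, and collecting terms against the recursion defining $v_h$ (the paper indexes the step as $h \to h+1$ rather than $h-1 \to h$, which is immaterial). Where you go beyond the paper is the linear independence claim: the paper's proof stops after the representation and never addresses it, whereas you give a clean argument — apply $\langle\langle\cdot, X\rangle\rangle$ to a vanishing combination $\sum_k \alpha_k v_k = 0$, invoke the orthogonality of $\{\xi_k\}$ from Proposition \ref{prop_m}, and conclude $\alpha_m\,\mathbb{E}(\xi_m^2) = 0$ for each $m$. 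This is correct provided $\mathbb{E}(\xi_m^2) > 0$ for every retained component, which you rightly flag; that non-degeneracy is implicit in the PLS algorithm itself (the deflation coefficients $\rho_h$ and $c_h$ divide by $\mathbb{E}(\xi_h^2)$, so the iteration is only defined while components are non-degenerate). In short: same route as the paper for the part the paper proves, plus a correct completion of the part it leaves unproved.
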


\noindent Thus, as for the principal component analysis $X$  \citep{happ2018multivariate}, the PLS regression computes the components as the dot product of $X$ and the functions $\{v_k\}_{k=1}^h$. These components are suitable  for regression, as they capture the maximum amount of information between $X$ and $Y$ according to Tucker's criterion. Obviously, by Lemma \ref{lemm_multi} and (\ref{pls_approx}) we can write:  
\begin{equation*}
    c_1 \xi_{ 1}+ c_{ 2} \xi_{ 2} +...+ c_{ h} \xi_{ h} = \langle \langle X, \beta_{h} \rangle \rangle
\end{equation*}
with 
\begin{equation}
\beta_{h}= \sum_{i=1}^{h} c_i v_i.
\label{betah}
\end{equation}
Thus, the PLS regression model obtained with $h$ components is given by 

$$ Y= \langle \langle X, \beta_{h} \rangle \rangle  + \varepsilon_h, $$
with $\varepsilon_h = Y_h$. 

\par
\begin{remark} For $h\geq 1$, let $\mathcal{V}_h=\begin{pmatrix}
v_1 & \ldots & v_h     
\end{pmatrix}$ and $\mathcal{W}_h=\begin{pmatrix}
w_1 & \ldots & w_h     
\end{pmatrix}$ be two row-vectors of $\mathcal{H}^h$.
From Lemma \ref{lemm_multi}, the relationship between $\mathcal{V}_h$  and $\mathcal{W}_h$ given in (\ref{lemm_eq}) can be written in matrix form as
\begin{equation}
    \mathcal{V}_h = \mathcal{W}_h- \mathrm{P}_{h} \mathcal{V}_h, 
    \label{eq_rmrk}
\end{equation}
where $\mathrm{P}_{h}$ is the ${h\times h}$ matrix, 
\begin{equation*}
  \mathrm{P}_{h} = \left(
    \begin{array}{ccccc c }
    0 & \langle \langle \rho_1, w_2 \rangle \rangle& \langle \langle \rho_1 ,w_3 \rangle \rangle  &\ldots  & \langle \langle \rho_1, w_{h-1} \rangle \rangle&\langle \langle \rho_1, w_h \rangle \rangle \\  
   0  &  0 & \langle \langle \rho_2 ,w_3 \rangle \rangle  & 
 \ldots&\langle \langle \rho_2, w_{h-1} \rangle \rangle &\langle \langle \rho_2, w_h \rangle \rangle \\ 
         \vdots &    &    & \ldots   &  &  \vdots\\ 
          0& 0 & 0&  \ldots  & 0 &\langle \langle \rho_{h-1}, w_h \rangle \rangle  \\ 
          0&  0       & 0      & \ldots & 0 & 0 \\ 
  \end{array}\right)
\end{equation*}

Let $\mathbb{I}_{h \times h}$ be the identity matrix of size $h \times h$. Since $\mathbb{I}_{h \times h} + \mathrm{P}_{h}$ is non-singular, 
then equation \eqref{eq_rmrk} yields to   
\begin{equation}
\mathcal{V}_h=(\mathbb{I}_{h \times h } + \mathrm{P}_{h} )^{-1}  \mathcal{W}_h.   
\label{vfuncw}
\end{equation}
As the functions $\{w_{k}\}_{k=1}^h$ are computed directly from Proposition \ref{prop_mul}, the equation (\ref{vfuncw}) provides a straightforward way to obtain the weights $\{v_k\}_{k=1}^h$ and, therefore by (\ref{betah}), we obtain the coefficient function approximation $\beta_h$.
\end{remark}

\begin{remark}
It is worth noting that, contrary to eigenfunctions of the PCA of $X$ \cite{happ2018multivariate}, $\{ v_k \}_{k=1}^h$ is not an orthogonal system by the inner product $\langle \langle,\rangle  \rangle $. Nonetheless, it  provides orthogonal PLS components, i.e. $\mathbb{E}(\xi_{k} \xi_{l} ) = \mathbb{E}(\xi_{k}^2) \delta_{k,l}$, where $\delta_{k,l} $ is the  Kronecker symbol. 

\end{remark}

The next part focuses on the relationship between the partial least squares regression of  univariate functional data (FPLS) and the proposed multivariate version (MFPLS) given by Proposition \ref{prop_mul}. More precisely, we show that the MFPLS regression can be solved by iterating FPLS within a two-stage approach.  
\subsubsection{Relationship between MFPLS and FPLS}

For each $j$ in $1, \ldots d$, let $\tilde{w}_1^{(j)}$ be the weight function corresponding to the first PLS component obtained by FPLS regression of $Y$ on dimension $X^{(j)}$ (Proposition 2 in \cite{preda2002regression}), 

\begin{equation*}
    \tilde{w}_1^{(j)}(t)= \frac{\mathbb{E}(X^{(j)}(t)Y) }{ \sqrt{\int_{\mathcal{I}_j} \mathbb{E}^2(X^{(j)}(s)Y)ds} }, \; t \in \mathcal{I}_j.
\end{equation*}
Obviously,  the functions $\tilde{w}^{(j)}_1$ and $w^{(j)}_1$ (as defined in Proposition \ref{prop_mul}) are related by the following relationship

\begin{equation}
    w^{(j)}_1=u_{1, j} \tilde{w}^{(j)}_1, \
 \end{equation}   
with       
$u_{1, j} = \displaystyle \frac{  || \mathbb{E}(X^{(j)}Y) ||_{L_2(\mathcal{I}_j)}  }{|||\mathbb{E}(XY)|||} \in \mathbb{R}$
and  $|| \cdot||_{L_2(\mathcal{I}_j) }$ stands for the norm induced by the usual inner product in $L_2(\mathcal{I}_j) $.\\

\noindent Let note that the vector    
$ u_1=\begin{pmatrix} u_{1,1}, ..., u_{1,d}\end{pmatrix}^{\T1}
    $ is such that $ \vert \vert u\vert \vert _{\mathbb{R}^{d} }=1$. Hence, we can establish a relationship between the PLS components of MFPLS and FPLS in the following way. 
    {\it For each $j$ in $1, \ldots, d$,  let denote by $\xi^{(j)}_1$ the first PLS component obtained by the FPLS regression of $Y$ on the $j$-th dimension of $X$. 
    Then, the first PLS component of the MFPLS of $Y$ on $X$, $\xi_1$, is obtained as the first PLS component of the PLS regression of $Y$ on $\{\xi_{1}^{(\color{black}1)}, \ldots, \xi_{1}^{(d)}\}$. 
}
    \noindent Based on the iterative PLS process, that relationship can be applied to the computation of higher-order MFPLS components.  
\par  That relationship allows us to define a new methodology for the computation of MFPLS regression when the functional predictor $X$ is approximated into a basis of functions. 

\subsubsection{Using the basis expansion for the MFPLS algorithm}
\label{mfPLS}

Under the hypothesis \eqref{b_1}, for any $j = 1, \ldots d$, let denote with $\Lambda^{(j)}$ the vector 
$$\Lambda^{(j)}= \left(\mathrm{F}^{(j)}\right)^{1/2}a^{(j)} $$
where $\left(\mathrm{F}^{(j)}\right)^{1/2}$ is the squared root of the matrix
$\mathrm{F}^{(j)}$ and $a^{(j)}$ is the vector of (random) coefficients of the expansion of $X^{(j)}$ in the basis of functions $\{\psi^{(j)}_1, \ldots, \psi^{(j)}_{M_j}\}$. 

Then, Proposition 2 in \cite{pls_exp} makes the MFPLS procedure equivalent to the following algorithm.  

\paragraph{\bf MFPLS algorithm}

\begin{itemize}
\item Step 0: Let $\Lambda_{0}^{(j)}=\Lambda^{(j)}$ for all $j =1, \ldots, d$  and $Y_{0}=Y$. 
\item Step $h$, $h\geq 1$:  
\begin{enumerate}
    \item For each $j =1, \ldots, d$,
    \begin{itemize}
    \item[-]define $\xi^{(j)}_h$ as the {\bf first} PLS component obtained by the ordinary PLS regression of $Y_{h-1}$ on $\Lambda_{h-1}^{(j)}$, 
\begin{equation}
    \xi^{(j)}_{h} = \sum_{k=1}^{M_j}\Lambda^{(j)}_{h-1, k} \theta_{h,k}^{(j)},   
\end{equation}
where $\theta_h^{(j)} \in \mathbb{R}^{M_j}$ is the associated weight vector. 
\end{itemize}

\item Define the $h$-th MFPLS component $\xi_h$ by  as the {\bf first} PLS component of the regression of $Y_h$ on  $\{\xi^{(1)}_h, \ldots, \xi^{(d)}_h\}$, 
 \begin{equation}
     \xi_h=\sum_{k=1}^{d}\xi^{(k)}_{h}u_{h,k},  
 \end{equation}
where $u_{h}\in \mathbb{R}^{d}$ is the associated weight vector. 

\item 
\begin{itemize}
\item For each $j=1, \ldots, d$, compute the residuals $\Lambda^{(j)}_h$ of the linear regression of $\Lambda^{(j)}_{h-1}$ and $Y_{h-1}$ on $\xi_h$,
$$
\Lambda^{(j)}_h = \Lambda^{(j)}_{h-1} - r^{(j)}_h{\xi_{h}}, 
$$
where $
    r^{(j)}_h=\displaystyle\frac{\mathbb{E}(\xi_h\Lambda^{(j)}_{h-1})}{\mathbb{E}(\xi_h^2) } \in \mathbb{R}^{M_j} 
$. 
\item compute the residual $Y_h$ of the linear regression of $Y_{h-1}$ on  $\xi_h$, 

$$Y_h = Y_{h-1} - c_{h}\xi_{h}, \\
$$
where 
$c_h = \displaystyle\frac{\mathbb{E}(Y_{h-1}\xi_{h})}{\mathbb{E}(\xi_h^{2})} \in \mathbb{R}$.
\end{itemize}
\end{enumerate}
\item Go to the next step ($h = h+1)$.
\end{itemize}
\begin{remark} 
\ 
\begin{itemize}
\small 
\item[1.] The number of PLS components ($h$) retained in the approximation of the regression model (\ref{pls_approx}) is usually chosen by cross-validation optimizing some criteria {\color{black}as the Mean Squared Error (MSE) or, for binary classification, the Area under the ROC Curve (AUC)}.
\item[2.] The approach of \cite{MFPLS2022} is an extension of the  basis expansion result from \cite{pls_exp}. It was proposed for one domain definition. Note that our approach is more flexible since it allows different intervals. The case of one domain is then a special case of the proposed methodology (see Section \ref{reg_one} for numerical comparison).
\item[3.] Let introduce in the step $h$ - point 3 of the algorithm the computation of functions $w_h^{(j)}$ and $\rho_h^{(j}$, $j = 1,\ldots, d$, as
\begin{align*}
     w_h^{(j)}&= u_{h, j} \mathrm{H}^{(j)}\theta_{h}^{(j)} \psi^{(j)},  \\
     \rho_h^{(j)}&=\mathrm{H}^{(j)}r_h^{(j)}\psi^{(j)} 
\end{align*}
where $ \mathrm{H}^{(j)}= \left(\mathrm{F}^{(j)}\right)^{-1/2}$. \\ Then, by Lemma \ref{lemm_multi}, the functions $\{v_k\}_{k=1}^{h}$ can also be computed through the MFPLS algorithm,  allowing to compute the regression coefficient function in (\ref{betah}).
\item[4.]\textbf{Computational details:} Let consider $(\mathcal{X}_1, \mathcal{Y}_1), \ldots, \mathcal{(X}_n,\mathcal{Y}_n)$ be an  i.i.d. sample of size $n \geq 1$ of $(X,Y)$. Then, for each $j = 1, \ldots, d$, the vector $a^{(j)}$ is represented by the sample $n\times M_j$ matrix of coefficients $\mathbf{A}^{(j)}$, 
 \begin{equation*}
    \mathbf{A}^{(j)} = \begin{pmatrix} a^{(j)}_{1,1} & \ldots & a^{(j)}_{1,l} & \ldots &a^{(j)}_{1,M_j}\\
    \vdots & & \vdots & & \vdots \\
     a^{(j)}_{k,1} & \ldots & a^{(j)}_{k,l} & \ldots &a^{(j)}_{k,M_j}\\
      \vdots &  &  \vdots & & \vdots \\
       a^{(j)}_{n,1} & \ldots & a^{(j)}_{n,l} & \ldots &a^{(j)}_{n,M_j}.
    \end{pmatrix} 
\end{equation*}
and 
$$ \mathbf{\Lambda}^{(j)}=  \mathbf{A}^{(j)}(\mathrm{F}^{(j)})^{1/2}. 
$$
 Let define $\mathbf{Y} =(\mathcal{Y}_1, \ldots, 
 \mathcal{Y}_n)^{\T1}$. Then, the matrix version of the  MFPLS algorithm (step $h$) can be rewritten as :
\begin{enumerate}
    \item[1] For each $j=1, \ldots, d$, define $\boldsymbol{\xi}_h^{(j)}\in \mathbb{R}^n$ as the first PLS component obtained by the ordinary PLS of $\mathbf{Y}_h$ on $\mathbf{\Lambda}_{h-1}^{(j)}$
    \item[2] The $h$-th MFPLS $\boldsymbol{\xi}_h$ is the first component obtained by the ordinary PLS of $\mathbf{Y}_h$  on  $\left(\boldsymbol{\xi}^{(1)}_h, \ldots, \boldsymbol{\xi}^{(d)}_h\right)$. \\ 
    \item[3] For $j=1, \ldots, d$, the residuals $\boldsymbol{\Lambda}_h^{(j)}$ are computed by   
    \begin{align*}
        \mathbf{\Lambda}_h^{(j)}=   \mathbf{\Lambda}_{h-1}^{(j)}- \boldsymbol{\xi}_h \mathbf{r}^{(j)}_h
    \end{align*}
    with $\mathbf{r}_h^{(j)}=\displaystyle\frac{1}{\boldsymbol{\xi}_h^{\T1} \boldsymbol{\xi}_h}\boldsymbol{\xi}_h^{\T1}\mathbf{\Lambda}^{(j)}_{h-1} $ is the projection coefficient. \\ 
   And the residual $\mathbf{Y}_{h}$ is  
    \begin{align*}
        \mathbf{Y}_h= \mathbf{Y}_{h-1}- \boldsymbol{\xi}_h\mathbf{c}_h
    \end{align*}
    with $\mathbf{c}_h=\displaystyle\frac{1}{\boldsymbol{\xi}_h^{\T1} \boldsymbol{\xi}_h}\boldsymbol{\xi}_h^{\T1}     \mathbf{Y}_{h-1} $. 
\end{enumerate}
\end{itemize}

\end{remark}

{\color{black}
Although the proposed methodology is for regression problems with scalar response, it can be used for binary classification by using the relationship between
linear discriminant analysis and linear regression  (\cite{pls_exp}, \cite{preda2007PLS}). The next section addresses a classification application based on PLS regression.}

\subsubsection{From PLS regression to PLS binary-classification}
\label{reg_class}
Using the previous notations, let $X$ be the predictor variable (not necessarily zero-mean) and $Y$ be the response.  The binary classification setting assumes that $Y$ is a Bernoulli variable,  $Y \in \{0, 1\}$, $Y\sim \mathcal{B}(\pi_1)$  with $\pi_1= \mathbb{P}(Y=1)$. The PLS regression can be extended to binary classification after a convenient encoding of the response.\\ Let define the variable $Y^*$ as 

\begin{equation}
Y^* = \left\{
    \begin{array}{ll}
     \sqrt{\frac{\pi_1}{\pi_0}}     &\text{, if }  Y=0\\
        - \sqrt{\frac{\pi_0}{\pi_1}} &\text{, if } Y=1, 
    \end{array}
\right.
\label{reco}
\end{equation}
with $\pi_0 =1-\pi_1$.\\ 
 Then, the coefficient function $\beta$ of the regression of $Y^{*}$ on $X$ corresponds (up to a constant) to that defining the 
 the Fisher discriminant score denoted by $\Gamma(X)$: 
 
\begin{equation*}
    \Gamma(X)= \alpha + \langle \langle X,\beta\rangle \rangle,  
\end{equation*}
with $\alpha$= -$\langle \langle \mu , \beta \rangle \rangle $, and $\mu = \mathbb{E}(X) \in \mathcal{H}$.\\  
Finally, the predicted class  $\hat{Y}_0$ of a new curve  $X_{0}$ is given by 
\begin{equation*}
    \hat{Y}_0 = \left\{
    \begin{array}{ll}
         0&  \text{if }  \Gamma(X_{0}) > 0\\
        1 & \text{otherwise. }
    \end{array}
\right. 
\end{equation*}
See for more details \cite{preda2007PLS}. \\  
In this paper we estimate the coefficient function $\beta$ by the MFPLS approach.
\subsection{MFPLS tree-based methods}
\label{sec_T}
{\color{black}
Alternatives to linear models such {\color{black} Support Vector Machine or SVM} (see e.g \cite{svm} \cite{blanquero2019_2}), clusterwise regression (see e.g \cite{PLS_c}, \cite{yao2011}, \cite{LI2021}) could be 
extended to multivariate functional data. In this section, we present a tree-based methodology (TMFPLS) combined with MFPLS models. The variable selection feature of tree methods is particularly adapted in the framework of multivariate functional data and allows predicting the response throughout more complex but still interpretable relationships.  It represents in some way a generalization of the finite-dimensional setting presented in \cite{poterie2019classification} to the case of multivariate functional data. The procedure consists in split a node of the tree by successively selecting an optimal discriminant score (according to some impurity measure) among discriminant scores obtained from MFPLS regression models with different subsets of predictors. In the presented methodology, we limit our attention  to the case of binary classification ($Y \in \{ 0, 1\}$ ).
\subsubsection{The algorithm}
 Let consider $(\mathcal{X}_1, \mathcal{Y}_1), \ldots, \mathcal{(X}_n,\mathcal{Y}_n)$ be an  i.i.d. sample of size $n \geq 1$ of $(X,Y)$, where $Y$ is a binary response  variable and $X = (X^{(1)}, \ldots, X^{(d)})$  a multivariate functional one. Moreover, we assume that there exists a well-defined group structure (potentially overlapping) of the dimensions of $X$, i.e. there exists $K$ subgroups, $K\geq 1$,  $\mathcal{G}_1, \ldots, \mathcal{G}_K$ of variables $\{X^{(1)}, \ldots, X^{(d)}\}$.  Notice that groups are not necessarily disjoint. These groups of variables define the candidates to be used to split the node of the tree (the score will be calculated with the only variables in the candidate group). 
 
    
Inspired by  \cite{poterie2019classification}'s methodology, our algorithm is composed of two main steps. In a nutshell, with the help of MFPLS methodology, the first step provides the results of the splitting according to candidates groups $\mathcal{G}_1, \ldots, \mathcal{G}_K$ whereas the second one selects the best splitting candidate using an impurity criterion. These two steps are applied to all current nodes (start with the root node containing all the sample -- $n$ observations) until the minimum purity threshold is reached. \\ \\  
Consider the current node of the tree to be split:
\begin{itemize}
    \item \textbf{Step 1: The MFPLS candidate scores}. \\ For each candidate group of variable $\mathcal{G}_{i}$, $i  = 1, \ldots, K$,  perform MFPLS of $Y$ on $\mathcal{G}_{i}$ and denote with $\Gamma^{i}$ the estimated MFPLS score (prediction) obtained with the  group $\mathcal{G}_{i}$. Then, the result of the split with $\Gamma^{i}$ is represented by two new sub-nodes obtained according to the predictions of the observations $(\mathcal{X}_j, \mathcal{Y}_j)$ in the current node: $\{\Gamma^{i} (\mathcal{X}) > 0  \}$ and  $\{\Gamma^{i} (\mathcal{X}) \leq 0 \}$.

\item \textbf{Step 2: Optimal splitting}.\\ Select the optimal splitting  according to group $\mathcal{G}^{*}$ which 
maximizes the decrease of \textit{impurity} function $\Delta \mathcal{Q}$ (see \cite{poterie2019classification} for more details),  
\begin{equation*}
    \mathcal{G}^*=\argmax_{\mathcal{G}\in \{\mathcal{G}_1, \ldots, \mathcal{G}_K \} }\Delta\mathcal{Q}^{\mathcal{G}}.
\end{equation*}

Therefore, the optimal splitting for the current node is the one obtained with the MFPLS score $\Gamma$ corresponding to $\mathcal{G}^*$.  
\end{itemize}
A node is terminal if its impurity index is lower than a defined purity threshold.
In order to avoid overfitting, a pruning method can be employed. Here, we use the same technique as in \cite{poterie2019classification}, i.e.  the optimal depth of the decision tree ($m^*$) is estimated using a validation set.

\section{Simulation study}
This section deals with finite sample properties on simulated data to evaluate the performances of MFPLS and TMFPLS approaches  with competitor methods based on MFPCA. Two different cases are presented. In the first one, all the components $X^{(j)}$ of $X$ are defined on the same one domain $\mathcal{I}=[0,T]$, $T > 0$. In the second case, $X$ is a bivariate functional vector $X = (X^{(1)}, X^{(2)} )^{\T1}$  with  $X^{(1)}=(X^{(1)}(t))_{t\in \mathcal{I}_1}$  and a two-domain variable $X^{(2)}=(X^{(2)}(t))_{t\in \mathcal{I}_2\times  \mathcal{I}_2}$, where $\mathcal{I}_1, \mathcal{I}_2 \subset \mathbb{R}$. Thus, in this second one, a sample from the functional variable $X$ corresponds to a set of curves  and 2-D images of domains $\mathcal{I}_1$ and $\mathcal{I}_2\times \mathcal{I}_2$ respectively. \par 
{\color{black}All computation results reported in this section were obtained using a computer that has a
Windows $10$ operating system, $11$th Gen Intel(R) Core(TM) i$7$-$1165$G$7$ $2.80$GHz  and $16.00$Go Go of RAM memory. }
\label{sec_sims}

\subsection{One domain case} 

\subsubsection{ {Setting 1:} scalar response}
\label{reg_one}
{ \color{black}  Since the main concurrent MFPLS was proposed for regression \citep{MFPLS2022}, we conduct in this section the regression simulation framework described in \cite{MFPLS2022} and compare their method with MFPLS.} \par  Consider the domain  $\mathcal{I}= [0,1]$ and the 3-dimensional functional predictor $X= (X^{(1)}, X^{(2)}, X^{(3)} )^{\T1} $:
\begin{equation*}
    X^{(j)}(t)=\sum_{k=1}^5 \gamma_k \upsilon_k(t),\;  \text{  } t\in \mathcal{I},\,  j= 1, 2, 3,    
\end{equation*}
with $\gamma_k \sim \mathcal{N}(0, 4k^{-3/2} ) $ and $\upsilon_k(t) = \sin{k\pi t}- \cos{k \pi t}$, $k=1, \ldots, 5$. \\ 
The functional  coefficient  $\beta$  is defined by  

\begin{equation*}
    \beta(t) =\begin{pmatrix}
    \sin(2\pi t), 
    \sin(3\pi t), 
    \cos(2\pi t)
    \end{pmatrix}^{\T1}. 
\end{equation*}
Then, the regression model generating the data is given by 
\begin{equation*}
    Y= \langle \langle X, \beta \rangle \rangle + \epsilon,
\end{equation*}
 where $\epsilon \sim \mathcal{N}(0, \sigma^2)$.\\ 
We define the noise variance as 
$$\sigma^2= \frac{\mathbb{E}(\langle \langle X, \beta \rangle\rangle^2) }{\text{SNR} },$$
where SNR is the signal-to-noise ratio. We consider  5 values of SNR: SNR $\in \{0.5, 1.62, 2.75, 3.88, 5\} $.\\   
The approach proposed in \cite{MFPLS2022} is a generalization of  the result in \cite{pls_exp}  to the multivariate case (MFPLS\_D). It exploits an equivalence between the PLS of multivariate functional covariates and ordinary PLS of the projection scores of covariates in basis functions.  Our method has a different procedure, as we compute multivariate  PLS components using the univariate PLS components (see Section \ref{mfPLS}).\\
\par As in \cite{MFPLS2022} we use $200$ equidistant discrete times points on $\mathcal{I}$ where raw data of $X$ are observed, and 400 independent copies of $X$ are simulated. Among these copies,  $50\%$ are used for learning and the remaining for validation.\\
We also compare our method to principal component regression\footnote{From \cite{MFPLS2022} scripts: \url{https://github.com/UfukBeyaztas/RFPLS}}(MFPCR). 
A number of $200$ replications of the three different inference procedures are done. \\ The number of components in all approaches is chosen by 10-fold cross-validation procedures. To transform the raw data into functions, smoothing is used with $20$ quadratic splines basis functions.\\ 
Performances of the three approaches are measured by the
mean squared prediction error (MSPE): 
\begin{equation*}
    \text{MSPE}= \frac{1}{200}\sum_{i\in V_{\text{set} }}(Y_i - \hat{Y}_i)^2
\end{equation*}
with $\hat{Y}_i$ is the predicted response for the $i-th$
observation in the validation sample ($V_\text{set}$), $Y_i$ the true value. 
\par {\color{black} The results of the experiences are summarized in Table \ref{comp}.}

\begin{table}[ht]
\centering
\color{black}
\begin{tabular}{c c c c c c c c c }
  \hline
 & \textbf{SNR }& \multicolumn{2}{c}{\textbf{MFPLS}} & \multicolumn{2}{c}{\textbf{MFPLS\_D}} & \multicolumn{2}{c}{\textbf{MFPCR}} \\ 
 &  & \textit{MSPE} & \textit{Time (in s)} &\textit{MSPE} & \textit{Time (in s)} &\textit{MSPE} & \textit{Time (in s)}  \\ 
 & 0.50 & 10.85(1.56) &  0.79(0.40) & 10.63(1.48) &0.13(0.05)  & 10.73(1.48)& 0.19(0.08) \\ 
   & 1.62 & 3.42(0.45) &  0.74(0.06)& 3.43(0.45)&0.11(0.05) & 3.45(0.45)&0.14(0.02) \\ 
   & 2.75 & 2.04(0.30) & 0.79(0.09) & 2.07(0.29) & 0.10(0.01) & 2.08(0.31) &0.15(0.02) \\ 
   & 3.88 & 1.45(0.26) & 0.75(0.07)& 1.47(0.22) & 0.13(0.08)& 1.47(0.22)&  0.15(0.06)\\ 
   & 5.00 & 1.11(0.16) &  0.73(0.06)& 1.15(0.17) & 0.10(0.04) & 1.15(0.18)& 0.14(0.03) \\ 
   \hline
\end{tabular}
\caption{ \color{black} Means and standard deviations (in parentheses) of MSPEs obtained and estimation times in the experiments. }
\label{comp}
\end{table}

        
All methods provide comparable results. {\color{black} Although the proposed method (MFPLS) is more time-consuming (around $60$ milliseconds), Table \ref{comp} shows that MFPLS and MFPLS\_D give similar performances. } 


\subsubsection{Setting 2: binary response} 
In the following experiment, we focus on a classification problem with one-dimensional domain $\mathcal{I}= [0,1]$.\\ Here, we build two different classes (Class 1 and Class 2) of functional data, visualized in Figure \ref{data_plot}. They are related to some pattern (a shape with peaks)  appearing at some locations of the curves. This simulation setting is a kind of visual pattern-recognition problem  \citep{visual}. The pattern of interest is identifiable by eyes but challenging to detect with algorithms. An example may be epileptic spikes detection in electroencephalogram recordings (see for instance \cite{spikes} for more details). \\
The performances of MFPLS, TMFPLS, and linear discriminant analysis on principal component scores (MFPCA-LDA) are compared in this simulation study.\\ \par 
{\color{black}
Consider the domain $\mathcal{I}=[0, 1]$, and the 2-dimensional functional predictor $X=(X^{(1)}, X^{(2)})^{\T1} $ : 
\begin{align*}
    X^{(1)}(t) = \sum_{s=1}^4 a_{s} h_s(t) + \epsilon^{(1)}(t)\;,   X^{(2)}(t)=\sum_{s=1}^4 (1- a_{s}) h_s(t) + \epsilon^{(2)}(t) ,
\end{align*}
 where 
 \begin{itemize}
     \item $a=\begin{pmatrix}
     a_1 & a_2& a_3& a_4
 \end{pmatrix}^\top$ is a random vector taking values in $\{ -1, 0, 1 \}^4$. Note that in this case, $a$ can possibly take $3^4=81$ different values, which are denoted by $V_1, \ldots, V_{81}$.
 \item $h_1, \ldots, h_4$ are triangle functions: $$h_s(t)= \left(1- 10|t- u_s|\right)_+, s=1, \ldots, 4,$$ 
 with $u_1=0.2$, $u_2=0.4$, $u_3=0.6$, and $u_4=0.8$.
 \item $\epsilon^{(1)}$ and $\epsilon^{(2)}$ are two independent white noises functions with variance ${Var}(\epsilon^{(j)}(t))=0.20$, for $j=1, 2$ and $t\in \mathcal{I}$.
 \end{itemize} 
\noindent We consider that $V_1=\begin{pmatrix}
     1 & 1& 0 & 0
 \end{pmatrix}^\top$, $V_2=\begin{pmatrix}
     0 & 1& 1 & 0
 \end{pmatrix}^\top$, $V_3=-V_1$ and $V_4=-V_2$. The vectors $V_5, \ldots, V_{81}$ are the $76$ other possible values of $a$. Let the response variable $Y$ be
\begin{equation*}
Y = \left\{
    \begin{array}{ll}
        1 & \mbox{if } a \in \{V_1, V_2, V_3, V_4 \}
         \textbf{ }  \\ 
        0 & \mbox{if } a \in \{V_5, \ldots, V_{81}\}
    \end{array}
\right.
\end{equation*} 
 If $a_s\neq 0$, $s=1, \ldots, 4$,  this means we observe a peak on $X^{(1)}$ at the position $t=u_s$, which could be positive ($a_s=1$) or negative ($a_s=-1$). Then, $Y=1$, if two consecutive peaks (both negatives or positives) occur at the beginning of $X^{(1)}$. Namely the peaks of interest are observed at $t=0.2, 0.4$ or $t=0.4, 0.6$. The four cases where $Y=1$ are illustrated in Figure \ref{s_evn}. \\

\begin{figure}[H]
\centering
\begin{tabular}{c c   c} 
 & $X^{(1)}$  & $X^{(2)}$ \\
    $a=V_1$ &  \includegraphics[scale=.12, align=c]{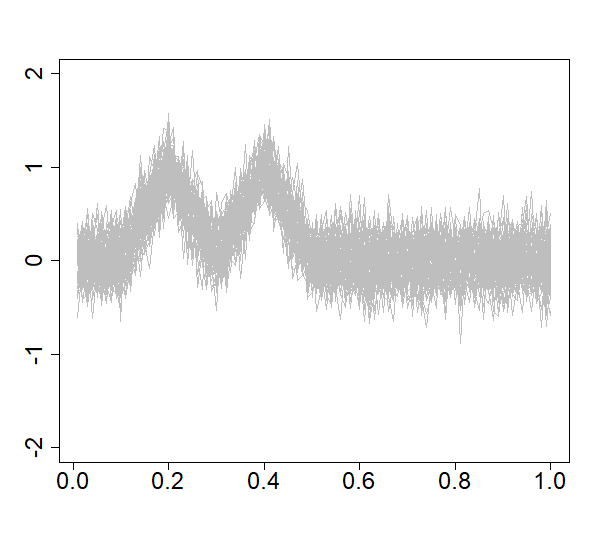} 
& \includegraphics[scale=.12, align=c]{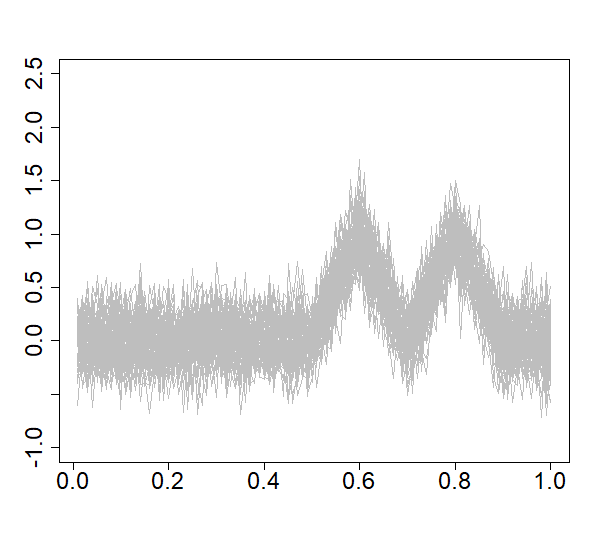}  \\ 
    $a=V_2$ &  \includegraphics[scale=.12, align=c]{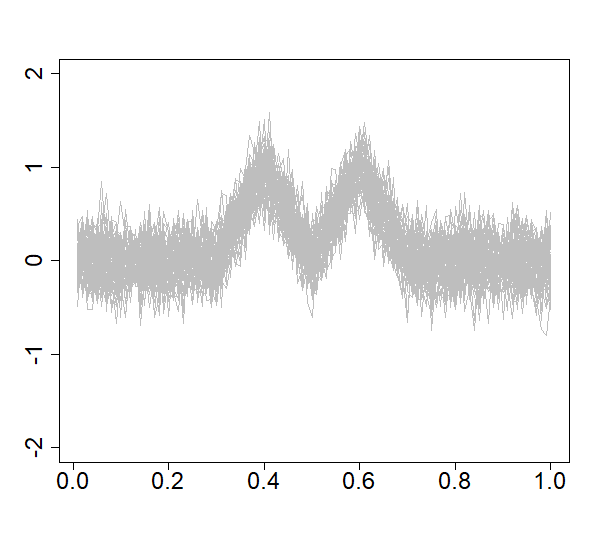} 
  & \includegraphics[scale=.12, align=c]{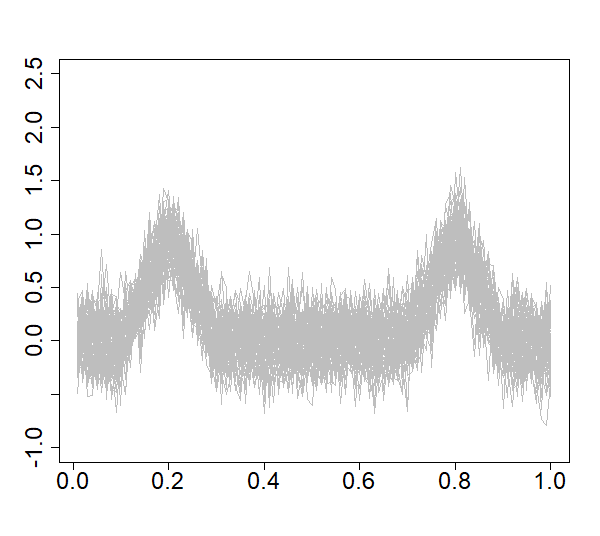}  \\ 
$a=V_3$ &   \includegraphics[scale=.12, align=c]{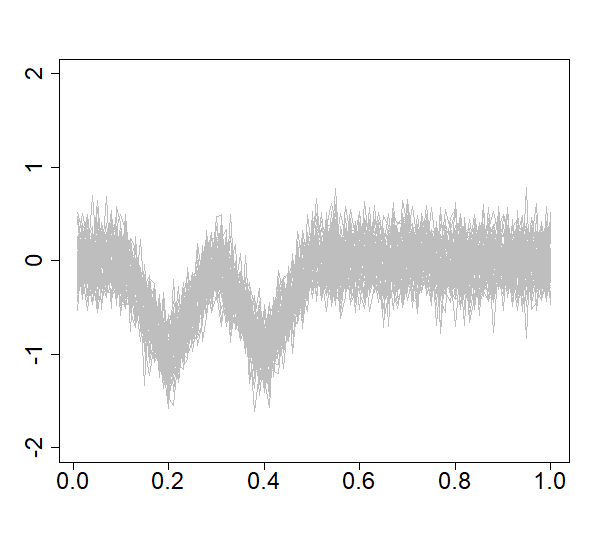} 
  & \includegraphics[scale=.12, align=c]{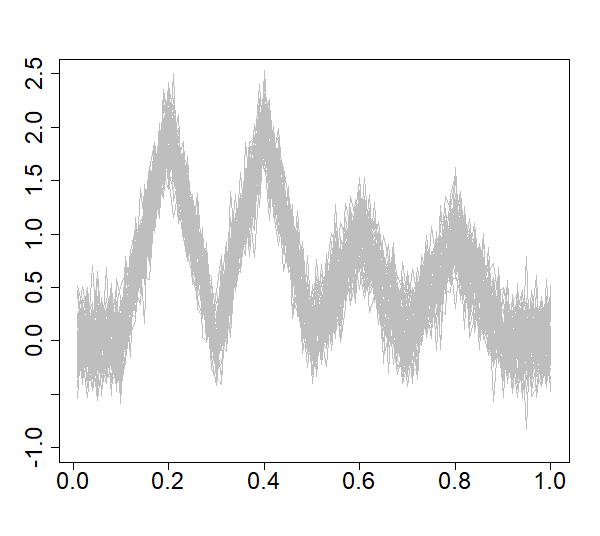}  \\ 
 $a=V_4$ &  \includegraphics[scale=.12, align=c]{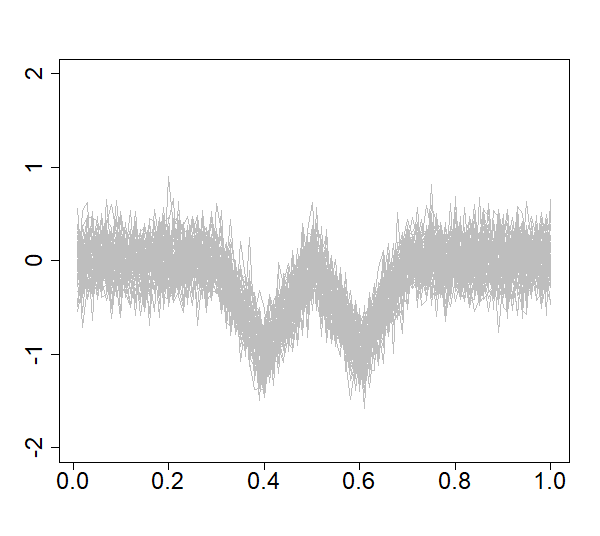} 
  &  \includegraphics[scale=.12, align=c]{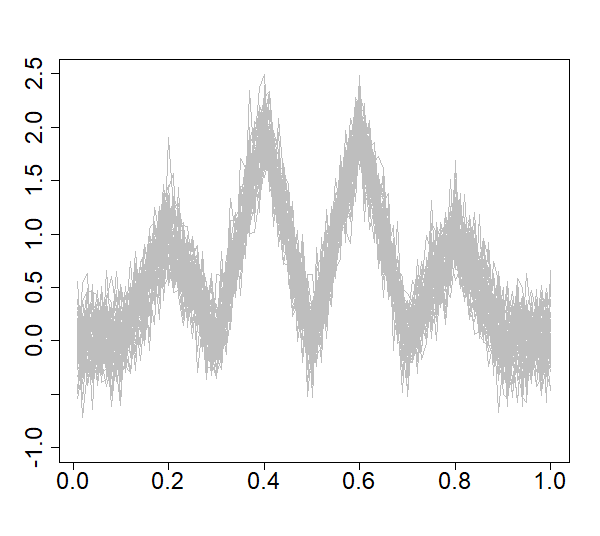} 
\end{tabular}
   \caption{Examples of $N=25$ realizations of $X$ for $Y=1$ (Setting 2)}. \\ 
       \label{s_evn}
\end{figure}
According to the distribution of $a$, we study two scenarios (see Figure \ref{data_plot}) : 
\begin{itemize}
    \item \textbf{Scenario 1:} Positive peaks. 
    \subitem $\mathbb{P}(a=V_1)=\mathbb{P}(a=V_2)=\frac{1}{4}$ and  $\mathbb{P}(a=V_3)=\mathbb{P}(a=V_4)=0$. 
    \subitem $\mathbb{P}(a=V_5)= \mathbb{P}(a=V_6)= \ldots= \mathbb{P}(a=V_{81})=\frac{1}{162}$.   
    \item \textbf{Scenario 2:} Positive and negative peaks. 
    \subitem $\mathbb{P}(a=V_1)=\mathbb{P}(a=V_2)=\mathbb{P}(a=V_3)= \mathbb{P}(a=V_4)=\frac{1}{8}$.
    \subitem $\mathbb{P}(a=V_5)= \mathbb{P}(a=V_6)= \ldots= \mathbb{P}(a=V_{81})=\frac{1}{162}$.
\end{itemize}

    

\begin{figure}[ht]
\centering
 \begin{tabular}{ c  c |  c  }
 & \textbf{Scenario 1} &  \textbf{Scenario 2} \\ 
$Y=0$   & \begin{tabular}{c c}
$X^{(1)}$ & $X^{(2)}$ \\  
     \includegraphics[align=c, scale=.1]{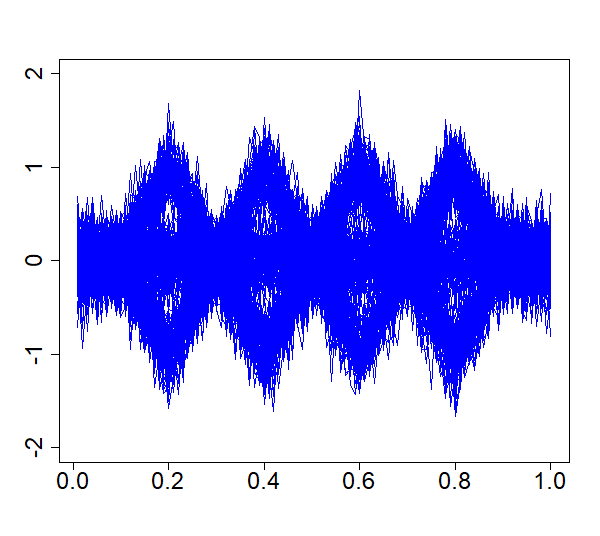} & \includegraphics[align=c, scale=.1]{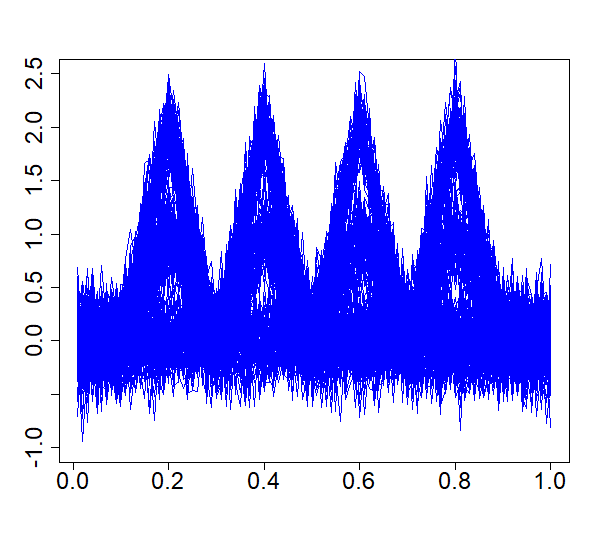}  
\end{tabular} & \begin{tabular}{c c}
$X^{(1)}$ & $X^{(2)}$ \\ 
     \includegraphics[align=c, scale=.1]{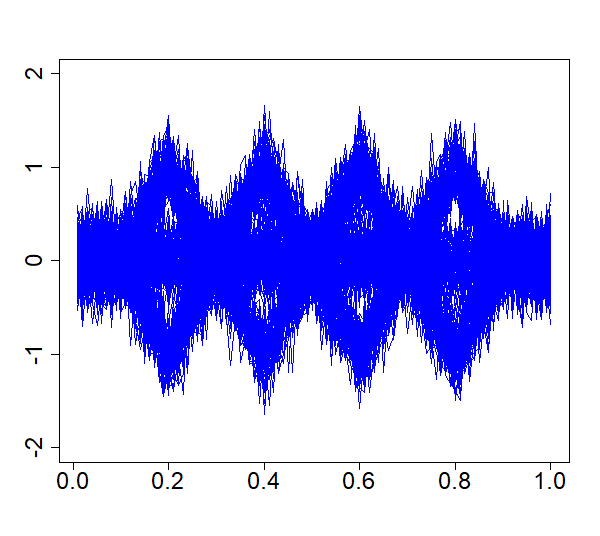} & \includegraphics[align=c, scale=.1]{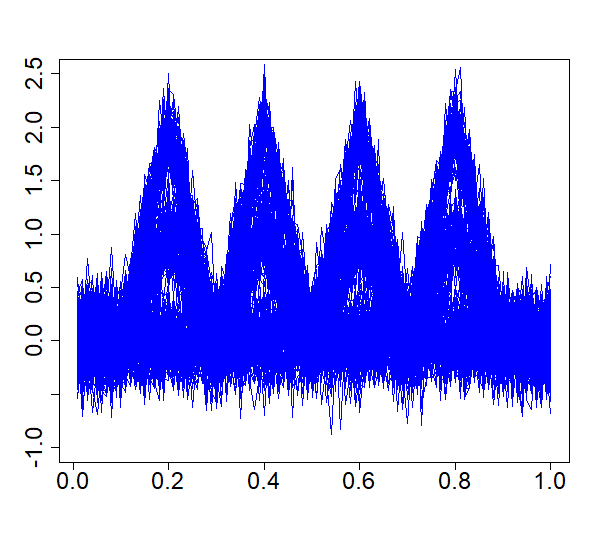}  
\end{tabular}\\  
$Y=1$ &  \begin{tabular}{c c}
     \includegraphics[align=c, scale=.1]{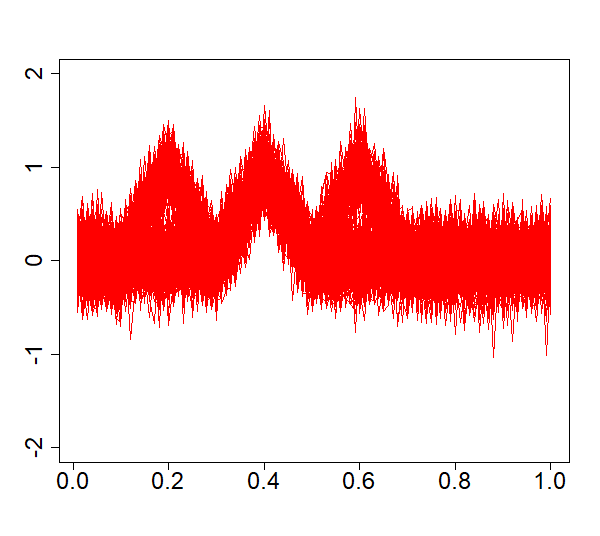} & \includegraphics[align=c, scale=.1]{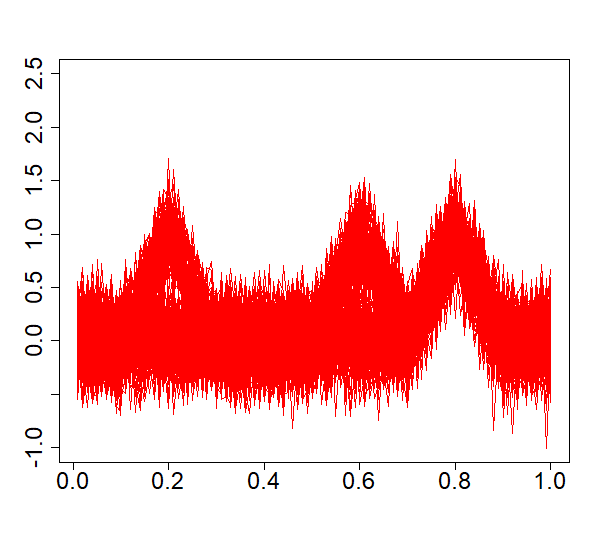}  
\end{tabular}  & \begin{tabular}{c c}
     \includegraphics[align=c, scale=.1]{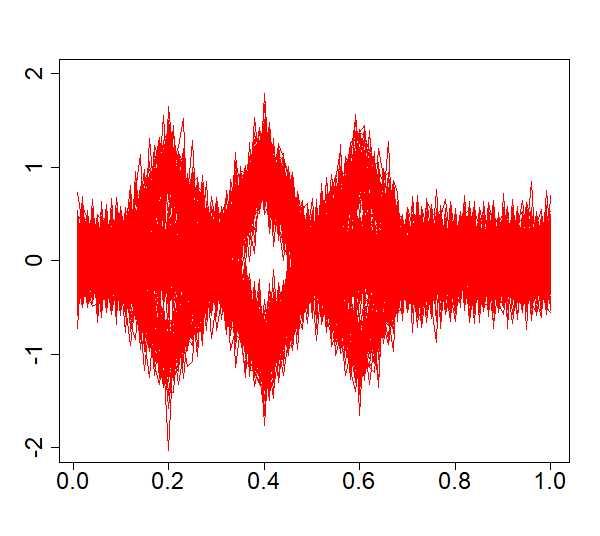} & \includegraphics[align=c, scale=.1]{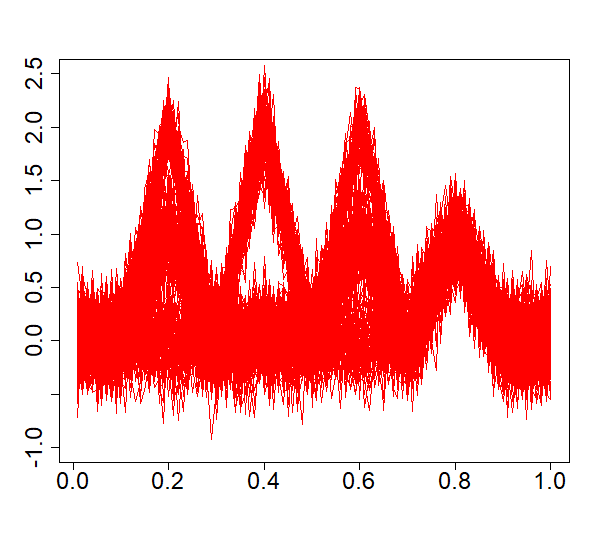}  
\end{tabular}   
    \end{tabular}
    \caption{ \small Example of curves $X$ in {Setting 2}, under both scenarios.
    Class 1 (red curves, $Y=1$) and class 0 (blue curves, $Y=0$). Scenario $2$ (right) shows more heterogeneity in class 1 compared to Scenario $1$ (left).}
    \label{data_plot}
\end{figure}

\begin{remark}  \text{ }
\begin{itemize}
\item In the two scenarios $\mathbb{P}(Y=1)=\mathbb{P}(Y=0)=0.5$. 
\item In Scenario $1$, the curves $X$ where $Y=1$ are composed exclusively of the realizations of events $V_1$ and $V_2$; in $X^{(1)}$, they have consecutive positive peaks at $t=0.2, 0.4$, or at $t=0.4, 0.6$.  
     \item Scenario $2$ is more complex. The four cases: $V_1$, $V_2$, $V_3$ and $V_4$ have the same probability to occur. In this case, the curves $X$ for $Y=1$ are more heterogeneous compared to Scenario $1$. The four cases, represented in Figure \ref{s_evn}, theoretically have the same proportion in the sample.  
\end{itemize}
\end{remark}
}
The functional form  of $X$ is reconstructed using 20 quadratic spline functions with equidistant knots. For a given scenario, we did 200  experiments. At each, 75 \% of the data are used for learning and 25 \% for validation. \\ The number of components for the MFPLS (in both models) is chosen  by 10-fold cross-validation. Moreover, MFPCA-LDA  is  performed for comparison. It consists, firstly in the estimation of principal components (using \cite{rMFPCA} package) and then applying linear discriminant analysis to them. As in the previous model, the number of components is chosen by 10-fold cross-validation. { \color{black} We also compute the approach proposed in \cite{MFPLS2022} (MFPLS$\_$D). To use it for classification, the transformation proposed in Section \ref{reg_class} is employed.  The number of components in MFPLS$\_$D is also chosen by $10$-fold cross-validation.}\\ 
 By defining the set of groups as  $\mathcal{G}_1=1, \mathcal{G}_2=2, \mathcal{G}_{3}=\{1,2\}$, the decision tree looks for the best splits among those obtained using separately univariate functions and the ones obtained using both dimensions. In order, to have an estimation of the optimal depth $m^*$, we randomly take  75\% of  learning data  to train an intermediate TMFPLS, and 25\% for pruning (by AUC metric).  This procedure is repeated  $10$ times and $\hat{m}^*$ is the frequent value among the $10$ repetitions. The final tree is then trained on the whole learning data, with the maximum tree depth fixed to $\hat{m}^*$, and the minimum criterion of impurity at 1\%. 

\subsubsection*{Results}
\begin{table}[ht]
\color{black}
\centering
\begin{tabular}{c c }
\rotatebox[origin=c]{270}{ \textbf{Scenario 1}}   & 
\begin{tabular}{crrrr}
 & \textbf{AUC} & \textbf{Sensibility} & \textbf{Specificity} & \textbf{Time(in s)} \\ 
TMFPLS & 0.98(0.02) & 0.97(0.03) & 0.99(0.02) & 39.48(9.49) \\ 
  MFPLS & 0.90(0.03) & 0.76(0.06) & 1.00(0.00) & 0.61(0.12) \\ 
  MFPCA-LDA & 0.90(0.03) & 0.76(0.06) & 1.00(0.00) & 0.63(0.13) \\ 
  MFPLS\_D & 0.90(0.03) & 0.76(0.06) & 1.00(0.00) & 0.06(0.02) \\ \end{tabular} \\ \hline 
    \rotatebox[origin=c]{270}{ \textbf{Scenario 2}}  & \begin{tabular}{crrrr}
  
 & {\color{white}\textbf{AUC}}& \textbf{\color{white}Sensibility} & \textbf{\color{white}Specificity} & \textbf{\color{white}Time(in s)}
 \\ 
  TMFPLS & 0.97(0.02) & 0.96(0.04) & 0.98(0.03) & 60.74(125.90) \\ 
  MFPLS & 0.50(0.06) & 0.49(0.07) & 0.52(0.10) & 0.54(0.13) \\ 
  MFPCA-LDA & 0.56(0.11) & 0.53(0.20) & 0.56(0.22) & 0.57(0.13) \\ 
  MFPLS\_D & 0.50(0.06) & 0.50(0.06) & 0.50(0.07) & 0.05(0.01) \\ 
  \end{tabular}
\end{tabular}
\caption{\color{black} Results of Setting $2$.  For the two scenarios, the presented metrics (AUC, Sensibility, etc.) are the means and the standard deviations (in parentheses) of those obtained in the experiences.}
\label{res_cv}
\end{table}
{\color{black}
In Scenario $1$,  Table \ref{res_cv}  shows that AUC differences are about $8$\% between MFPLS and TMFPLS. Furthermore, MFPCA-LDA and MFPLS$\_$D  are competitive with MFPLS. Table \ref{res_cv} also exhibits that the training procedure of TMFPLS is time-consuming compared to the other methods, about $66$ times more than the time for training MFPLS. This can be explained by the fact that at each split in the tree, we used a cross-validation procedure to choose the number of components. {\color{black} Also, the results clearly show that MFPLS$\_$D is the fastest method. }    
} 

Scenario $2$ shows more differences between the methods: MFPLS, {\color{black}MFPLS$\_$D} and MFPCA-LDA are non-effective compared to TMFPLS. Hence, TMFPLS outperforms these methods in a complex task classification such as Scenario $2$. {\color{black} It is worth noting that in this case, the (mean) time for the estimation of TMFPLS has significantly increased compared to Scenario $1$.   \par 
This is because the estimated trees in Scenario $1$ have fewer ramifications than the estimated ones in Scenario $2$. As an illustration, we can refer to Figure \ref{scen_1} and Figure \ref{sc_2_tree}, which represent examples of trees estimated respectively from Scenario $1$ and Scenario $2$. These trees are randomly selected among the $200$ estimated for each scenario. Moreover, Figure \ref{split_b} and Figure \ref{split_b2} present the associated discriminant functional coefficients used for splitting rules in the trees. We provide insight on how to interpret them.  \par   
{\color{black} In Scenario $1$, the tree uses in the first splitting rule (depth=$0$) a coefficient function that only depends on the first dimension 
. The fact that this function has a negative peak at $t=0.4$ can be interpreted as if a curve $X$ has a low value of $X^{(1)}$ at this region, it will belong to the left node of depth$=1$. Note that this node has exclusively curves of class $Y=0$. This makes sense since in Scenario $1$, $X$ curves where $Y=1$ are defined as having peaks in $X^{(1)}$ at points $(0.2, 0.4)$ or $(0.4, 0.6)$. In other words, a high value at $t=0.4$ characterizes the class $Y=1$. \\ 
At the depth$=1$, the coefficient function depends only on the second dimension. The resulting separation doesn't lead to total purity in the node, however, the coefficient states that curves $X$ in the right node with high values of $X^{(2)}$ at $(0.2, 0.4)$ are more frequent in class $Y=0$. This last rule is not senseless, since Figure \ref{s_evn} shows that curves $X$ for $a=V_1$ don't have peaks in $X^{(2)}$ at $t=0.2$ and $t=0.4$. \\ 
Using only these two coefficients, we can say that for curve $X$ if at $t=0.4$ $X^{(1)}$ doesn't have a peak, $X$ is of class $Y=0$. However, if $X^{(1)}$ has a peak at $t=0.4$, and $ X^{(2)}$ has \textit{small} values at regions $(0.2, 0.4)$, $X$ is probably of class $Y=1$. The other coefficient functions help to refine the predictions by giving more and more precise rules at each step.    }

\begin{figure}[H]
    \centering
    \begin{tabular}{c c}
       \includegraphics[width=.8\linewidth, align=c]{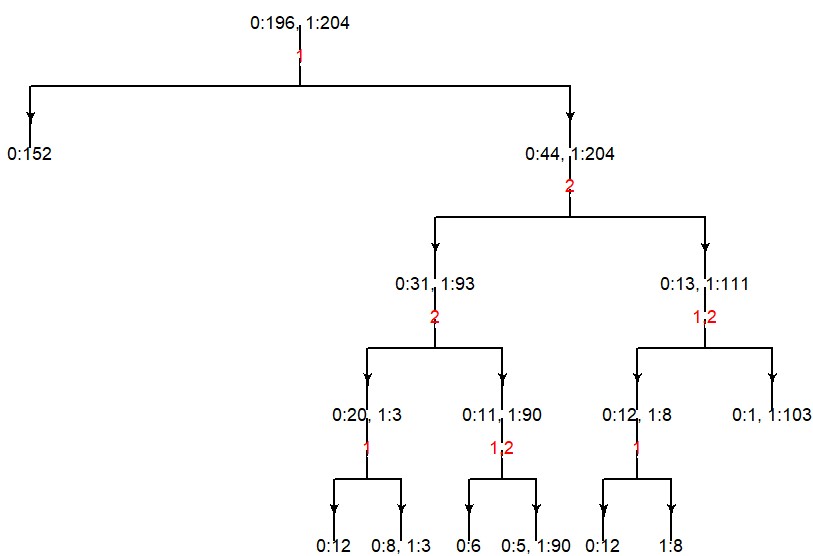}   &  
    \end{tabular}
    \caption{ \small Example of post-pruned tree  in Scenario $1$(Setting 2). Dimension(s) used for the splitting is in red.
    }  
    \label{scen_1}
\end{figure}

\begin{figure}[H]
\centering
\begin{tabular}{ c|   c  c c c c  c c  }
     \textbf{depth} &  $\to$ & $\to$ \\
     \hline 
     \textbf{0}& \includegraphics[align=c, scale=.07]{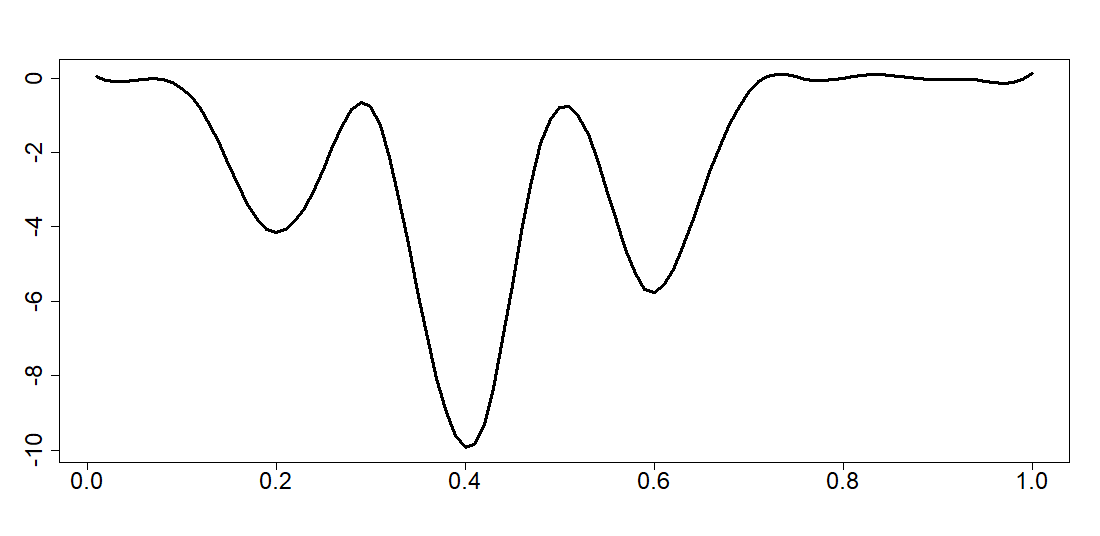} &    \\
     &  & \\ 
     \textbf{1} &\includegraphics[align=c,scale=.07]{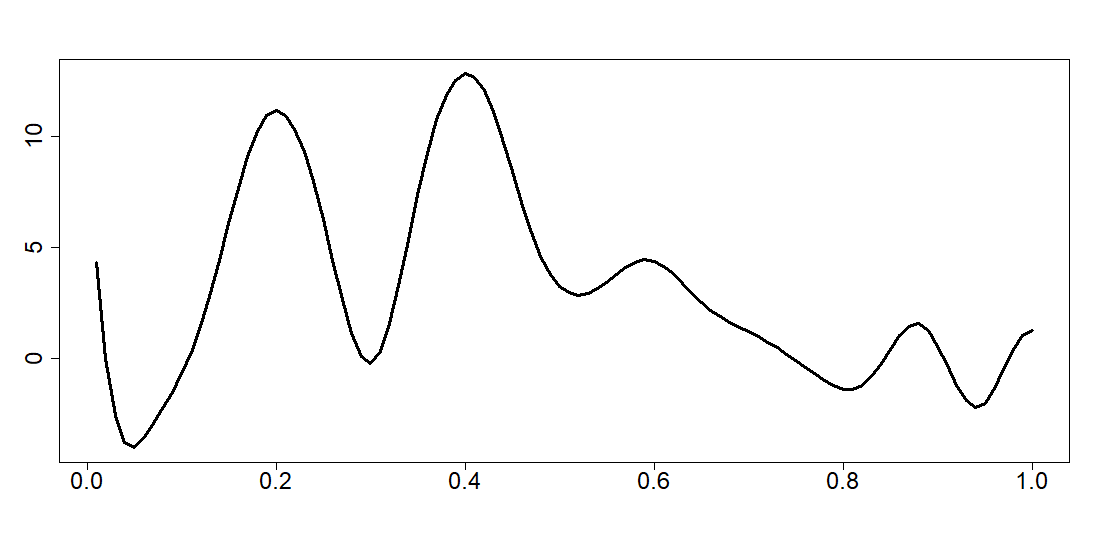}   \\ 
     \textbf{2} & \includegraphics[align=c,scale=.07]{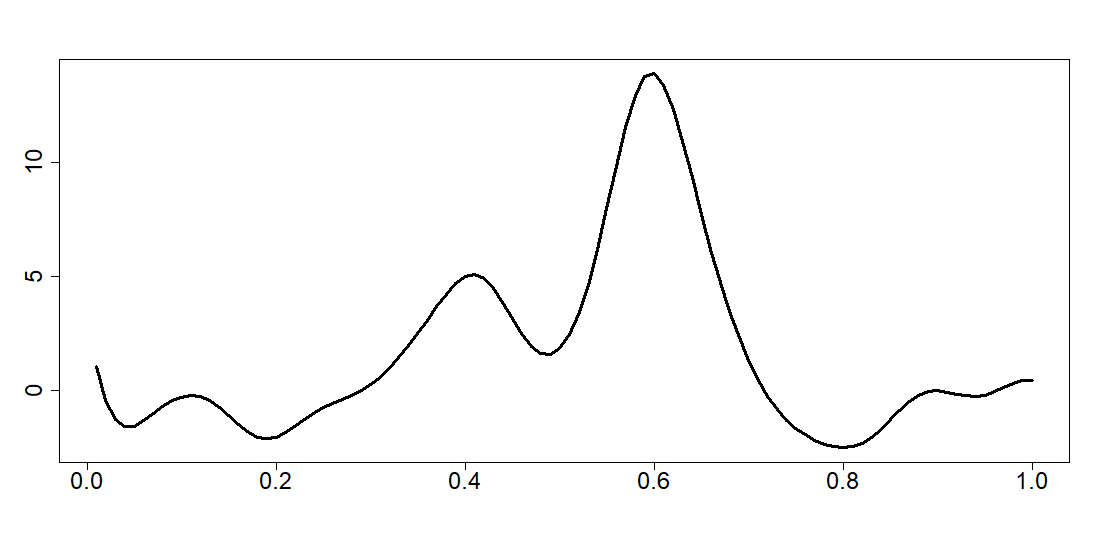} & \includegraphics[align=c,scale=.07]{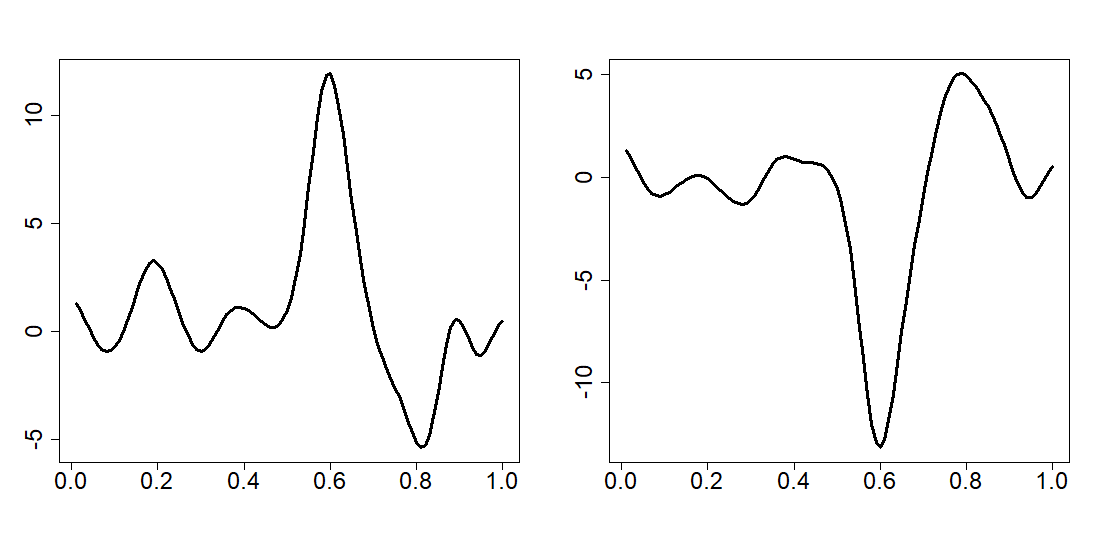}  \\ 
    \textbf{3} & \includegraphics[align=c,scale=.07]{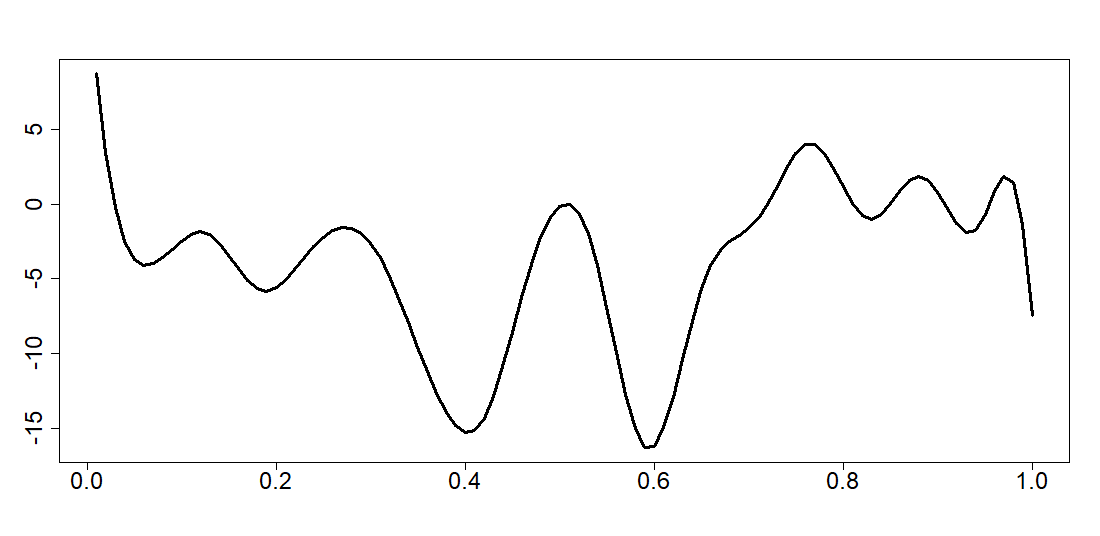} & \includegraphics[align=c,scale=.07]{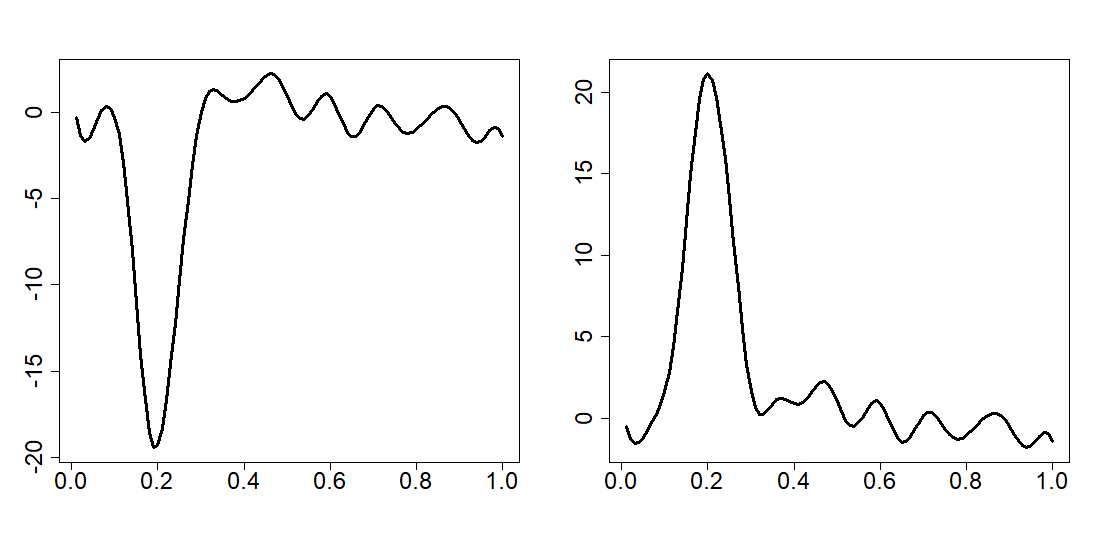} & \includegraphics[align=c,scale=.07]{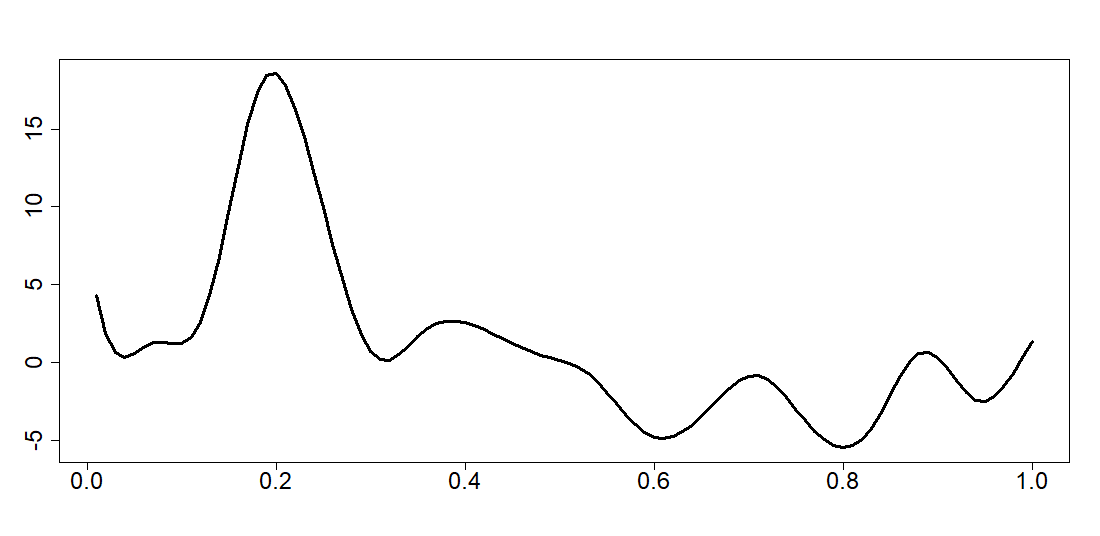}    
\end{tabular}
\caption{Scenario 1: TMFPLS  estimated coefficient functions (from left to right) at each split}
\label{split_b}
\end{figure}
\par In the scenario $2$, the goal is to estimate a more complex rule of classification, since the amplitudes ($a_1, a_2, a_3, a_4$) can be either positive or negative for curves $X$ of class $Y=1$. As already mentioned, compared to Scenario $1$, the example estimated tree (Figure \ref{sc_2_tree}) has more ramifications. As in the classical tree, it is challenging to give a clear interpretation of TMFPLS in a such case,  especially, when using only Figure \ref{sc_2_tree} and Figure \ref{split_b2}. There are numerous coefficient functions that we have to take into account. One need, which we should address in the future, is to provide more insightful visualizations in this case. However, this tree shows how TMFPLS can be flexible and can therefore be used in a complex classification framework.
}
\begin{sidewaysfigure}
\centering     
\includegraphics[width=\linewidth,align=b]{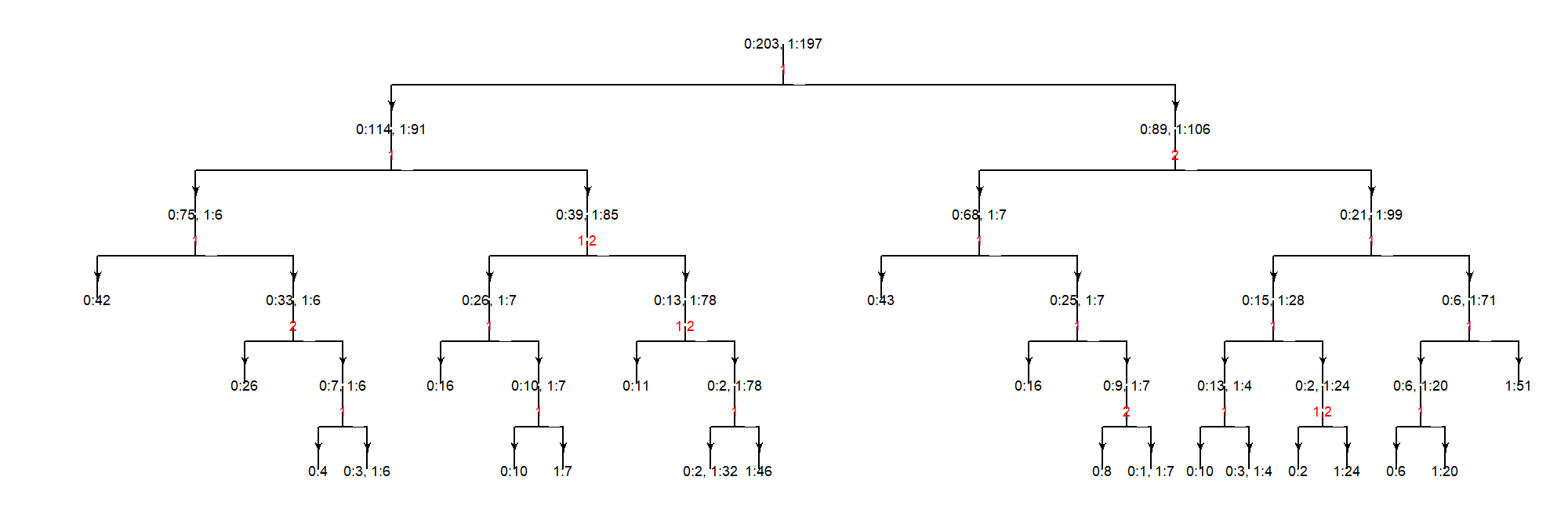} 
          \caption{ \small Example of post pruned tree  in Scenario $2$( {Setting 2}).
          }
          \label{sc_2_tree}
\end{sidewaysfigure}

\label{output}

\begin{sidewaysfigure}
\centering
\begin{tabular}{c| c c c c c c  c c c}
\textbf{depth}&  $\to$ & & $\to$ &  & $\to$ &   \\ 
 \hline \\ 
   \textbf{0}  & \includegraphics[align=c,scale=.06]{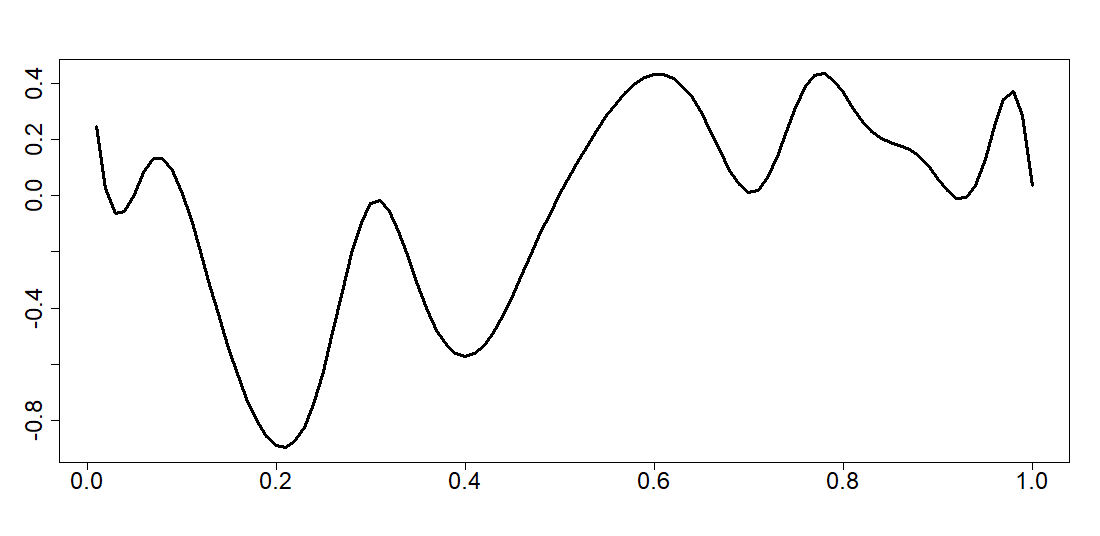}  \\
   \textbf{1}   & \includegraphics[align=c,scale=.06]{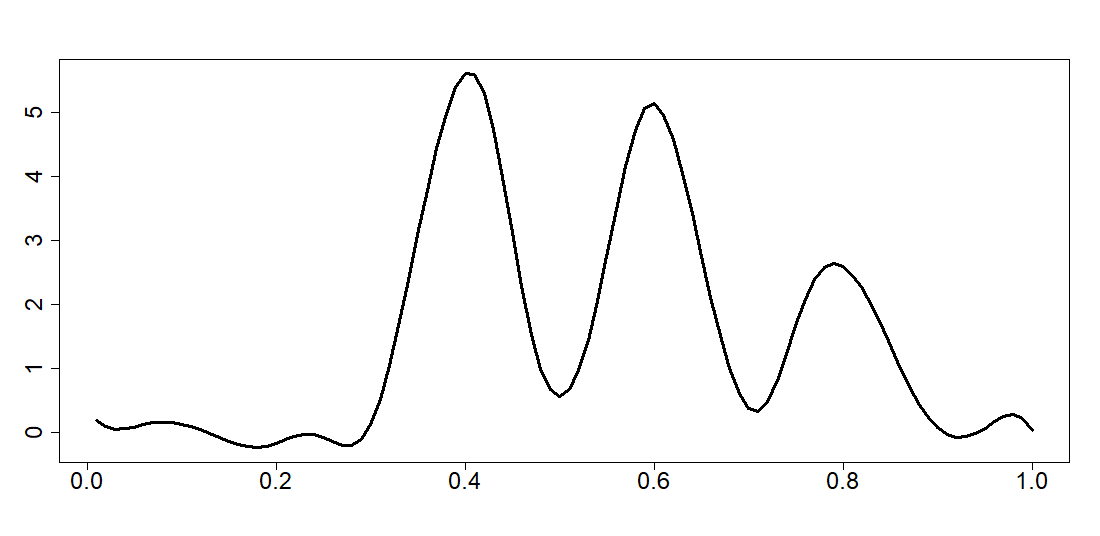} & \includegraphics[align=c,scale=.06]{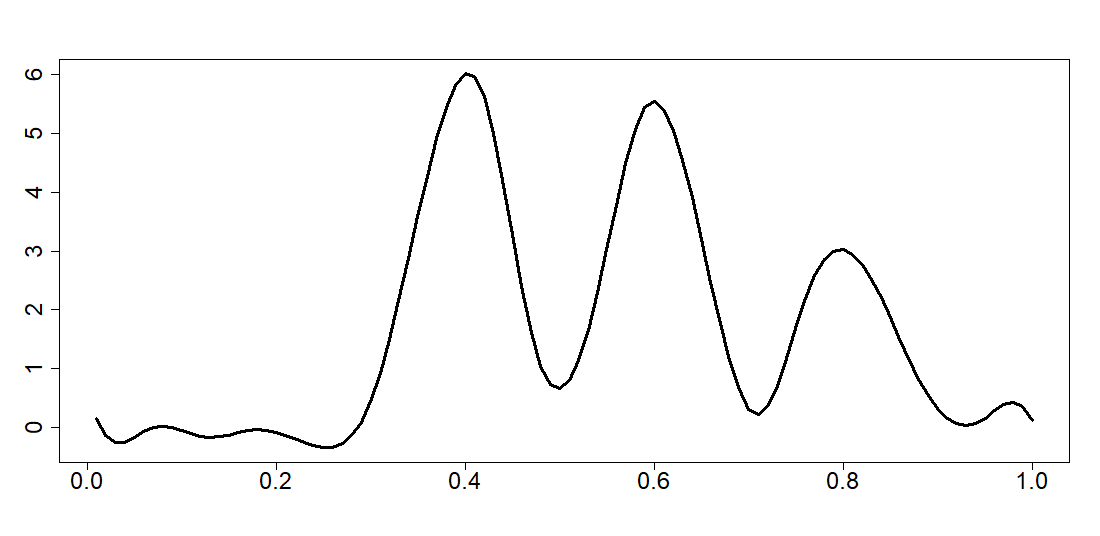} \\
      \textbf{2}& \includegraphics[align=c,scale=.06]{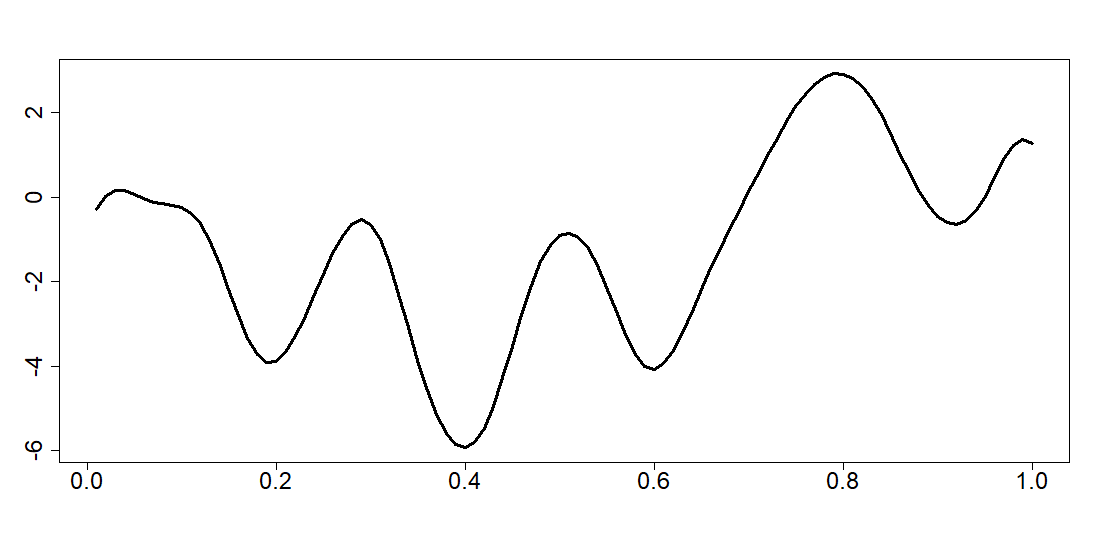} & \includegraphics[align=c,scale=.06]{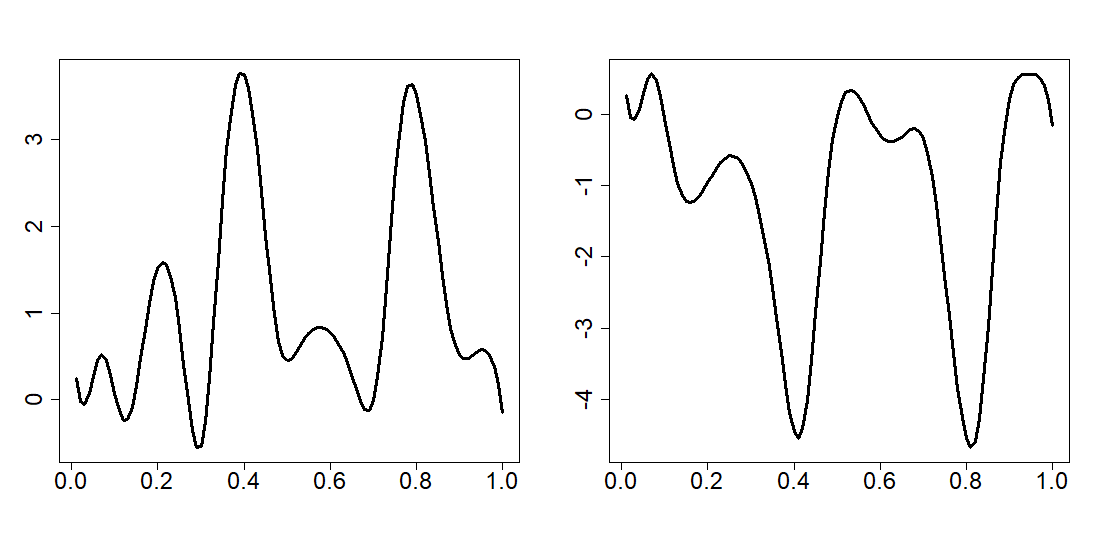} & \includegraphics[align=c,scale=.06]{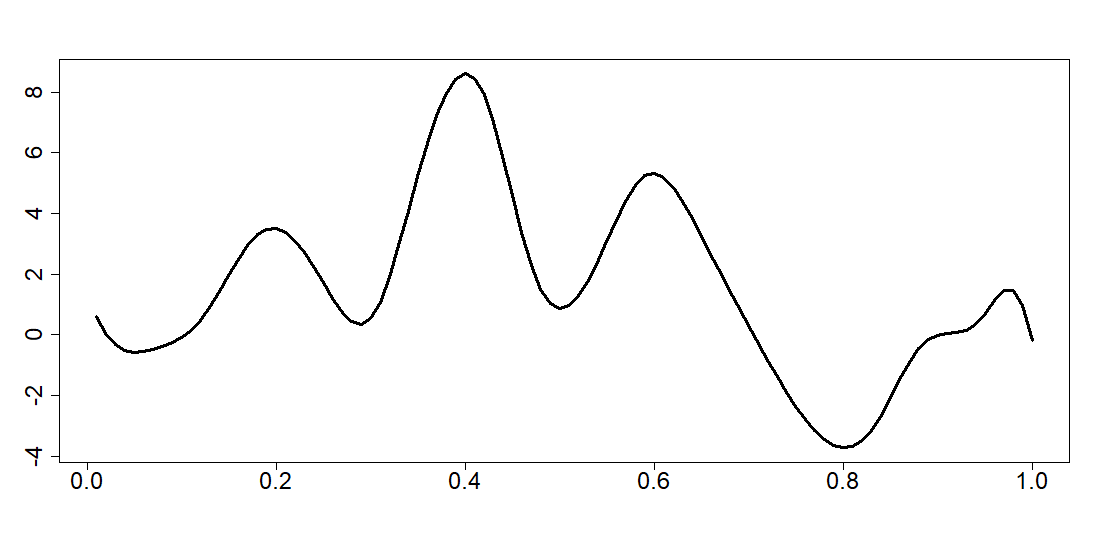}& \includegraphics[align=c,scale=.06]{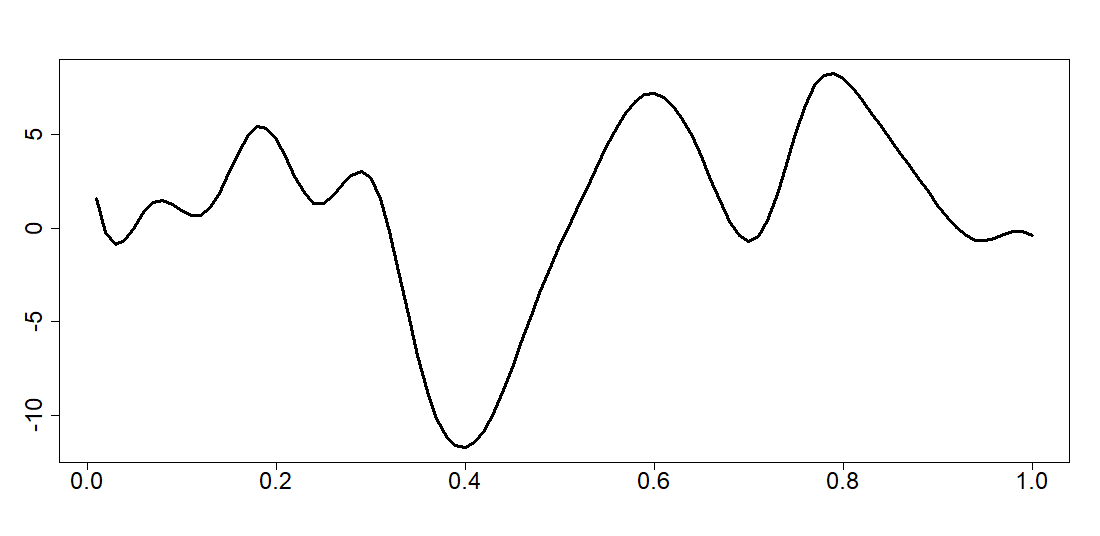}  \\ 
         \textbf{3}&  \includegraphics[align=c,scale=.06]{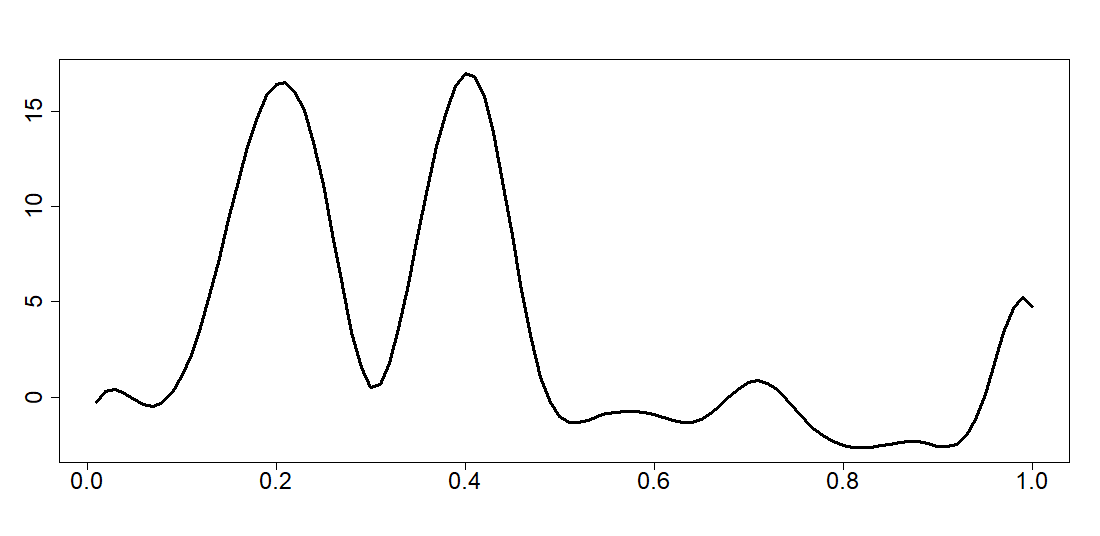} & \includegraphics[align=c,scale=.06]{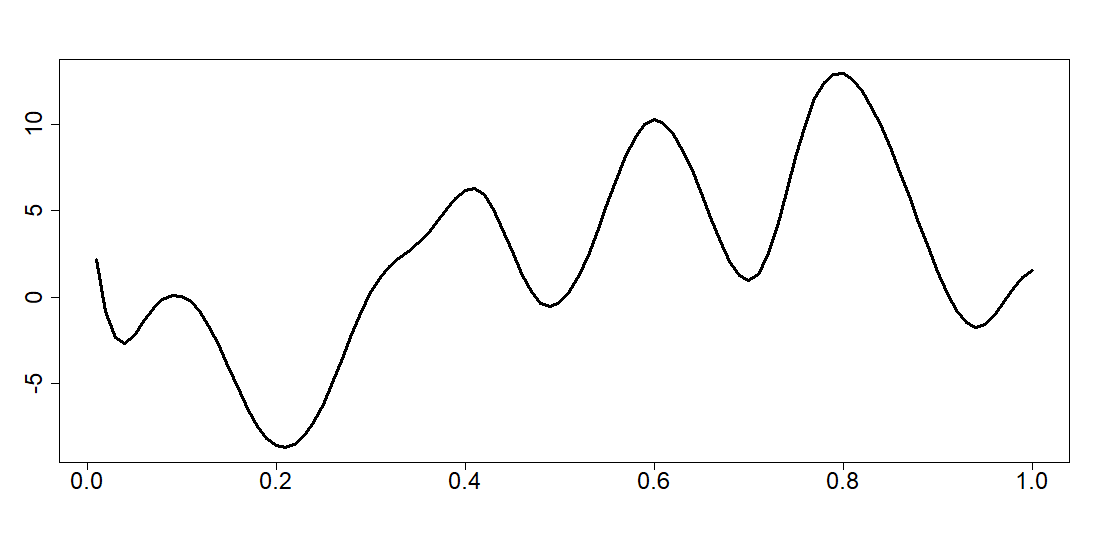} & \includegraphics[align=c,scale=.06]{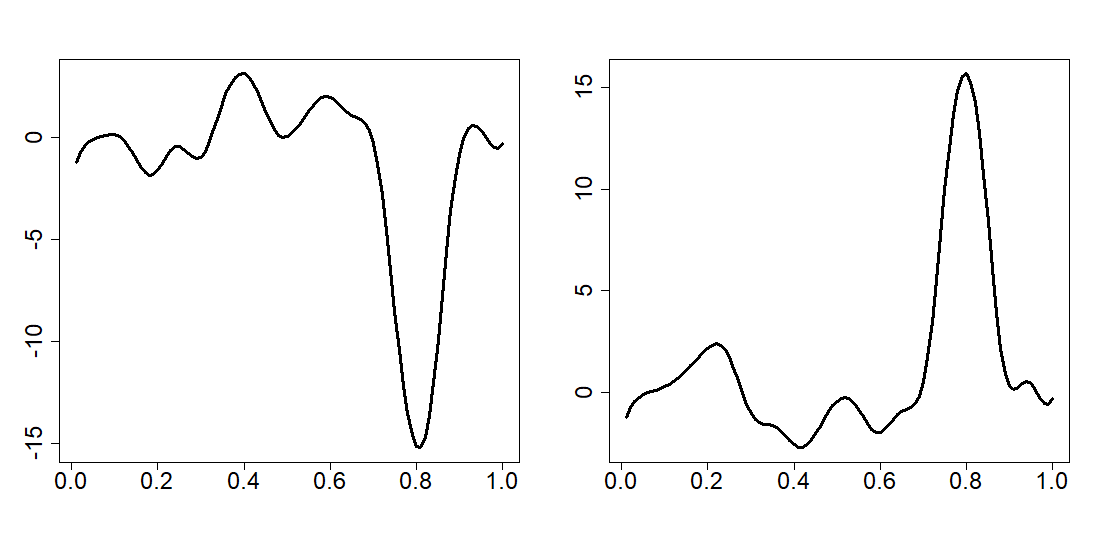} & 
     \includegraphics[align=c,scale=.06]{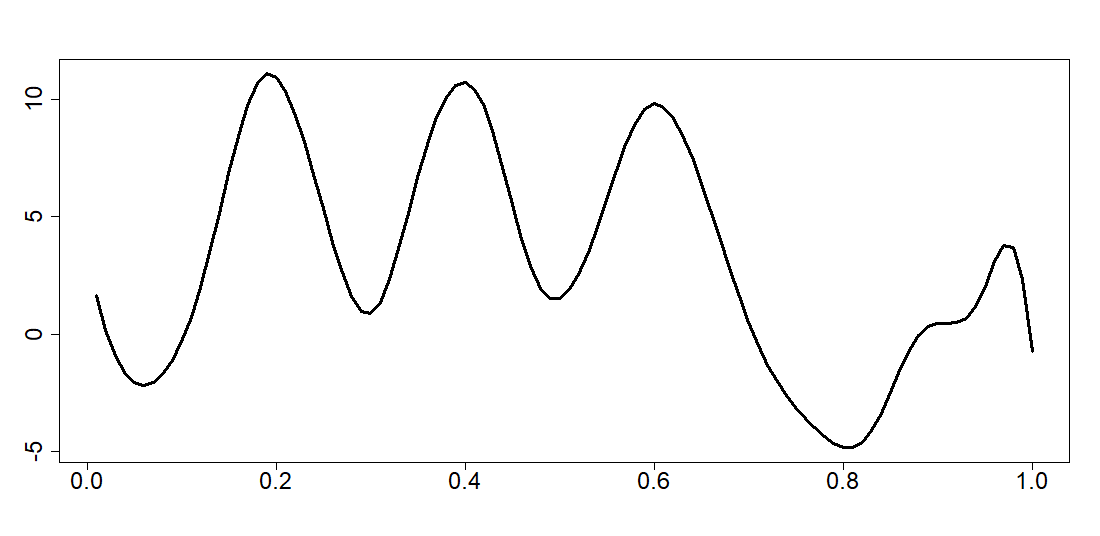} & 
     \includegraphics[align=c,scale=.06]{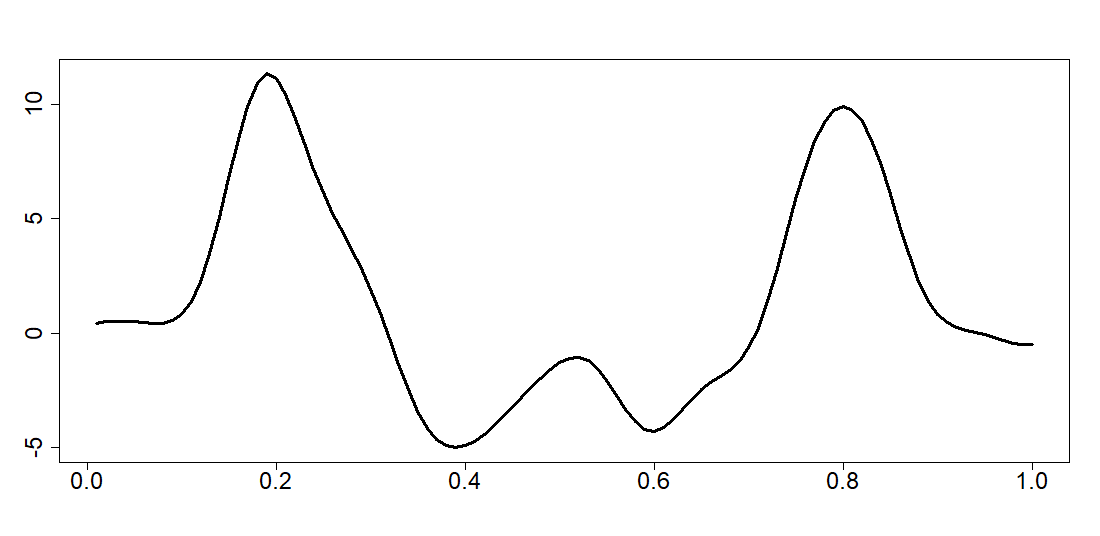} &      \includegraphics[align=c,scale=.06]{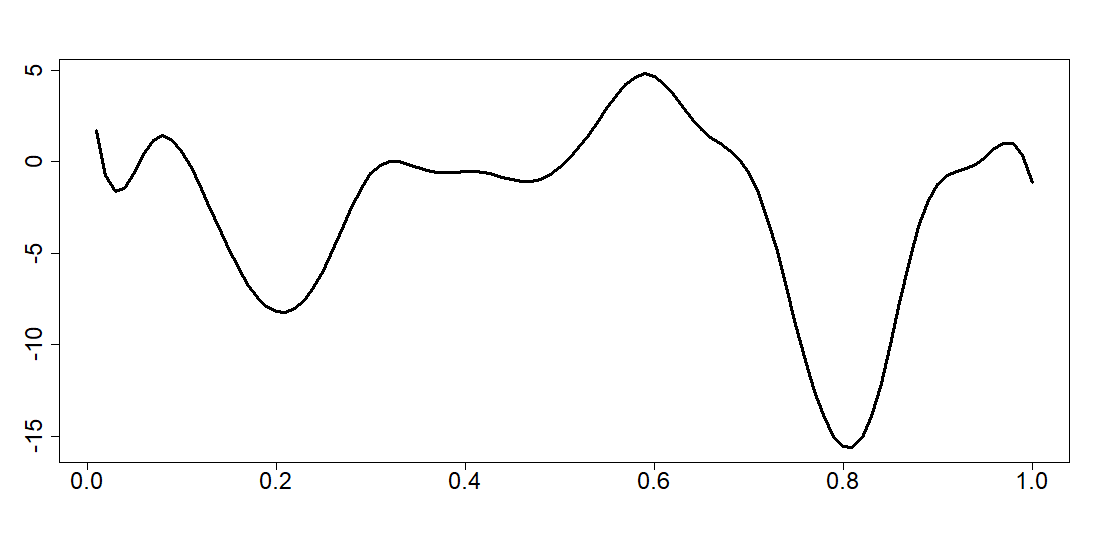} \\ 
    \textbf{4} &  \includegraphics[align=c,scale=.06]{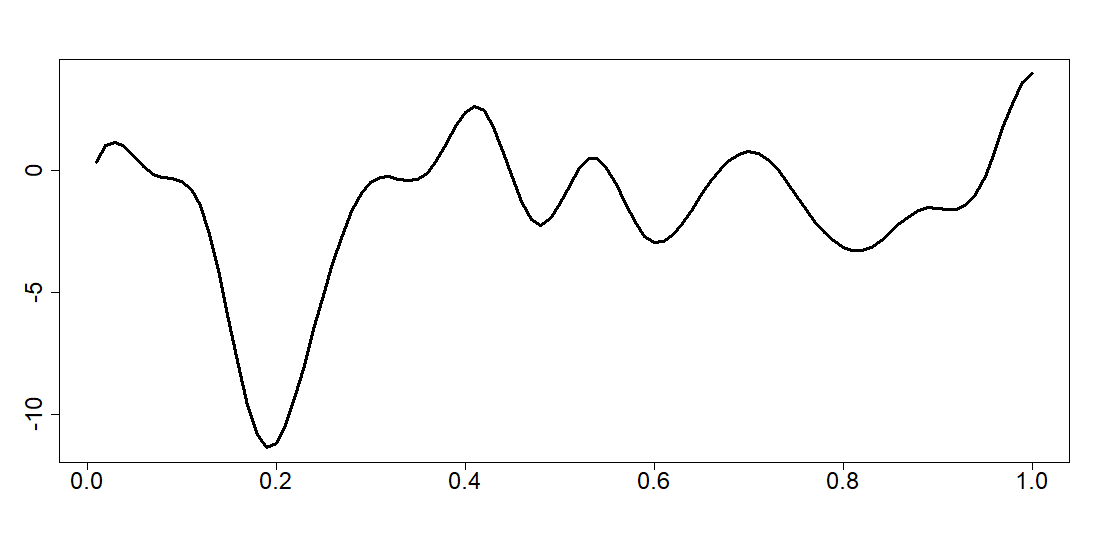} &  \includegraphics[align=c,scale=.06]{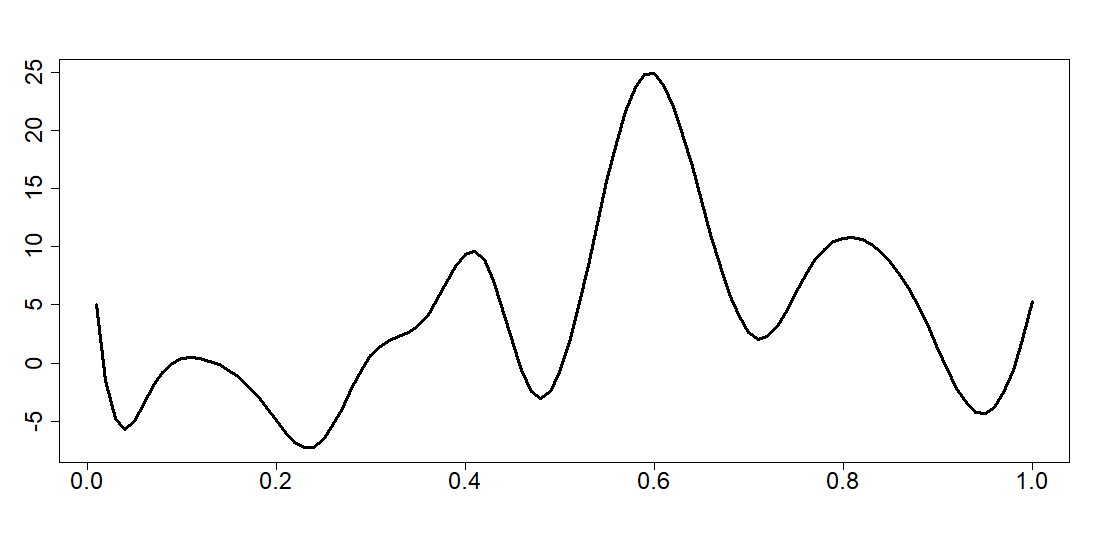}& \includegraphics[align=c,scale=.06]{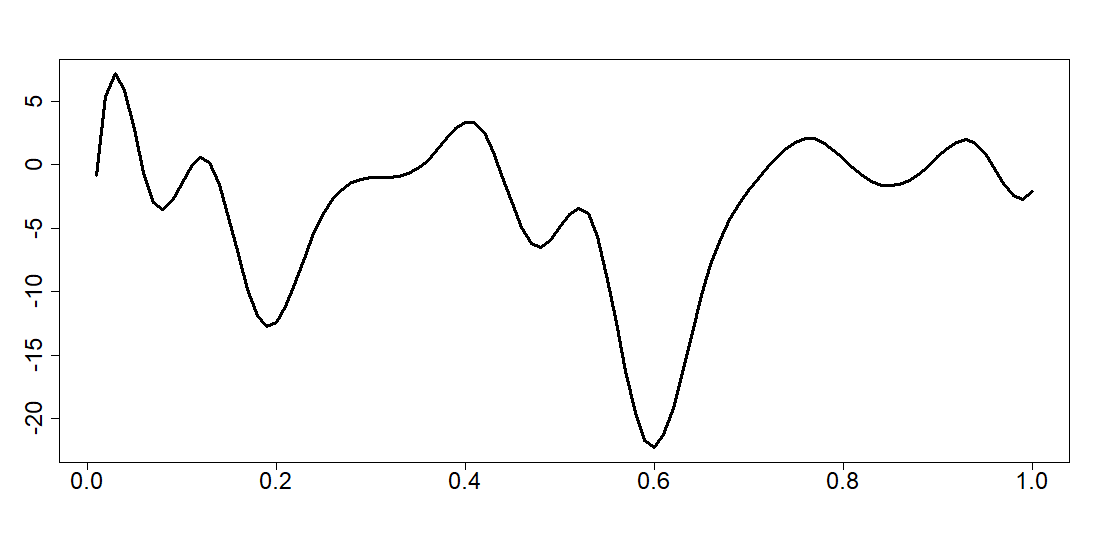} &\includegraphics[align=c,scale=.06]{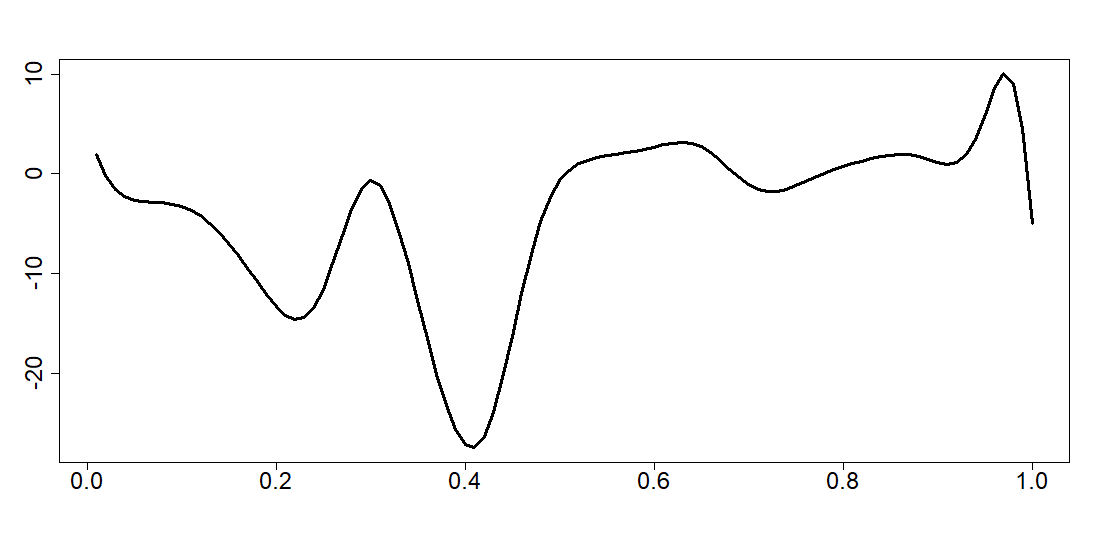} & \includegraphics[align=c,scale=.06]{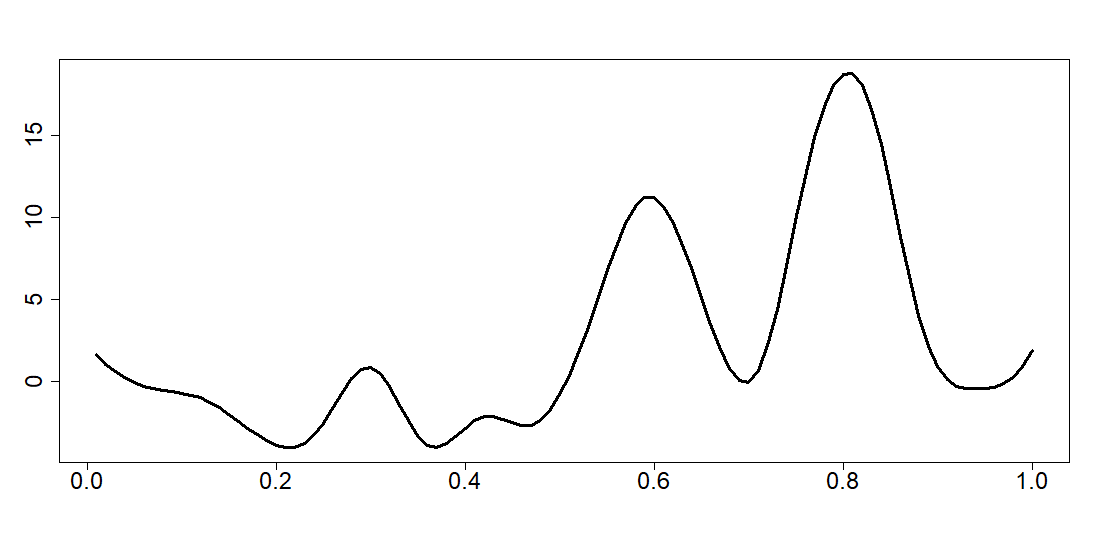} & \includegraphics[align=c,scale=.06]{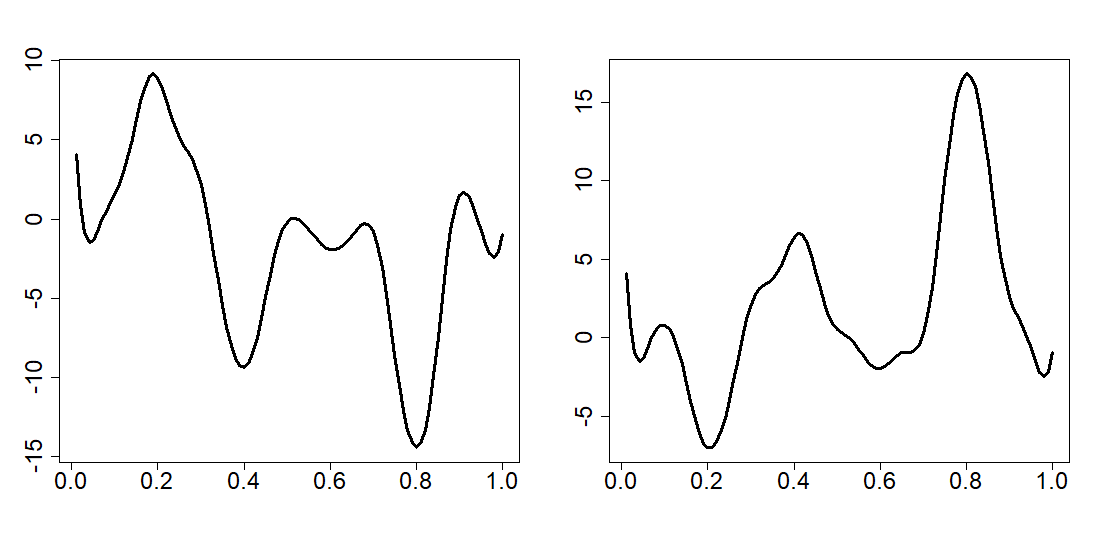}& \includegraphics[align=c,scale=.06]{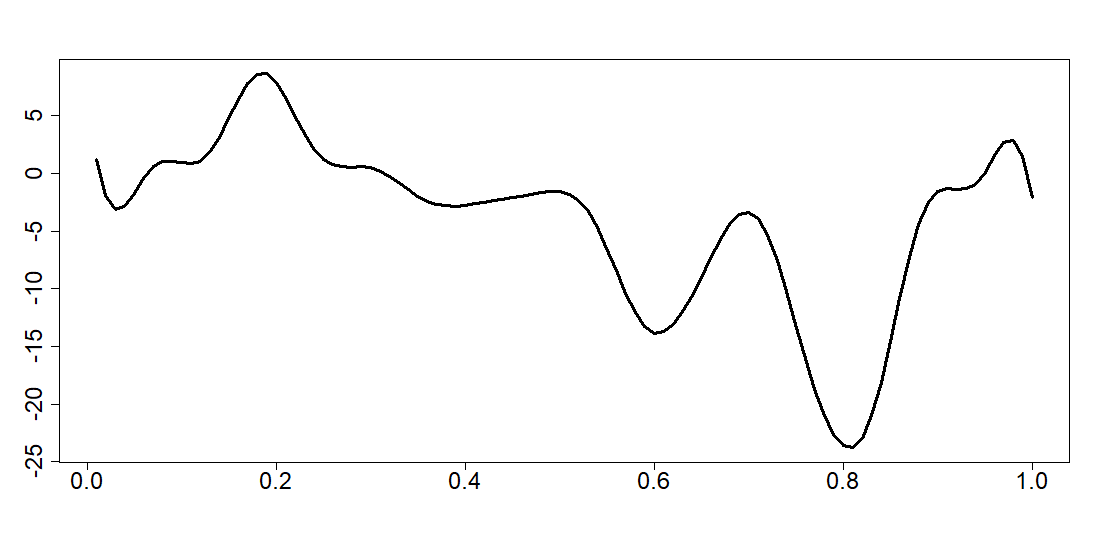}     \\
\end{tabular}
\caption{Scenario 2: Coefficient functions in  the estimated tree 
}
\label{split_b2}
\end{sidewaysfigure}
}

\par 

\subsection{Different domains case}
\subsubsection{ {Setting 3:} Image and Time series classification}
Our approach allows  the use of  images and time series simultaneously. In this part, we highlight the use of various domains instead of focusing only on one dimension domain.
\paragraph{Framework} 
Consider the domains $\mathcal{I}_1=[0, 50]$,  $\mathcal{I}_2=[0,1] \times [0,1]$, and  $X=(X^{(1)}, X^{(2)})^{\T1} $:  
\begin{align*}
    X^{(1)}(t)&=Z_1h(t) + \epsilon^{(1)}(t)\text{, }  t \in \mathcal{I}_1  \\ 
    X^{(2)}(t) & = Z_2q(t) + \epsilon^{(2)}(t) \text{, }t \in \mathcal{I}_2.
\end{align*}
The  noise term $\epsilon=(\epsilon^{(1)}, \epsilon^{(2)})^{\top}$ is composed of two independent dimensions: the first one   $\epsilon^{(1)}$ is a  white noise of variances $\sigma^2$, while the second one  $\epsilon^{(2)}$ is a gaussian random field. $\epsilon^{(2)}$ is associated with a Matern covariance model, with  sill, range, and nugget parameters equal to  $0.25$, $0.75$, and $\sigma$, respectively (see \cite{geor} for more details). The variables $Z_1$ and $Z_2$ are Bernoulli variables with values in $\{0,1\}$. The (deterministic) functions $h$ and $q$ are given by:  
\begin{align*} \; h(t) &= 
        3.14 \left(1-\frac{\vert t-10\vert}{4}\right)_+   &   q(s) &= -2\log \left(\sqrt{(s^{(1)}-0.5)^2 + (s^{(2)}-0.5)^2} \right) 
\end{align*}
where $(.)_+$ denotes the positive part, $t\in \mathcal{I}_1$,  $s=(s^{(1)}, s^{(2)}) \in \mathcal{I}_2 $.\\
The response variable $Y$ is constructed as follows : 
\begin{equation*}
Y = \left\{
    \begin{array}{ll}
        1 & \mbox{if } 
    Z_1Z_2 =1  \\ 
        0 & \mbox{ otherwise.}
    \end{array}
\right.
\end{equation*}

In other words, $Y=1$ if and only if both variables $Z_1$, $Z_2$ are simultaneously 1 (see Figure \ref{y_1}).
 \begin{figure}[h]
 \centering
    \begin{tabular}{ c c c  c  c c}
    $X^{(1)}(t)$ & \includegraphics[align=c, scale=.07]{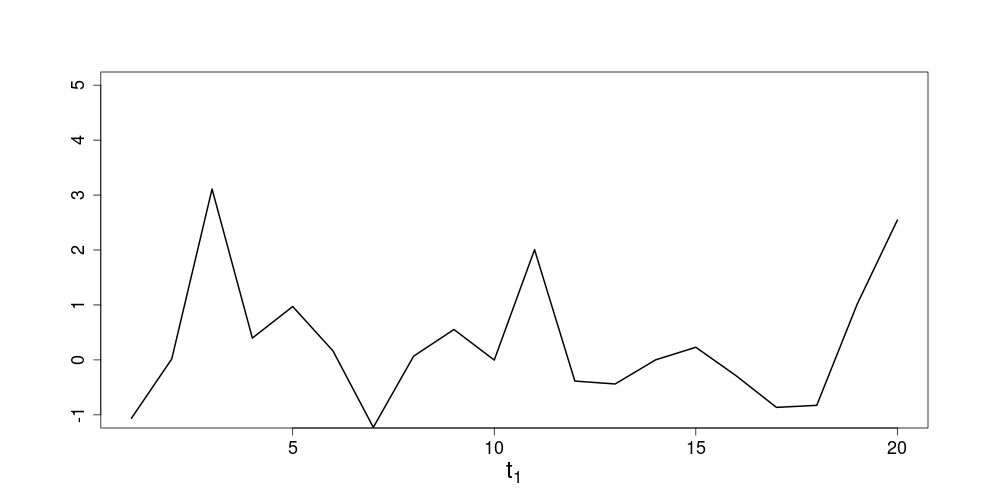}& \textbf{+} &   
        \includegraphics[align=c, scale=.07]{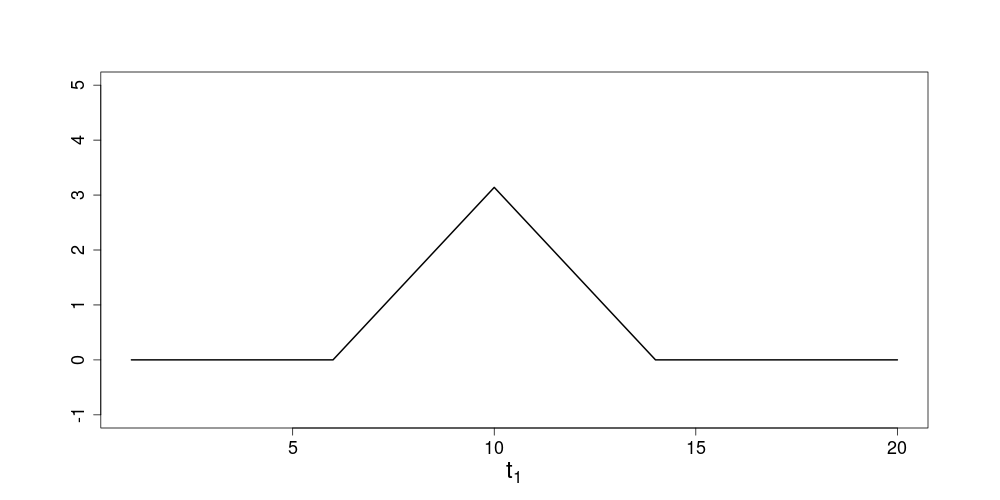}
        & \textbf{=} &  \includegraphics[align=c, , scale=.07]{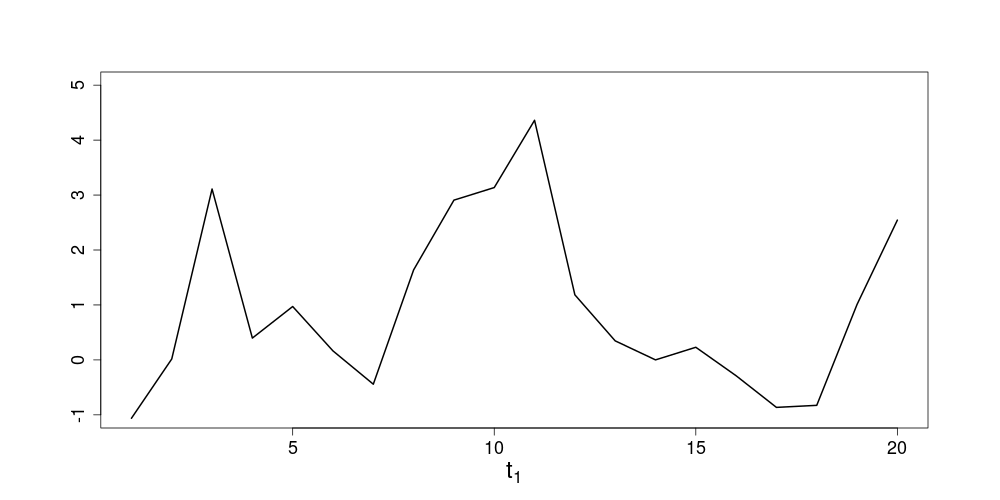} 
        \\ 
         $X^{(2)}(t) $ & \includegraphics[align=c, , scale=.07]{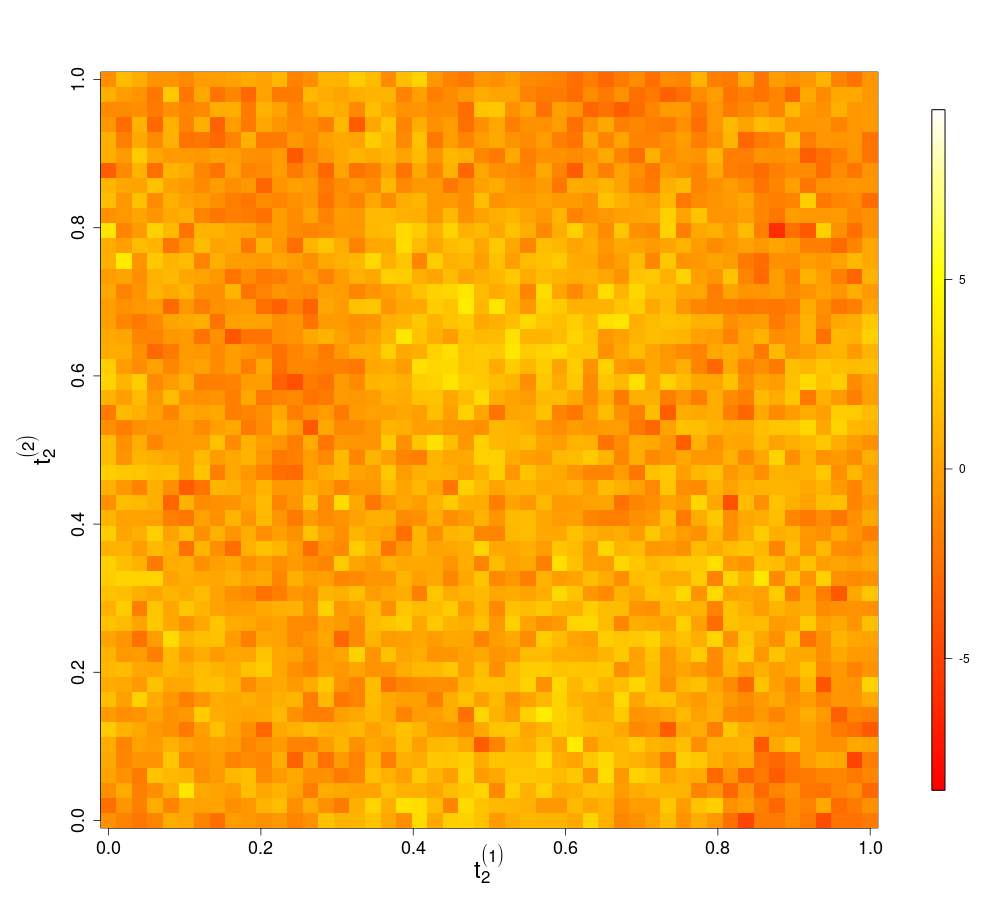}& \textbf{+} &   
        \includegraphics[align=c, , scale=.07]{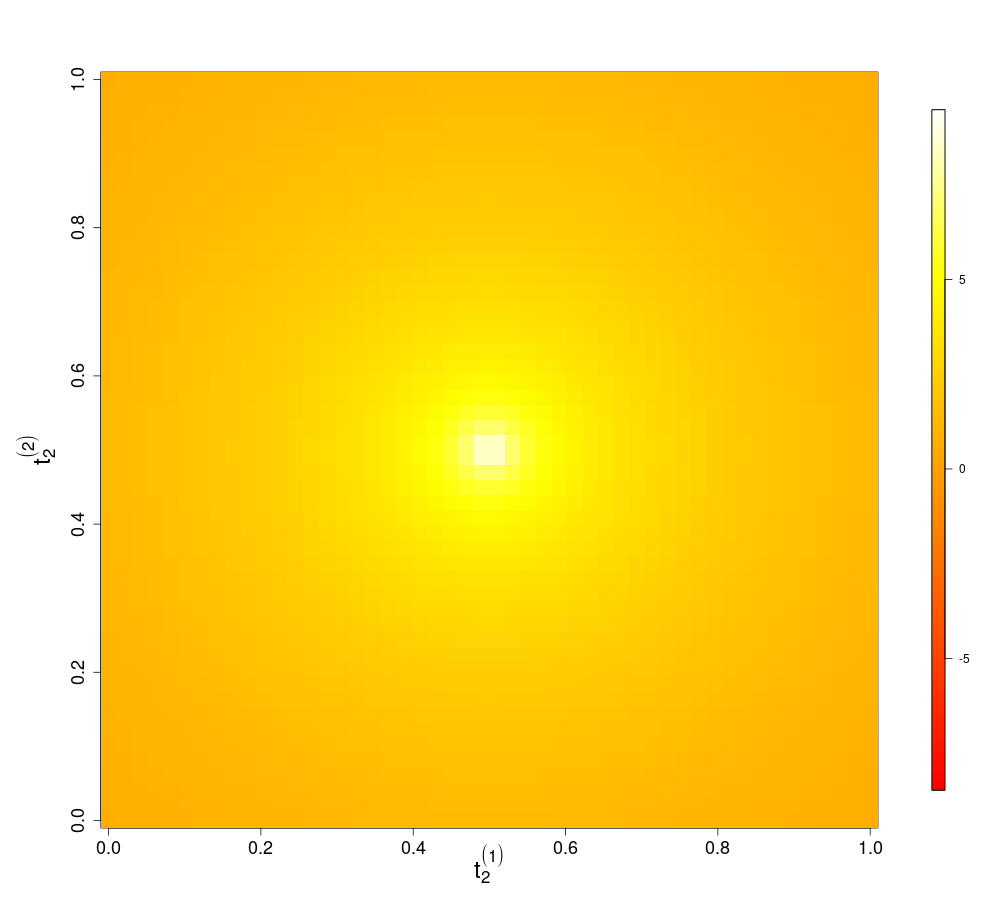}
        & \textbf{=} &  \includegraphics[align=c, , scale=.07]{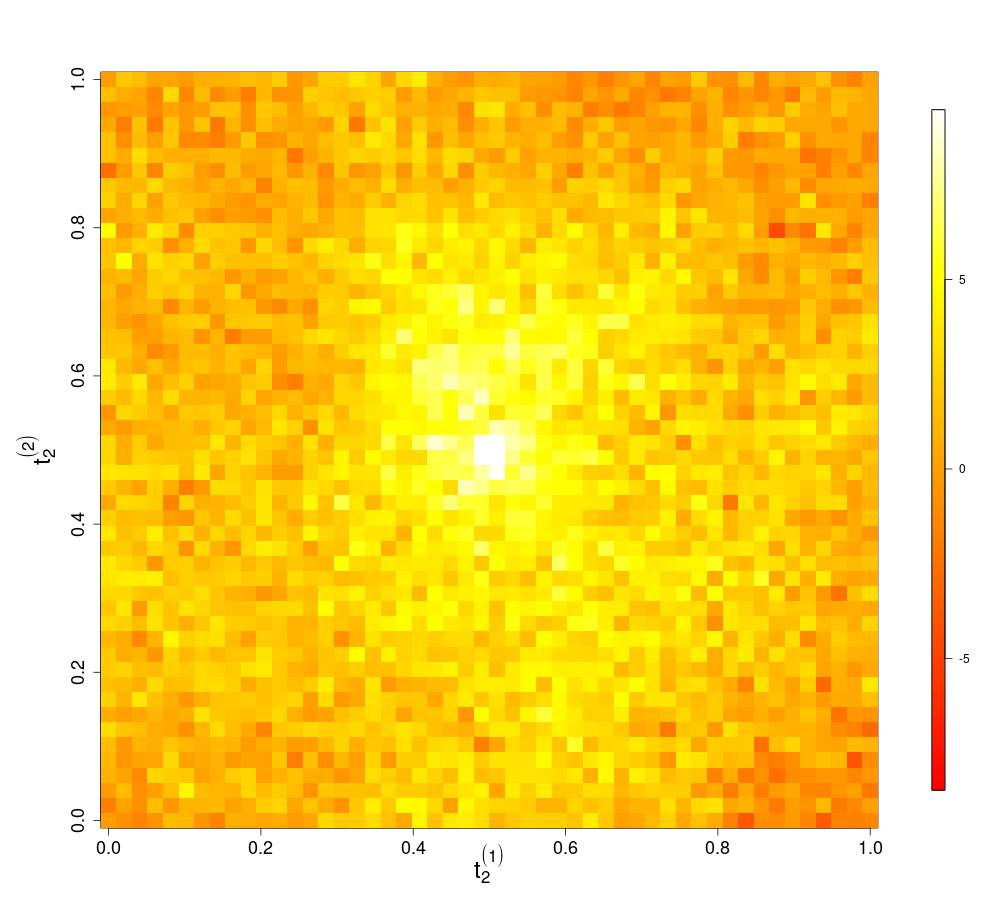} 
    \end{tabular}
    \caption{  Construction of class 1 ($Y=1$), example of curve $X(t)$ in {Setting 3}, under SNR=$0.5$  }
    If $Y$=0, $X$ is random noise $\epsilon$ (left figures).\\ 
    \label{y_1}
 \end{figure}
 \par  $50$ equidistant discrete points and $50\times 50$  pixels are observed respectively for the first and the second dimension.  
To get the functional form of $X$, the first and second components are projected respectively into the space spanned by $20$ quadratic spline functions, and the 4 two-dimensional splines \citep{rMFPCA}. \\ 
The variances of the functions $q$ and $h$ along their domain are approximately 
1. The signal-to-noise ratio (SNR) is then (approximately) the same on both dimensions and depends only on $\sigma$  
 \begin{equation*}
     \text{SNR}=\frac{1}{\sigma^2}. 
 \end{equation*}
By controlling the parameter $\sigma$, we consider several values of SNR: $0.5$, $0.7$, $1.2$, $2.1$ and $4.9$ .\par
We set $\mathbb{P}(Z_1=1)=\mathbb{P}(Z_2=1)=3/4$, then  $\mathbb{P}(Y=1) = 9/16\simeq 0.56$. A set of 500 curves are simulated: 75\% are used for learning, while the remaining 25\% is for the validation set.   
\par For each value of SNR, three MFPLS models are computed. The two first use exclusively one dimension of the predictor: MFPLS(1) uses $X^{(1)}$ and MFPLS(2) $X^{(2)}$. The third one uses both functional components (MFPLS). The purpose is to assess the amount of performance using one-dimensional domain and multiple-dimensional domain. We also compute MFPCA-LDA for comparison purposes, the principal component analysis is performed by \cite{rMFPCA} package. \\ The number of components in the two approaches: MFPLS and MFPCA-LDA are chosen by 10-fold cross-validation using AUC. We did 200 simulations: models are assessed by AUC on the validation set.\\ 
\begin{table}[ht]
\color{black}
\centering
\begin{tabular}{rrrrr} 
\hline
\textbf{SNR} & \textbf{MFPLS} & \textbf{MFPLS(1)} & \textbf{MFPLS(2)} & \textbf{MFPCA-LDA}  \\ 
\hline
0.50& 0.93(0.03)& 0.73(0.04)& 0.80(0.04)& 0.88(0.05)                \\
0.73         & 0.95(0.02)           & 0.75(0.05)             & 0.81(0.04)               & 0.95(0.02)                \\
1.16         & 0.97(0.01)           & 0.77(0.04)              & 0.81(0.04)       & 0.97(0.02)                \\
2.10         & 0.98(0.01)           & 0.82(0.04)              & 0.80(0.04)              & 0.99(0.01)                \\
4.94         & 1.00(0.01)           & 0.85(0.04)              & 0.81(0.04)              & 0.99(0.01)                \\
\hline
\end{tabular}
\caption{\color{black} Means and the standard deviations (in parentheses) of the obtained AUC on the $200$ experiences. }
\label{IM_PLS}
\end{table}

\par 
{\color{black} Table \ref{IM_PLS} shows that MFPLS gives better results than MFPCA-LDA for the lowest value of SNR,  and the difference between the methods disappears with the increase of  SNR.} Using partially the data (models MFPLS(1) and MFPLS(2)) to predict the class variable is less efficient than using both dimensions. Namely, Table \ref{IM_PLS} clearly shows the advantage of using both of the components of the functional variables.\par This simulation demonstrates the ability of our method to classify different domain data. In addition, as it's specially designed for supervised learning, it can be more effective than principal component analysis-based techniques such as MFPCA-LDA in a noisy context.   

\section{Real data application: Multivariate time series classification}
\label{appli}
In this section, we compare the proposed methods with black box models (LSTM, Random Forest, etc...) on benchmark data (Table \ref{tab_1}, from Table 1 of \citep{multivariate}), ranging from online character recognition to activity recognition.  These data, suitable for multivariate functional time series data and binary classification, have been used by various works to assess new methodologies (see e.g. \cite{m_2}, \cite{ECG}). 

\begin{table}[ht]
    \centering
    \resizebox{\textwidth}{!}{
    \begin{tabular}{l | c | r|  c| c| l }
    \textbf{Dataset} &\textit{\textbf{d} }& \textbf{T} & \textbf{Task} & \textbf{Ratio} &  \textbf{Sources} \\ \hline
    CMUsubject16 & 62 & 534 & Action Recognition & 50-50 split & \cite{carn}    \\
    ECG & 2 & 147 & ECG Classification & 50-50 split  &  \cite{rt}\\
    EEG  & 13 & 117 & EEG Classification & 50-50 split & \cite{40}\\ 
EEG2 & 64 & 256 & EEG Classification & 20-80 split & \citep{40}\\ 
    KickvsPunch & 62 & 761 & Action Recognition & 62-38 split & \cite{carn}\\
Movement AAL  & 4 & 119 & Movement Classification & 50-50 split &\cite{40} \\
NetFlow & 4 & 994 & Action Recognition  & 60-40 split & \cite{subakan} \\ 
Occupancy  & 5 & 3758 & Occupancy Classification & 35-65 split  & \cite{40}\\
Ozone & 72 & 291 & Weather Classification & 50-50 split& \cite{40}  \\
    Wafer & 6 & 198 & Manufacturing
Classification & 25-75 split  &\cite{rt} \\ 
WalkVsRun & 62 & 1918 & Action Recognition & 64-36 split  & \cite{carn} 
    \end{tabular}
    }
    \caption{ Datasets summary.
    \textbf{T} denotes the number of sampling time points, \textbf{d}: the data dimensions, and \textbf{Ratio} of the train-test split. All datasets are available in 
    \url{https://github.com/titu1994/MLSTM-FCN/releases/tag/v1.0}}
   \label{tab_1}
\end{table}

\par The proposed models (MFPLS, TMFPLS) are compared with
 discriminant analysis (MFPCA-LDA) on scores obtained  by  Multivariate functional principal component analysis \citep{rMFPCA}, and non-functional models; the Long
Short-Term Memory Fully Convolutional Network (LSTM-FCN) and Attention LSTM-FCN (ALSTM-FCN), proposed by \cite{multivariate}. We also present the benchmark of these last models named SOTA, which gives the best performances among Dynamic time warping (DTW), Random Forest (RF), SVM with a linear kernel, SVM with a 3rd-degree polynomial kernel (SVM-Poly), and other state-of-the-art methods (see \cite{multivariate} for more details). \\
The challenge is to show that our models based on regression can be competitive. The splitting of the data into training and test samples (see Table \ref{tab_1}) is that of \cite{multivariate}. The different models mentioned above are compared by the accuracy metric, the rate of well-predicted classes obtained on the test datasets.
\\
\subsection{Choice of hyperparameters}
As for some datasets (CMUsubject, KickVsPunch, etc...), the sample size is small  (less than $50$ observations, see Table \ref{res_table}) the number of components in MFPLS and MFPCA is chosen by 20-fold cross-validation (contrary to $10$-fold in previous parts).
The maximum tree depth $m^*$ is an important hyperparameter. It may significantly affect the performance of our tree-based model, as it helps {\color{black}to prevent the overfitting of TMFPLS}.  We estimate $m^*$ by cross-validation alike procedure. More precisely, we randomly take  75\% of learning data to train an intermediate TMFPLS and 25\% for pruning. This procedure is repeated  $10$ times and let $\hat{m}^*$ be the most occurred number from these $10$. The final tree is then trained on the whole learning data, with the maximum tree depth fixed to $\hat{m}^*$. As in the previous section, group are defined as $\mathcal{G}_1=1, \ldots, \mathcal{G}_d=d, \mathcal{G}_{d+1}=\{1, ..., d \} $, to see whether FPLS gives better splitting than MFPLS. Testing several combinations of dimensions takes time, the ideal choice of groups would be guided by some prior knowledge of the data structure. \\ Two strategies are used for the number of components in the decision  tree:  \textbf{TMFPLS H-1} denotes the decision tree where only one component in MFPLS is used, and \textbf{TMFPLS H-CV} is the decision tree where the number of components is estimated by 20-fold cross-validation as in MFPLS. The first tree is faster to train than the second one, and it's less likely to overfit the data. However, the second one is expected to be a more efficient model, since it is able to estimate more complex coefficient functions  $\beta$.\par 
For all functional data methods, we use 30 B-Splines basis functions by dimension to  have a functional representation \citep{ramsay2008} of each dataset ({\color{black}see 
Figure \ref{f_18} in Appendix \ref{data} for the smoothed functions}).  This number of basis functions is chosen arbitrarily small compared to the minimum number of discrete time points ($117$) of the original raw datasets.
\subsection{Results}
Table \ref{res_table} shows that, in most cases, our models (MFPLS, TMFPLS) and MFPCA-LDA are competitive with that of \cite{multivariate} and SOTA.
In about half of the cases, TMFPLS or MFPLS reach the highest or the second-highest accuracy. TMFPLS is generally more performing than MFPLS. Note also that  MFPCA-LDA is competitive with the proposed methodologies.  The main difference between MFPCA-LDA and MFPLS is that for the first one components are searched with no regard to the response variable $Y$. \\ 
For the KickVsPunch dataset, the performance of TMFPLS H-1 is better than the one by TMFPLS H-CV. This is because TMFPLS H-CV could easily overfit when the training sample is small ($\text{N}_{\text{Train}}<20$). This is one of the well-known drawbacks of the decision tree. Tuning hyperparameters is then crucial and may have a huge impact on performances. \\
\begin{table}[ht]
\centering
\captionsetup{justification=centering}
\resizebox{\textwidth}{!}{\begin{tabular}{l | l|  l|  c|  c|  c|  c|  c|  c l}
  \hline
\textbf{Datasets} & \rotatebox[]{270}{$\textbf{N}_{\textbf{Train}}$} & \rotatebox[]{270}{$\textbf{N}_{\textbf{Test}}$}  &   \textbf{MFPLS} & \textbf{TMFPLS H-1}  &\textbf{TMFPLS H-CV} & \textbf{MFPCA-LDA} & \citeauthor{multivariate} & \textbf{SOTA} & \rotatebox[]{270}{\textbf{Methods}}\\
  \hline
 CMUsubject16 & 29 & 29&   86.21 &  \underline{89.66}  & \textbf{100} & \underline{89.66} &   \textbf{100} &   \textbf{100} & [1]  \\ 
 ECG & 100 & 100& 85  & 83  & 87 &    \underline{88} & 86 &  \textbf{93}& [2]\ \\ 
 EEG &  64 & 64&  48.44 & 54.69 & 53.12 & 46.88 & \textbf{65.63} &  \underline{62.5}    & [3]  \\ 
  EEG2 & 600 & 600 &81.83 & 68.67 &    \underline{82.67} & 72.17&  \textbf{91.33}  & 77.5 & [3] \\ 
  KickVsPunch & 16 & 10  & \underline{90} & \underline{90} & 60 & 80 &   \textbf{100} &  \textbf{100} &[2]\\ 
  MovementAAL & 157& 157& \underline{67.52} & 56.69 & 53.50 & 61.78&  \textbf{79.63} &    65.61 & [4] \\ 
  NetFlow & 803 & 534 &  84.64 & 86.52 & 85.77 & 80.90 &    \underline{95} &  \textbf{98} & [2] \\ 
  Occupancy & 41 & 76&  71.05 & 61.84 & 59.21 &  \textbf{80.26}&   \underline{76}  &  67.11& [4] \\ 
  Ozone & 173& 173& 73.99 & 73.41 & 73.41 &    \underline{79.19} & \textbf{81.5}   & 75.14 &[5]\\ 
Wafer & 298 & 896& 85.04 & 87.39 &    \underline{97.99} & 97.32  &  \textbf{99} &  \textbf{99} &  [2]\\  
WalkVsRun &28 & 16 & \textbf{100} &  \textbf{100} & \textbf{100} &  \textbf{100}&  \textbf{100 } &  \textbf{100} & [2] \\
   \hline
\end{tabular}}
\caption{ Comparison of   MFPLS, TMFPLS, and other non-FDA classification methods by their accuracies (\%) in the test set. Bold metrics
correspond to the best accuracy for each dataset and underline indicate
the second best. \\ \textbf{[1]}: \cite{cmu}, \textbf{[2]}:\cite{ECG}, \textbf{[3]}:RF, \textbf{[4]}:  SVM-Poly, \textbf{[5]}: DTW }

\label{res_table}
\end{table}
\newpage

\section{Conclusion and discussion}
\label{conc}
Statistical learning of multivariate functional data evolving in complex spaces leads to challenging questions that need the development of new methods and techniques.
In this paper, we are interested in some of these methods in the case of different functional domain settings. Namely, we propose least squares regression and classification models for multivariate functional predictors. The first classification model relies on the partial least square (PLS) regression (MFPLS) while the second one (TMFPLS) combines  PLS with a decision tree. Technical arguments on the PLS methods are given.   The finite sample performance of the regression and classification models are assessed by simulations and real data (EEG, Ozone, wafer,...) applications where we compare the proposed methods with some benchmarks, in particular
a PLS regression model of the literature (\cite{MFPLS2022}) and well known principal components regression and some machine learning models.  \\
A main specificity of our proposed models is that the multivariate functional data  considered are defined on different domains compared to the literature. This allows dealing with heterogeneous types of data (e.g. images, time series, etc.) with a potentially large number of applications as shown by the given classification case study on images and functional time series. We also give a relationship between the partial least square of multivariate functional data with its univariate counterparts. To the best of our knowledge, the proposed tree classification model is new. 

The finite sample properties show our models' competitiveness with regard to some existing methods. The multivariate time series classification case study highlights the competitive performance of MFPLS and TMFPLS with black-box models (LSTM, RF,...) on benchmark data. These performances may be improved by using prior knowledge of the benchmark data (groups of variables, suitable preprocessing, ...).\\
In this paper we focus on continuous functional predictors, a possible extension of the proposed models would be including additional type (e.g, qualitative) of covariates.\\
The EEG and ozone data considered in the finite sample study may have spatial dependence. The  classification approaches seem not affected by these data dependencies, but this deserves future investigation.\\
As in a number of functional data analysis, a tuning parameter related to the number of basis functions used to smooth the raw data or reduce the dimension of the functional space, has to be selected. In this paper, we fix or use a cross-validation approach for the choice of this parameter. Other alternatives may be based on bootstrap methods or criteria like AIC, BIC.

This work highlights the good behavior of TMFPLS and a way to deal with non-linearity in classification problems of multivariate functional data. However, with heterogeneous high-dimensional data, tree-based methods may be challenging. An alternative method could be cluster-wise regression techniques by extending the univariate case studied by \cite{PLS_c} to our context. Some other methods as lasso classification techniques can also be explored (see e.g \cite{godwin2013}). 


{\color{red}}
\backmatter
\bmhead{Supplementary information}{
\color{black}The supplementary material includes  additional figures related to the numerical experiments. }












\begin{appendices}

\section{Technical arguments}
\begin{proof}[Proof of Proposition \ref{prop_mul}] Here C-S (1) and C-S (2)  stand respectively for Cauchy-Schwartz inequality on integrals and sums. 
\begin{align*}
    \text{Cov}^2(\langle \langle X, w \rangle \rangle, Y) &=\mathbb{E}^2 \left (\langle \langle X,w \rangle \rangle Y \right)\\
&=\left[\sum_{j=1}^d\left[\int_{\mathcal{I}_j}\mathbb{E}\left( X^{(j)}(t) Y\right)w^{(j)}(t)dt \right]\right]^2    \\
     \text{ \scriptsize C-S(1)}\implies \text{Cov}^2(\langle \langle X, w \rangle \rangle, Y) & {\leq} \left[\sum_{j=1}^d\left(\int_{\mathcal{I}_j}\mathbb{E}^2(X^{(j)}(t)Y)dt\right)^{1/2}  \right. \\
     & \indent \indent 
     \left. \left(\int_{\mathcal{I}_j}[w^{(j)}(t)]^2dt\right)^{1/2}\right]^2
     \\ 
\text{ \scriptsize C-S (2) }\implies \text{Cov}^2(\langle \langle X, w \rangle \rangle, Y)& \leq \left[\sum_{j=1}^d\int_{\mathcal{I}_j} \mathbb{E}^2(X^{(j)}(t)Y)dt\right] \underbrace{\left[\sum_{j=1}^d\int_{\mathcal{I}_j}[w^{(j)}(t)]^2dt\right]}_{\vert \vert \vert w\vert \vert \vert ^2=1}\\ 
   \text{Cov}^2(\langle \langle X, w \rangle \rangle, Y) & \leq \sum_{j=1}^d\int_{\mathcal{I}_j}\mathbb{E}^2(X^{(j)}(t)Y)dt 
\end{align*}
The C-S inequalities become equalities, meaning the maximums are reached, if for $j=1, \ldots, d$ there exist non-null scalars $a$ and $a'$ such as:  
\begin{itemize}
    \item $\displaystyle w^{(j)}(t) =a \mathbb{E}(X^{(j)}(t)Y), t\in \mathcal{I}_j $
    \item  $\displaystyle \left (\int_{\mathcal{I}_j}[w^{(j)}(t)]^2dt\right)^{1/2}$ $=a'\left (\int_{\mathcal{I}_j}\mathbb{E}^2( X^{(j)}(t)Y)dt\right)^{1/2}$.
\end{itemize}
 The first condition implies the second one, indeed if $\displaystyle w^{(j)}(t)=a\mathbb{E}(X^{(j)}(t)Y)$ then  $\displaystyle \left(\int_{\mathcal{I}_j}[w^{(j)}(t)]^2dt\right)^{1/2}= \vert a\vert \left(\int_{\mathcal{I}_j}\mathbb{E}^2(X^{(j)}(t)Y)dt\right)^{1/2}$, hence $a'=\vert a\vert $. \\ 
To have $\vert \vert \vert w\vert \vert \vert =1$, we take $a=\displaystyle \left(\sum_{j=1}^p\int_{\mathcal{I}_j}\mathbb{E}^2(X^{(j)}(t)Y)dt\right)^{-1/2} $. \\  Thus, the solution of \eqref{f_1} is   
\begin{equation}
    w^{(j)}(t)=\frac{\mathbb{E}(X^{(i)}(t)Y)}{\left(\sum_{j=1}^p\int_{\mathcal{I}_j}\mathbb{E}^2(X^{(j)}(t)Y)dt\right)^{1/2}}, \; t \in \mathcal{I}_j. 
\end{equation}
\end{proof}

\begin{proof}[Proof of Proposition \ref{prop_m}]  $X$  first order residual definition is $X = \xi_1 \rho_1 + X_{1}$, where $X_{1}$ holds 
\begin{equation}
    \mathbb{E}(\xi_1X_1 ) = 0_{\mathbb{R}^d} \iff  \mathbb{E}(\xi_1X_1^{(j)} (t) )= 0 \; t\in \mathcal{I}_j, \;  1 \leq j \leq d. 
    \label{orth} 
\end{equation}
Analogously higher-order residuals also verify
\begin{equation}
    \mathbb{E}(\xi_h X_h)= 0_{\mathbb{R}^{d} } 
    \label{h_orthog} \text{  }\forall h \in \mathbb{N}^*. \\
\end{equation}
To show that $\{\xi_k\}_{k=1}^h$ forms an orthogonal system, we use a proof by induction, similarly to  \cite{PLS1995}.\\ \\
The base case verifies. Indeed, \eqref{orth} implies that 
$$
\mathbb{E}(\xi_1 \xi_2)= \sum_{j=1}^{d} \int_{\mathcal{I}_j } \mathbb{E}\left(\xi_1 X_{1}^{(j)}(t)\right) w_2^{(j)}(t)dt= 0.
$$
Assume the induction hypothesis $\mathcal{H}_0$,  $\mathcal{H}_0$: $\{ \xi_k\}_{k=1}^{h} $ \textit{forms an orthogonal system } $h\geq 1$  
\begin{align*}
    \mathbb{E}(\xi_{h}
    \xi_{h+1}) &=   \sum_{j=1}^{d} \int_{\mathcal{I}_j} \mathbb{E}\left(\xi_h X_{h}^{(j)}(t)\right) w_{h+1}^{(j)}(t)dt \\ 
    (\ref{h_orthog}) \implies \mathbb{E}(\xi_{h}\xi_{h+1}) &= 0 \\
    \mathbb{E}(\xi_{h-1}\xi_{h+1})& = \sum_{j=1}^{d}  \int_{\mathcal{I}_j} \mathbb{E}\left(\xi_{h-1} X_{h}^{(j)}(t)\right) w_{h+1}^{(j)}(t)dt \\ 
 \text{Since } X_{h-1}  =  \rho_{h}\xi_h+X_h \\  \implies \mathbb{E}(\xi_{h-1}\xi_{h+1}) &=
 \sum_{j=1}^{d}  \int_{\mathcal{I}_j } \underbrace{\mathbb{E}\left(\xi_{h-1}X_{h-1}^{(j)}(t)\right)}_{= 0 \text{ by } (\ref{h_orthog})}dt\\ &\; \; -\rho_{h}^{(j)}(t) \int_{\mathcal{I}_j } \underbrace{\mathbb{E}\left( \xi_{h-1}\xi_h \right)}_{= 0 \text{ by } \mathcal{H}_0}\sum_{j=1}^{d}w_{h+1}^{(j)}(t) dt,
 \\
\text{then } \mathbb{E}(\xi_{h-1}\xi_{h+1})  &= 0
\end{align*}
The same procedure can be used to show that  $\mathbb{E}(\xi_j \xi_{h+1} )$ = 0 $\forall j \leq h-2$. Hence, $\{\xi_k\}_{k=1}^h $ forms an orthogonal system $\forall h\geq 1$.\\ 
The expansion formulas are implications of this point.
\end{proof}
\begin{proof}[Proof of Lemma \ref{lemm_multi} ]
For $h =1$, we have  $v_1= w_1$, as $\xi_1=\langle \langle X, w_1\rangle \rangle $,  the base case verifies.\\
Assume that $\langle \langle X, v_{ j} \rangle\rangle = \xi_j$ is true up to order $h$ ($\forall j \leq h$). \\
Recall that, 
\begin{equation}
    \xi_{h+1} = \langle \langle X_{h}, w_{h+1}\rangle \rangle .
\end{equation}
The second equation of  Proposition  \ref{prop_m}, gives that $$X_{h}=  X - \sum_{i=1}^{h} \rho_i \langle \langle v_i, X \rangle \rangle$$.\\ Then   
\begin{equation*}
\begin{split}
  \xi_{h+1}  = \langle \langle X, w_{h+1} \rangle \rangle   - \sum_{i=1}^{h} \langle \langle v_i,  X \rangle \rangle  \langle \langle \rho_i, w_{h+1} \rangle \rangle  = \langle \langle X,  \underbrace{w_{h+1} - \sum_{i=1}^h\langle \langle  \rho_i,  w_{h+1}\rangle  \rangle  v_i }_{v_{h+1} }\rangle \rangle 
\end{split}
\end{equation*}
This concludes the proof. 
\\ 
\end{proof}

\section{Additional figures}
\label{data}
\begin{figure}[ht]
    \centering
     \resizebox{\textwidth}{!}{
\begin{tabular}{l |c |c}
\centering
\textbf{Dataset} & \textbf{Raw} & \textbf{Smoothed}\\ \hline \\ 
ECG&  \includegraphics[align=c, scale=.13]{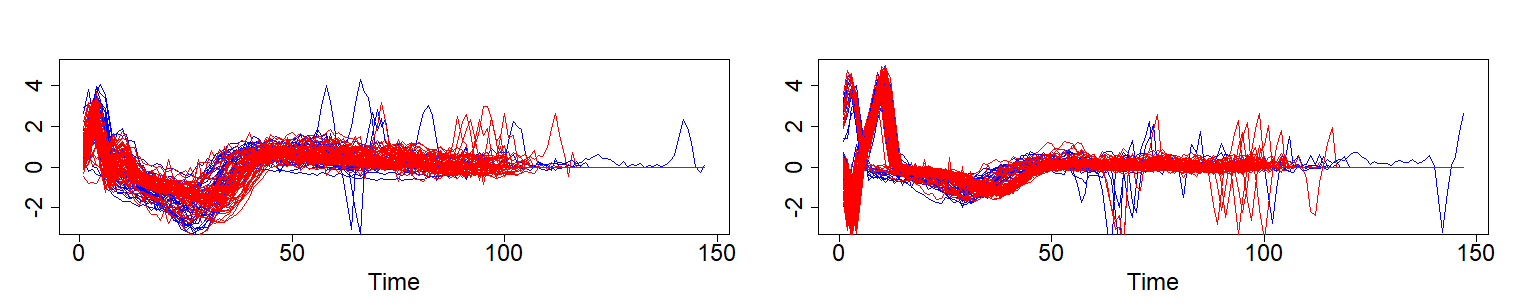}& \includegraphics[align=c, scale=.13]{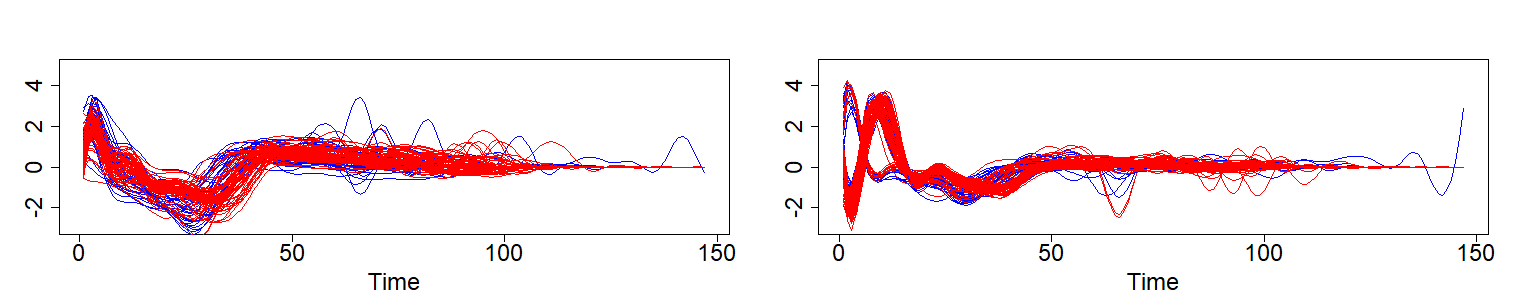} \\
    Netflow &\includegraphics[align=c,scale=.13]{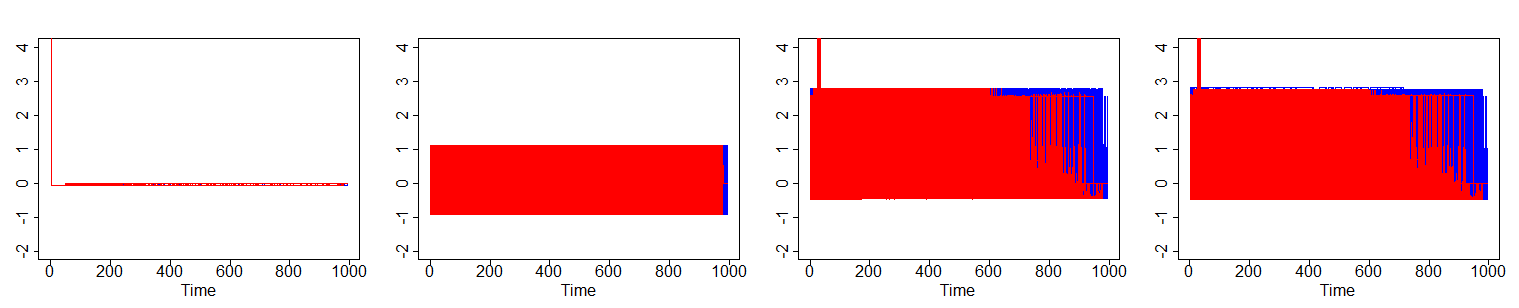}&  \includegraphics[align=c,scale=.13]{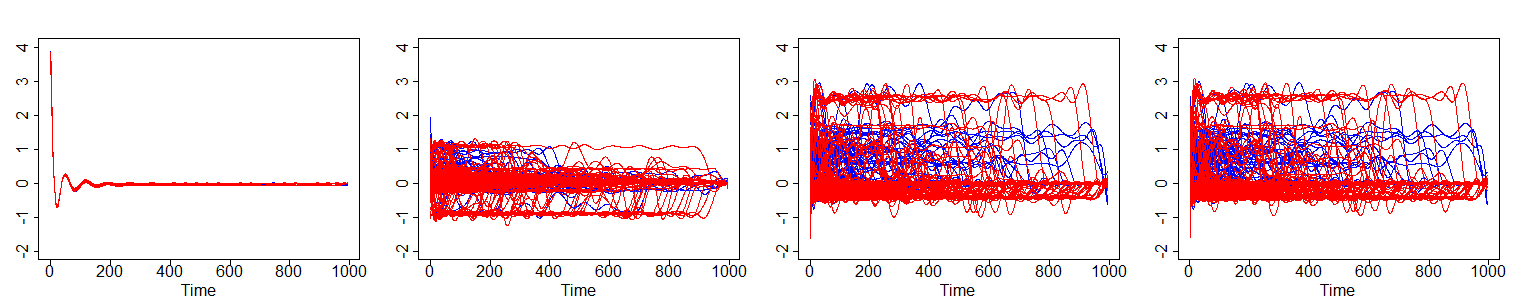} \\ 
    Occupancy &\includegraphics[align=c, scale=.13]{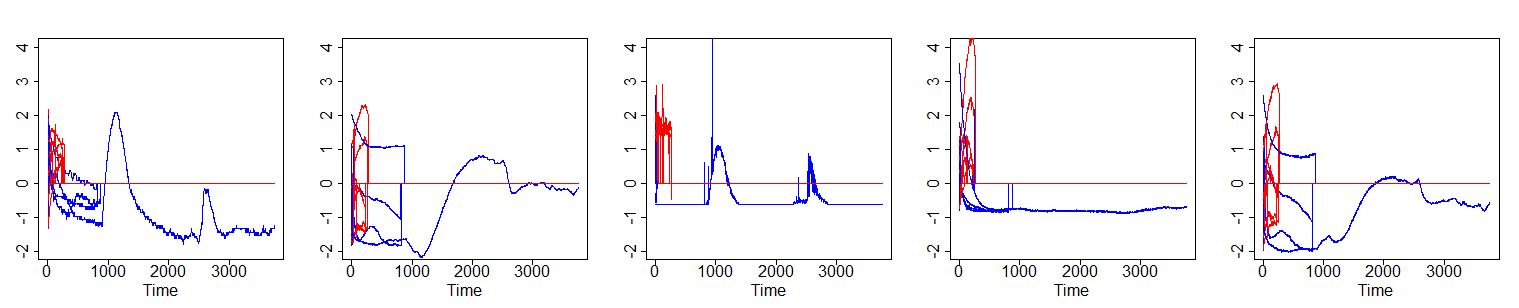} & \includegraphics[align=c,scale=.13]{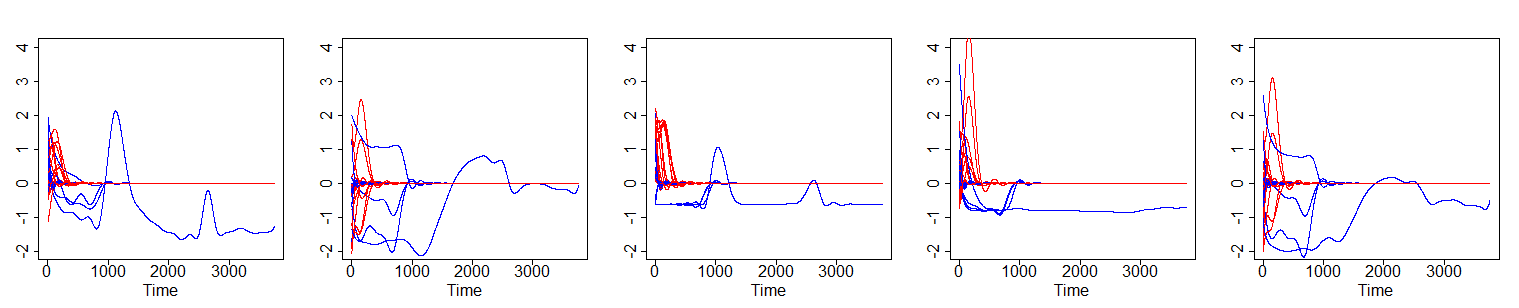}\\     
    Movement All &\includegraphics[align=c, scale=.13]{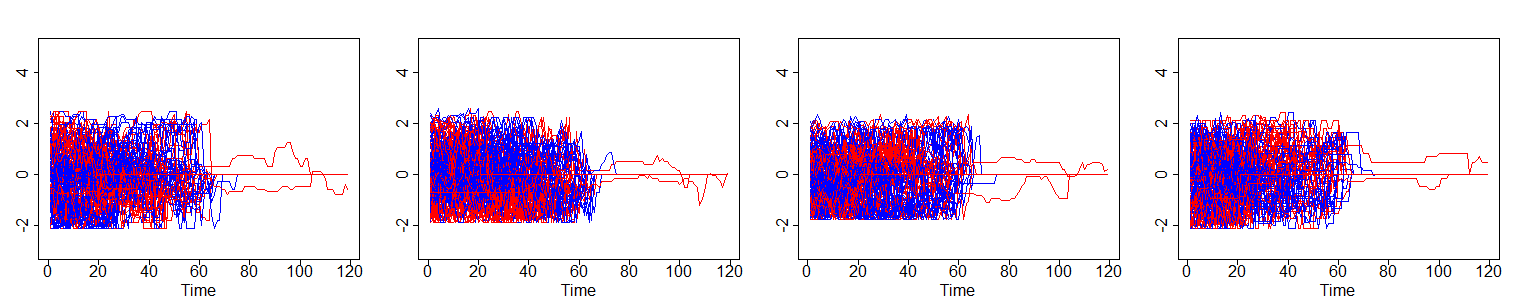} & \includegraphics[align=c,scale=.13]{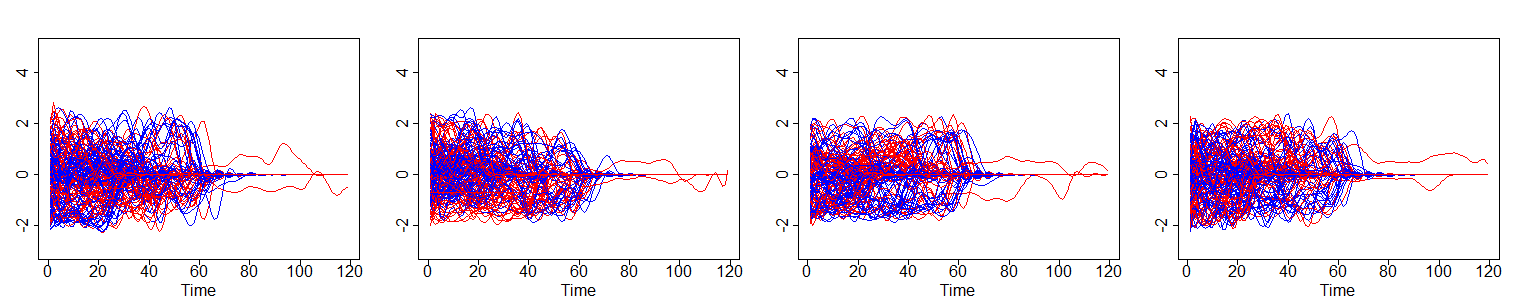}\\ 
     Wafer & \includegraphics[align=c, scale=.2]{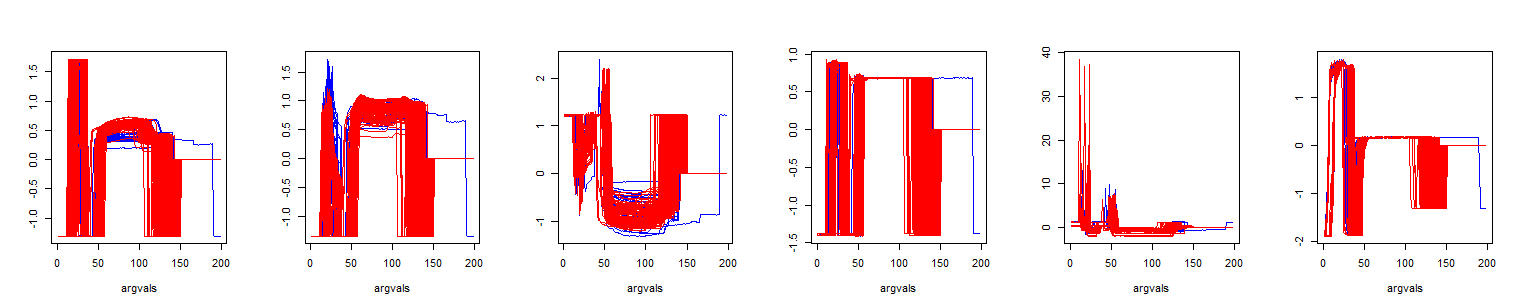} &  \includegraphics[align=c, scale=.2]{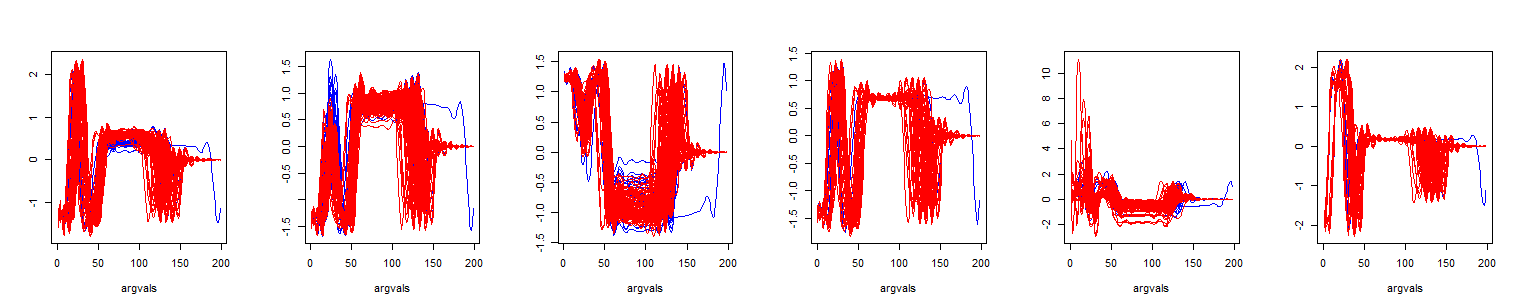}
\end{tabular}
}
\caption{ \color{black} Presentation of some of the benchmark datasets (those with less than 10 dimensions). The smoothing is done using a basis of $30$ splines functions for each dimension.}
\label{f_18}
\end{figure}

\end{appendices}


\bibliography{ref.bib}

\begin{thebibliography}{}
\providecommand{\doi}[1]{\url{https://doi.org/#1}}
\bibcommenthead

\bibitem [\protect \citeauthoryear {%
Abd El-Samie%
, Alotaiby%
, Khalid%
, Alshebeili%
\BCBL {}\ \BBA {} Aldosari%
}{%
Abd El-Samie%
\ \protect \BOthers {.}}{%
{\protect \APACyear {2018}}%
}]{%
spikes}
\APACinsertmetastar {%
spikes}%
\begin{APACrefauthors}%
Abd El-Samie, F.E.%
, Alotaiby, T.N.%
, Khalid, M.I.%
, Alshebeili, S.A.%
\BCBL {} Aldosari, S.A.%
\end{APACrefauthors}%
\unskip\
\newblock
\APACrefYearMonthDay{2018}{}{}.
\newblock
{\BBOQ}\APACrefatitle {A review of EEG and MEG epileptic spike detection
  algorithms} {A review of eeg and meg epileptic spike detection
  algorithms}.{\BBCQ}
\newblock
\APACjournalVolNumPages{IEEE Access}{6}{}{60673--60688}.
\newblock

\newblock

\PrintBackRefs{\CurrentBib}

\bibitem [\protect \citeauthoryear {%
Aguilera%
, Escabias%
, Preda%
\BCBL {}\ \BBA {} Saporta%
}{%
Aguilera%
\ \protect \BOthers {.}}{%
{\protect \APACyear {2010}}%
}]{%
pls_exp}
\APACinsertmetastar {%
pls_exp}%
\begin{APACrefauthors}%
Aguilera, A.M.%
, Escabias, M.%
, Preda, C.%
\BCBL {} Saporta, G.%
\end{APACrefauthors}%
\unskip\
\newblock
\APACrefYearMonthDay{2010}{}{}.
\newblock
{\BBOQ}\APACrefatitle {Using basis expansions for estimating functional PLS
  regression: applications with chemometric data} {Using basis expansions for
  estimating functional pls regression: applications with chemometric
  data}.{\BBCQ}
\newblock
\APACjournalVolNumPages{Chemometrics and Intelligent Laboratory
  Systems}{104}{2}{289--305}.
\newblock

\newblock

\PrintBackRefs{\CurrentBib}

\bibitem [\protect \citeauthoryear {%
Beyaztas%
\ \BBA {} Shang%
}{%
Beyaztas%
\ \BBA {} Shang%
}{%
{\protect \APACyear {2022}}%
}]{%
MFPLS2022}
\APACinsertmetastar {%
MFPLS2022}%
\begin{APACrefauthors}%
Beyaztas, U.%
\BCBT {}\ \BBA {} Shang, H.L.%
\end{APACrefauthors}%
\unskip\
\newblock
\APACrefYearMonthDay{2022}{}{}.
\newblock
{\BBOQ}\APACrefatitle {A Robust Functional Partial Least Squares for
  Scalar-on-Multiple-Function Regression} {A robust functional partial least
  squares for scalar-on-multiple-function regression}.{\BBCQ}
\newblock
\APACjournalVolNumPages{Journal of Chemometrics}{}{}{e3394}.
\newblock

\newblock

\PrintBackRefs{\CurrentBib}

\bibitem [\protect \citeauthoryear {%
Blanquero%
, Carrizosa%
, Jim{\'e}nez-Cordero%
\BCBL {}\ \BBA {} Mart{\'\i}n-Barrag{\'a}n%
}{%
Blanquero%
\ \protect \BOthers {.}}{%
{\protect \APACyear {2019}}%
{\protect \APACexlab {{\protect \BCnt {1}}}}}]{%
blanquero2019_2}
\APACinsertmetastar {%
blanquero2019_2}%
\begin{APACrefauthors}%
Blanquero, R.%
, Carrizosa, E.%
, Jim{\'e}nez-Cordero, A.%
\BCBL {} Mart{\'\i}n-Barrag{\'a}n, B.%
\end{APACrefauthors}%
\unskip\
\newblock
\APACrefYearMonthDay{2019{\protect \BCnt {1}}}{}{}.
\newblock
{\BBOQ}\APACrefatitle {Functional-bandwidth kernel for support vector machine
  with functional data: an alternating optimization algorithm}
  {Functional-bandwidth kernel for support vector machine with functional data:
  an alternating optimization algorithm}.{\BBCQ}
\newblock
\APACjournalVolNumPages{European Journal of Operational
  Research}{275}{1}{195--207}.
\newblock

\newblock

\PrintBackRefs{\CurrentBib}

\bibitem [\protect \citeauthoryear {%
Blanquero%
, Carrizosa%
, Jim{\'e}nez-Cordero%
\BCBL {}\ \BBA {} Mart{\'\i}n-Barrag{\'a}n%
}{%
Blanquero%
\ \protect \BOthers {.}}{%
{\protect \APACyear {2019}}%
{\protect \APACexlab {{\protect \BCnt {2}}}}}]{%
blanquero2019}
\APACinsertmetastar {%
blanquero2019}%
\begin{APACrefauthors}%
Blanquero, R.%
, Carrizosa, E.%
, Jim{\'e}nez-Cordero, A.%
\BCBL {} Mart{\'\i}n-Barrag{\'a}n, B.%
\end{APACrefauthors}%
\unskip\
\newblock
\APACrefYearMonthDay{2019{\protect \BCnt {2}}}{}{}.
\newblock
{\BBOQ}\APACrefatitle {Variable selection in classification for multivariate
  functional data} {Variable selection in classification for multivariate
  functional data}.{\BBCQ}
\newblock
\APACjournalVolNumPages{Information Sciences}{481}{}{445--462}.
\newblock

\newblock

\PrintBackRefs{\CurrentBib}

\bibitem [\protect \citeauthoryear {%
Cardot%
, Ferraty%
\BCBL {}\ \BBA {} Sarda%
}{%
Cardot%
\ \protect \BOthers {.}}{%
{\protect \APACyear {1999}}%
}]{%
cardot1999}
\APACinsertmetastar {%
cardot1999}%
\begin{APACrefauthors}%
Cardot, H.%
, Ferraty, F.%
\BCBL {} Sarda, P.%
\end{APACrefauthors}%
\unskip\
\newblock
\APACrefYearMonthDay{1999}{}{}.
\newblock
{\BBOQ}\APACrefatitle {Functional linear model} {Functional linear
  model}.{\BBCQ}
\newblock
\APACjournalVolNumPages{Statistics \& Probability Letters}{45}{1}{11--22}.
\newblock

\newblock

\PrintBackRefs{\CurrentBib}

\bibitem [\protect \citeauthoryear {%
Carnegie%
}{%
Carnegie%
}{%
{\protect \APACyear {0}}%
}]{%
carn}
\APACinsertmetastar {%
carn}%
\begin{APACrefauthors}%
Carnegie%
\end{APACrefauthors}%
\unskip\
\newblock
\APACrefYearMonthDay{0}{}{}.
\newblock
\APACrefbtitle {{Carnegie Mellon University}- cmu graphics lab - motion capture
  library.} {{Carnegie Mellon University}- cmu graphics lab - motion capture
  library.}
\newblock
\APAChowpublished {\url{http://mocap.cs.cmu.edu/}}.
\newblock
\APACrefnote{Accessed: 2022-05}
\PrintBackRefs{\CurrentBib}

\bibitem [\protect \citeauthoryear {%
Delaigle%
\ \BBA {} Hall%
}{%
Delaigle%
\ \BBA {} Hall%
}{%
{\protect \APACyear {2012}}%
}]{%
delaigle2012}
\APACinsertmetastar {%
delaigle2012}%
\begin{APACrefauthors}%
Delaigle, A.%
\BCBT {}\ \BBA {} Hall, P.%
\end{APACrefauthors}%
\unskip\
\newblock
\APACrefYearMonthDay{2012}{}{}.
\newblock
{\BBOQ}\APACrefatitle {Methodology and theory for partial least squares applied
  to functional data} {Methodology and theory for partial least squares applied
  to functional data}.{\BBCQ}
\newblock
\APACjournalVolNumPages{The Annals of Statistics}{40}{1}{322--352}.
\newblock

\newblock

\PrintBackRefs{\CurrentBib}

\bibitem [\protect \citeauthoryear {%
Dembowska%
, Liu%
, Houwing-Duistermaat%
\BCBL {}\ \BBA {} Frangi%
}{%
Dembowska%
\ \protect \BOthers {.}}{%
{\protect \APACyear {2021}}%
}]{%
FPLS0}
\APACinsertmetastar {%
FPLS0}%
\begin{APACrefauthors}%
Dembowska, S.%
, Liu, H.%
, Houwing-Duistermaat, J.%
\BCBL {} Frangi, A.%
\end{APACrefauthors}%
\unskip\
\newblock
\APACrefYearMonthDay{2021}{}{}.
\newblock
{\BBOQ}\APACrefatitle {Multivariate functional partial least squares for
  classification using longitudinal data} {Multivariate functional partial
  least squares for classification using longitudinal data}.{\BBCQ}
\newblock
\APACjournalVolNumPages{Multivariate functional partial least squares for
  classification using longitudinal data}{}{}{75--88}.
\newblock

\newblock

\PrintBackRefs{\CurrentBib}

\bibitem [\protect \citeauthoryear {%
Escabias%
, Aguilera%
\BCBL {}\ \BBA {} Valderrama%
}{%
Escabias%
\ \protect \BOthers {.}}{%
{\protect \APACyear {2004}}%
}]{%
escabias2004principal}
\APACinsertmetastar {%
escabias2004principal}%
\begin{APACrefauthors}%
Escabias, M.%
, Aguilera, A.%
\BCBL {} Valderrama, M.%
\end{APACrefauthors}%
\unskip\
\newblock
\APACrefYearMonthDay{2004}{}{}.
\newblock
{\BBOQ}\APACrefatitle {Principal component estimation of functional logistic
  regression: discussion of two different approaches} {Principal component
  estimation of functional logistic regression: discussion of two different
  approaches}.{\BBCQ}
\newblock
\APACjournalVolNumPages{Journal of Nonparametric
  Statistics}{16}{3-4}{365--384}.
\newblock

\newblock

\PrintBackRefs{\CurrentBib}

\bibitem [\protect \citeauthoryear {%
Ferraty%
\ \BBA {} Vieu%
}{%
Ferraty%
\ \BBA {} Vieu%
}{%
{\protect \APACyear {2003}}%
}]{%
ferraty}
\APACinsertmetastar {%
ferraty}%
\begin{APACrefauthors}%
Ferraty, F.%
\BCBT {}\ \BBA {} Vieu, P.%
\end{APACrefauthors}%
\unskip\
\newblock
\APACrefYearMonthDay{2003}{}{}.
\newblock
{\BBOQ}\APACrefatitle {Curves discrimination: a nonparametric functional
  approach} {Curves discrimination: a nonparametric functional
  approach}.{\BBCQ}
\newblock
\APACjournalVolNumPages{Computational Statistics \& Data
  Analysis}{44}{1}{161-173}.
\newblock
\APACrefnote{Special Issue in Honour of Stan Azen: a Birthday Celebration}
\newblock

\newblock

\PrintBackRefs{\CurrentBib}

\bibitem [\protect \citeauthoryear {%
Fukushima%
}{%
Fukushima%
}{%
{\protect \APACyear {1988}}%
}]{%
visual}
\APACinsertmetastar {%
visual}%
\begin{APACrefauthors}%
Fukushima, K.%
\end{APACrefauthors}%
\unskip\
\newblock
\APACrefYearMonthDay{1988}{}{}.
\newblock
{\BBOQ}\APACrefatitle {A neural network for visual pattern recognition} {A
  neural network for visual pattern recognition}.{\BBCQ}
\newblock
\APACjournalVolNumPages{Computer}{21}{3}{65--75}.
\newblock

\newblock

\PrintBackRefs{\CurrentBib}

\bibitem [\protect \citeauthoryear {%
Galeano%
, Joseph%
\BCBL {}\ \BBA {} Lillo%
}{%
Galeano%
\ \protect \BOthers {.}}{%
{\protect \APACyear {2015}}%
}]{%
galeano2015}
\APACinsertmetastar {%
galeano2015}%
\begin{APACrefauthors}%
Galeano, P.%
, Joseph, E.%
\BCBL {} Lillo, R.E.%
\end{APACrefauthors}%
\unskip\
\newblock
\APACrefYearMonthDay{2015}{}{}.
\newblock
{\BBOQ}\APACrefatitle {The Mahalanobis distance for functional data with
  applications to classification} {The mahalanobis distance for functional data
  with applications to classification}.{\BBCQ}
\newblock
\APACjournalVolNumPages{Technometrics}{57}{2}{281--291}.
\newblock

\newblock

\PrintBackRefs{\CurrentBib}

\bibitem [\protect \citeauthoryear {%
Gardner-Lubbe%
}{%
Gardner-Lubbe%
}{%
{\protect \APACyear {2021}}%
}]{%
LDA}
\APACinsertmetastar {%
LDA}%
\begin{APACrefauthors}%
Gardner-Lubbe, S.%
\end{APACrefauthors}%
\unskip\
\newblock
\APACrefYearMonthDay{2021}{}{}.
\newblock
{\BBOQ}\APACrefatitle {Linear discriminant analysis for multiple functional
  data analysis} {Linear discriminant analysis for multiple functional data
  analysis}.{\BBCQ}
\newblock
\APACjournalVolNumPages{Journal of Applied Statistics}{48}{11}{1917-1933}.
\newblock

\newblock

\PrintBackRefs{\CurrentBib}

\bibitem [\protect \citeauthoryear {%
Godwin%
}{%
Godwin%
}{%
{\protect \APACyear {2013}}%
}]{%
godwin2013}
\APACinsertmetastar {%
godwin2013}%
\begin{APACrefauthors}%
Godwin, J.%
\end{APACrefauthors}%
\unskip\
\newblock
\APACrefYear{2013}.
\unskip\
\newblock
\APACrefbtitle {Group Lasso for Functional Logistic Regression} {Group lasso
  for functional logistic regression}\ \APACtypeAddressSchool {\BUMTh}{}{}.
\PrintBackRefs{\CurrentBib}

\bibitem [\protect \citeauthoryear {%
Golovkine%
, Klutchnikoff%
\BCBL {}\ \BBA {} Patilea%
}{%
Golovkine%
\ \protect \BOthers {.}}{%
{\protect \APACyear {2022}}%
}]{%
golovkine}
\APACinsertmetastar {%
golovkine}%
\begin{APACrefauthors}%
Golovkine, S.%
, Klutchnikoff, N.%
\BCBL {} Patilea, V.%
\end{APACrefauthors}%
\unskip\
\newblock
\APACrefYearMonthDay{2022}{}{}.
\newblock
{\BBOQ}\APACrefatitle {Clustering multivariate functional data using
  unsupervised binary trees} {Clustering multivariate functional data using
  unsupervised binary trees}.{\BBCQ}
\newblock
\APACjournalVolNumPages{Computational Statistics \& Data
  Analysis}{168}{}{107376}.
\newblock

\newblock

\PrintBackRefs{\CurrentBib}

\bibitem [\protect \citeauthoryear {%
G{\'o}recki%
, Krzy{\'s}ko%
\BCBL {}\ \BBA {} Wo{\l}y{\'n}ski%
}{%
G{\'o}recki%
\ \protect \BOthers {.}}{%
{\protect \APACyear {2015}}%
}]{%
gorecki}
\APACinsertmetastar {%
gorecki}%
\begin{APACrefauthors}%
G{\'o}recki, T.%
, Krzy{\'s}ko, M.%
\BCBL {} Wo{\l}y{\'n}ski, W.%
\end{APACrefauthors}%
\unskip\
\newblock
\APACrefYearMonthDay{2015}{}{}.
\newblock
{\BBOQ}\APACrefatitle {Classification problems based on regression models for
  multi-dimensional functional data} {Classification problems based on
  regression models for multi-dimensional functional data}.{\BBCQ}
\newblock
\APACjournalVolNumPages{Statistics in Transition new series}{16}{1}{}.
\newblock

\newblock

\PrintBackRefs{\CurrentBib}

\bibitem [\protect \citeauthoryear {%
Guan%
, Lin%
, Groves%
\BCBL {}\ \BBA {} Cao%
}{%
Guan%
\ \protect \BOthers {.}}{%
{\protect \APACyear {2022}}%
}]{%
sparsefpls}
\APACinsertmetastar {%
sparsefpls}%
\begin{APACrefauthors}%
Guan, T.%
, Lin, Z.%
, Groves, K.%
\BCBL {} Cao, J.%
\end{APACrefauthors}%
\unskip\
\newblock
\APACrefYearMonthDay{2022}{}{}.
\newblock
{\BBOQ}\APACrefatitle {Sparse functional partial least squares regression with
  a locally sparse slope function} {Sparse functional partial least squares
  regression with a locally sparse slope function}.{\BBCQ}
\newblock
\APACjournalVolNumPages{Statistics and Computing}{32}{2}{1--11}.
\newblock

\newblock

\PrintBackRefs{\CurrentBib}

\bibitem [\protect \citeauthoryear {%
Happ%
}{%
Happ%
}{%
{\protect \APACyear {2017}}%
}]{%
rMFPCA}
\APACinsertmetastar {%
rMFPCA}%
\begin{APACrefauthors}%
Happ, C.%
\end{APACrefauthors}%
\unskip\
\newblock
\APACrefYearMonthDay{2017}{}{}.
\newblock
{\BBOQ}\APACrefatitle {Object-oriented software for functional data}
  {Object-oriented software for functional data}.{\BBCQ}
\newblock
\APACjournalVolNumPages{arXiv preprint arXiv:1707.02129}{}{}{}.
\newblock

\newblock

\PrintBackRefs{\CurrentBib}

\bibitem [\protect \citeauthoryear {%
Happ%
\ \BBA {} Greven%
}{%
Happ%
\ \BBA {} Greven%
}{%
{\protect \APACyear {2018}}%
}]{%
happ2018multivariate}
\APACinsertmetastar {%
happ2018multivariate}%
\begin{APACrefauthors}%
Happ, C.%
\BCBT {}\ \BBA {} Greven, S.%
\end{APACrefauthors}%
\unskip\
\newblock
\APACrefYearMonthDay{2018}{}{}.
\newblock
{\BBOQ}\APACrefatitle {Multivariate functional principal component analysis for
  data observed on different (dimensional) domains} {Multivariate functional
  principal component analysis for data observed on different (dimensional)
  domains}.{\BBCQ}
\newblock
\APACjournalVolNumPages{Journal of the American Statistical
  Association}{113}{522}{649--659}.
\newblock

\newblock

\PrintBackRefs{\CurrentBib}

\bibitem [\protect \citeauthoryear {%
Hochreiter%
\ \BBA {} Schmidhuber%
}{%
Hochreiter%
\ \BBA {} Schmidhuber%
}{%
{\protect \APACyear {1997}}%
}]{%
LSTM}
\APACinsertmetastar {%
LSTM}%
\begin{APACrefauthors}%
Hochreiter, S.%
\BCBT {}\ \BBA {} Schmidhuber, J.%
\end{APACrefauthors}%
\unskip\
\newblock
\APACrefYearMonthDay{1997}{}{}.
\newblock
{\BBOQ}\APACrefatitle {Long Short-Term Memory} {Long short-term memory}.{\BBCQ}
\newblock
\APACjournalVolNumPages{Neural Computation}{9}{8}{1735-1780}.
\newblock

\newblock

\PrintBackRefs{\CurrentBib}

\bibitem [\protect \citeauthoryear {%
Jacques%
\ \BBA {} Preda%
}{%
Jacques%
\ \BBA {} Preda%
}{%
{\protect \APACyear {2014}}%
}]{%
MFPCA_1}
\APACinsertmetastar {%
MFPCA_1}%
\begin{APACrefauthors}%
Jacques, J.%
\BCBT {}\ \BBA {} Preda, C.%
\end{APACrefauthors}%
\unskip\
\newblock
\APACrefYearMonthDay{2014}{}{}.
\newblock
{\BBOQ}\APACrefatitle {Model-based clustering for multivariate functional data}
  {Model-based clustering for multivariate functional data}.{\BBCQ}
\newblock
\APACjournalVolNumPages{Computational Statistics \& Data
  Analysis}{71}{}{92--106}.
\newblock

\newblock

\PrintBackRefs{\CurrentBib}

\bibitem [\protect \citeauthoryear {%
James%
\ \BBA {} Hastie%
}{%
James%
\ \BBA {} Hastie%
}{%
{\protect \APACyear {2001}}%
}]{%
FLDA}
\APACinsertmetastar {%
FLDA}%
\begin{APACrefauthors}%
James, G.M.%
\BCBT {}\ \BBA {} Hastie, T.J.%
\end{APACrefauthors}%
\unskip\
\newblock
\APACrefYearMonthDay{2001}{}{}.
\newblock
{\BBOQ}\APACrefatitle {Functional linear discriminant analysis for irregularly
  sampled curves} {Functional linear discriminant analysis for irregularly
  sampled curves}.{\BBCQ}
\newblock
\APACjournalVolNumPages{Journal of the Royal Statistical Society: Series B
  (Statistical Methodology)}{63}{3}{533--550}.
\newblock

\newblock

\PrintBackRefs{\CurrentBib}

\bibitem [\protect \citeauthoryear {%
Javed%
, Rahim%
, Saba%
\BCBL {}\ \BBA {} Rehman%
}{%
Javed%
\ \protect \BOthers {.}}{%
{\protect \APACyear {2020}}%
}]{%
javed2020}
\APACinsertmetastar {%
javed2020}%
\begin{APACrefauthors}%
Javed, R.%
, Rahim, M.S.M.%
, Saba, T.%
\BCBL {} Rehman, A.%
\end{APACrefauthors}%
\unskip\
\newblock
\APACrefYearMonthDay{2020}{}{}.
\newblock
{\BBOQ}\APACrefatitle {A comparative study of features selection for skin
  lesion detection from dermoscopic images} {A comparative study of features
  selection for skin lesion detection from dermoscopic images}.{\BBCQ}
\newblock
\APACjournalVolNumPages{Network Modeling Analysis in Health Informatics and
  Bioinformatics}{9}{1}{1--13}.
\newblock

\newblock

\PrintBackRefs{\CurrentBib}

\bibitem [\protect \citeauthoryear {%
Jong%
}{%
Jong%
}{%
{\protect \APACyear {1993}}%
}]{%
jong1993pls}
\APACinsertmetastar {%
jong1993pls}%
\begin{APACrefauthors}%
Jong, S.D.%
\end{APACrefauthors}%
\unskip\
\newblock
\APACrefYearMonthDay{1993}{}{}.
\newblock
{\BBOQ}\APACrefatitle {PLS fits closer than PCR} {Pls fits closer than
  pcr}.{\BBCQ}
\newblock
\APACjournalVolNumPages{Journal of chemometrics}{7}{6}{551--557}.
\newblock

\newblock

\PrintBackRefs{\CurrentBib}

\bibitem [\protect \citeauthoryear {%
Karim%
, Majumdar%
, Darabi%
\BCBL {}\ \BBA {} Chen%
}{%
Karim%
\ \protect \BOthers {.}}{%
{\protect \APACyear {2017}}%
}]{%
LSTM_2}
\APACinsertmetastar {%
LSTM_2}%
\begin{APACrefauthors}%
Karim, F.%
, Majumdar, S.%
, Darabi, H.%
\BCBL {} Chen, S.%
\end{APACrefauthors}%
\unskip\
\newblock
\APACrefYearMonthDay{2017}{}{}.
\newblock
{\BBOQ}\APACrefatitle {LSTM fully convolutional networks for time series
  classification} {Lstm fully convolutional networks for time series
  classification}.{\BBCQ}
\newblock
\APACjournalVolNumPages{IEEE access}{6}{}{1662--1669}.
\newblock

\newblock

\PrintBackRefs{\CurrentBib}

\bibitem [\protect \citeauthoryear {%
Karim%
, Majumdar%
, Darabi%
\BCBL {}\ \BBA {} Harford%
}{%
Karim%
\ \protect \BOthers {.}}{%
{\protect \APACyear {2019}}%
}]{%
multivariate}
\APACinsertmetastar {%
multivariate}%
\begin{APACrefauthors}%
Karim, F.%
, Majumdar, S.%
, Darabi, H.%
\BCBL {} Harford, S.%
\end{APACrefauthors}%
\unskip\
\newblock
\APACrefYearMonthDay{2019}{}{}.
\newblock
{\BBOQ}\APACrefatitle {Multivariate LSTM-FCNs for time series classification}
  {Multivariate lstm-fcns for time series classification}.{\BBCQ}
\newblock
\APACjournalVolNumPages{Neural Networks}{116}{}{237--245}.
\newblock

\newblock

\PrintBackRefs{\CurrentBib}

\bibitem [\protect \citeauthoryear {%
Li%
, Song%
, Zhang%
, Zhu%
\BCBL {}\ \BBA {} Zhu%
}{%
Li%
\ \protect \BOthers {.}}{%
{\protect \APACyear {2021}}%
}]{%
LI2021}
\APACinsertmetastar {%
LI2021}%
\begin{APACrefauthors}%
Li, T.%
, Song, X.%
, Zhang, Y.%
, Zhu, H.%
\BCBL {} Zhu, Z.%
\end{APACrefauthors}%
\unskip\
\newblock
\APACrefYearMonthDay{2021}{}{}.
\newblock
{\BBOQ}\APACrefatitle {Clusterwise functional linear regression models}
  {Clusterwise functional linear regression models}.{\BBCQ}
\newblock
\APACjournalVolNumPages{Computational Statistics \& Data
  Analysis}{158}{}{107192}.
\newblock

\newblock

\PrintBackRefs{\CurrentBib}

\bibitem [\protect \citeauthoryear {%
Lichman%
}{%
Lichman%
}{%
{\protect \APACyear {2013}}%
}]{%
40}
\APACinsertmetastar {%
40}%
\begin{APACrefauthors}%
Lichman, M.%
\end{APACrefauthors}%
\unskip\
\newblock
\APACrefYearMonthDay{2013}{}{}.
\newblock
\APACrefbtitle {UCI machine learning repository.} {Uci machine learning
  repository.}
\newblock
\APAChowpublished {\url{http://archive.ics.uci.edu/ml//}}.
\PrintBackRefs{\CurrentBib}

\bibitem [\protect \citeauthoryear {%
L{\'o}pez-Pintado%
\ \BBA {} Romo%
}{%
L{\'o}pez-Pintado%
\ \BBA {} Romo%
}{%
{\protect \APACyear {2006}}%
}]{%
lopez2006depth}
\APACinsertmetastar {%
lopez2006depth}%
\begin{APACrefauthors}%
L{\'o}pez-Pintado, S.%
\BCBT {}\ \BBA {} Romo, J.%
\end{APACrefauthors}%
\unskip\
\newblock
\APACrefYearMonthDay{2006}{}{}.
\newblock
{\BBOQ}\APACrefatitle {Depth-based classification for functional data}
  {Depth-based classification for functional data}.{\BBCQ}
\newblock
\APACjournalVolNumPages{DIMACS Series in Discrete Mathematics and Theoretical
  Computer Science}{72}{}{103}.
\newblock

\newblock

\PrintBackRefs{\CurrentBib}

\bibitem [\protect \citeauthoryear {%
Maturo%
\ \BBA {} Verde%
}{%
Maturo%
\ \BBA {} Verde%
}{%
{\protect \APACyear {2022}}%
}]{%
maturo2022}
\APACinsertmetastar {%
maturo2022}%
\begin{APACrefauthors}%
Maturo, F.%
\BCBT {}\ \BBA {} Verde, R.%
\end{APACrefauthors}%
\unskip\
\newblock
\APACrefYearMonthDay{2022}{}{}.
\newblock
{\BBOQ}\APACrefatitle {Supervised classification of curves via a combined use
  of functional data analysis and tree-based methods} {Supervised
  classification of curves via a combined use of functional data analysis and
  tree-based methods}.{\BBCQ}
\newblock
\APACjournalVolNumPages{Computational Statistics}{}{}{1--41}.
\newblock

\newblock

\PrintBackRefs{\CurrentBib}

\bibitem [\protect \citeauthoryear {%
M{\"o}ller%
\ \BBA {} Gertheiss%
}{%
M{\"o}ller%
\ \BBA {} Gertheiss%
}{%
{\protect \APACyear {2018}}%
}]{%
moller2018}
\APACinsertmetastar {%
moller2018}%
\begin{APACrefauthors}%
M{\"o}ller, A.%
\BCBT {}\ \BBA {} Gertheiss, J.%
\end{APACrefauthors}%
\unskip\
\newblock
\APACrefYearMonthDay{2018}{}{}.
\newblock
{\BBOQ}\APACrefatitle {A classification tree for functional data} {A
  classification tree for functional data}.{\BBCQ}
\newblock
 \APACrefbtitle {International Workshop on Statistical Modeling.}
  {International workshop on statistical modeling.}
\PrintBackRefs{\CurrentBib}

\bibitem [\protect \citeauthoryear {%
Oleszewski%
}{%
Oleszewski%
}{%
{\protect \APACyear {2012}}%
}]{%
rt}
\APACinsertmetastar {%
rt}%
\begin{APACrefauthors}%
Oleszewski, R.%
\end{APACrefauthors}%
\unskip\
\newblock
\APACrefYearMonthDay{2012}{}{}.
\newblock
\APAChowpublished {\url{http://www.cs.cmu.edu/~bobski//}}.
\PrintBackRefs{\CurrentBib}

\bibitem [\protect \citeauthoryear {%
Pei%
, Dibeklio{\u{g}}lu%
, Tax%
\BCBL {}\ \BBA {} van~der Maaten%
}{%
Pei%
\ \protect \BOthers {.}}{%
{\protect \APACyear {2017}}%
}]{%
m_2}
\APACinsertmetastar {%
m_2}%
\begin{APACrefauthors}%
Pei, W.%
, Dibeklio{\u{g}}lu, H.%
, Tax, D.M.%
\BCBL {} van~der Maaten, L.%
\end{APACrefauthors}%
\unskip\
\newblock
\APACrefYearMonthDay{2017}{}{}.
\newblock
{\BBOQ}\APACrefatitle {Multivariate time-series classification using the
  hidden-unit logistic model} {Multivariate time-series classification using
  the hidden-unit logistic model}.{\BBCQ}
\newblock
\APACjournalVolNumPages{IEEE transactions on neural networks and learning
  systems}{29}{4}{920--931}.
\newblock

\newblock

\PrintBackRefs{\CurrentBib}

\bibitem [\protect \citeauthoryear {%
Poterie%
, Dupuy%
, Monbet%
\BCBL {}\ \BBA {} Rouviere%
}{%
Poterie%
\ \protect \BOthers {.}}{%
{\protect \APACyear {2019}}%
}]{%
poterie2019classification}
\APACinsertmetastar {%
poterie2019classification}%
\begin{APACrefauthors}%
Poterie, A.%
, Dupuy, J\BHBI F.%
, Monbet, V.%
\BCBL {} Rouviere, L.%
\end{APACrefauthors}%
\unskip\
\newblock
\APACrefYearMonthDay{2019}{}{}.
\newblock
{\BBOQ}\APACrefatitle {Classification tree algorithm for grouped variables}
  {Classification tree algorithm for grouped variables}.{\BBCQ}
\newblock
\APACjournalVolNumPages{Computational Statistics}{34}{4}{1613--1648}.
\newblock

\newblock

\PrintBackRefs{\CurrentBib}

\bibitem [\protect \citeauthoryear {%
Preda%
\ \BBA {} Saporta%
}{%
Preda%
\ \BBA {} Saporta%
}{%
{\protect \APACyear {2002}}%
}]{%
preda2002regression}
\APACinsertmetastar {%
preda2002regression}%
\begin{APACrefauthors}%
Preda, C.%
\BCBT {}\ \BBA {} Saporta, G.%
\end{APACrefauthors}%
\unskip\
\newblock
\APACrefYearMonthDay{2002}{}{}.
\newblock
{\BBOQ}\APACrefatitle {R{\'e}gression PLS sur un processus stochastique}
  {R{\'e}gression pls sur un processus stochastique}.{\BBCQ}
\newblock
\APACjournalVolNumPages{Revue de statistique appliqu{\'e}e}{50}{2}{27--45}.
\newblock

\newblock

\PrintBackRefs{\CurrentBib}

\bibitem [\protect \citeauthoryear {%
Preda%
\ \BBA {} Saporta%
}{%
Preda%
\ \BBA {} Saporta%
}{%
{\protect \APACyear {2005}}%
}]{%
PLS_c}
\APACinsertmetastar {%
PLS_c}%
\begin{APACrefauthors}%
Preda, C.%
\BCBT {}\ \BBA {} Saporta, G.%
\end{APACrefauthors}%
\unskip\
\newblock
\APACrefYearMonthDay{2005}{}{}.
\newblock
{\BBOQ}\APACrefatitle {Clusterwise PLS regression on a stochastic process}
  {Clusterwise pls regression on a stochastic process}.{\BBCQ}
\newblock
\APACjournalVolNumPages{Computational Statistics \& Data
  Analysis}{49}{1}{99--108}.
\newblock

\newblock

\PrintBackRefs{\CurrentBib}

\bibitem [\protect \citeauthoryear {%
Preda%
, Saporta%
\BCBL {}\ \BBA {} L{\'e}v{\'e}der%
}{%
Preda%
\ \protect \BOthers {.}}{%
{\protect \APACyear {2007}}%
}]{%
preda2007PLS}
\APACinsertmetastar {%
preda2007PLS}%
\begin{APACrefauthors}%
Preda, C.%
, Saporta, G.%
\BCBL {} L{\'e}v{\'e}der, C.%
\end{APACrefauthors}%
\unskip\
\newblock
\APACrefYearMonthDay{2007}{}{}.
\newblock
{\BBOQ}\APACrefatitle {PLS classification of functional data} {Pls
  classification of functional data}.{\BBCQ}
\newblock
\APACjournalVolNumPages{Computational Statistics}{22}{2}{223--235}.
\newblock

\newblock

\PrintBackRefs{\CurrentBib}

\bibitem [\protect \citeauthoryear {%
Ramsey%
\ \BBA {} Silverman%
}{%
Ramsey%
\ \BBA {} Silverman%
}{%
{\protect \APACyear {2005}}%
}]{%
ramsay2008}
\APACinsertmetastar {%
ramsay2008}%
\begin{APACrefauthors}%
Ramsey, J.O.%
\BCBT {}\ \BBA {} Silverman, B.W.%
\end{APACrefauthors}%
\unskip\
\newblock
\APACrefYear{2005}.
\newblock
\APACrefbtitle {Functional Data Analysis} {Functional data analysis}\
  (\PrintOrdinal{2}\ \BEd).
\newblock
\APACaddressPublisher{}{Springer-Verlag}.
\PrintBackRefs{\CurrentBib}

\bibitem [\protect \citeauthoryear {%
Ribeiro~Jr%
, Diggle%
\BCBL {}\ \protect \BOthers {.}}{%
Ribeiro~Jr%
\ \protect \BOthers {.}}{%
{\protect \APACyear {2001}}%
}]{%
geor}
\APACinsertmetastar {%
geor}%
\begin{APACrefauthors}%
Ribeiro~Jr, P.J.%
, Diggle, P.J.%
\BCBL {}\ \BOthersPeriod {.}\end{APACrefauthors}%
\unskip\
\newblock
\APACrefYearMonthDay{2001}{}{}.
\newblock
{\BBOQ}\APACrefatitle {geoR: a package for geostatistical analysis} {geor: a
  package for geostatistical analysis}.{\BBCQ}
\newblock
\APACjournalVolNumPages{R news}{1}{2}{14--18}.
\newblock

\newblock

\PrintBackRefs{\CurrentBib}

\bibitem [\protect \citeauthoryear {%
Rossi%
\ \BBA {} Villa%
}{%
Rossi%
\ \BBA {} Villa%
}{%
{\protect \APACyear {2006}}%
}]{%
svm}
\APACinsertmetastar {%
svm}%
\begin{APACrefauthors}%
Rossi, F.%
\BCBT {}\ \BBA {} Villa, N.%
\end{APACrefauthors}%
\unskip\
\newblock
\APACrefYearMonthDay{2006}{}{}.
\newblock
{\BBOQ}\APACrefatitle {Support vector machine for functional data
  classification} {Support vector machine for functional data
  classification}.{\BBCQ}
\newblock
\APACjournalVolNumPages{Neurocomputing}{69}{7-9}{730--742}.
\newblock

\newblock

\PrintBackRefs{\CurrentBib}

\bibitem [\protect \citeauthoryear {%
Saikhu%
, Arifin%
\BCBL {}\ \BBA {} Fatichah%
}{%
Saikhu%
\ \protect \BOthers {.}}{%
{\protect \APACyear {2019}}%
}]{%
saikhu2019}
\APACinsertmetastar {%
saikhu2019}%
\begin{APACrefauthors}%
Saikhu, A.%
, Arifin, A.Z.%
\BCBL {} Fatichah, C.%
\end{APACrefauthors}%
\unskip\
\newblock
\APACrefYearMonthDay{2019}{}{}.
\newblock
{\BBOQ}\APACrefatitle {Correlation and symmetrical uncertainty-based feature
  selection for multivariate time series classification} {Correlation and
  symmetrical uncertainty-based feature selection for multivariate time series
  classification}.{\BBCQ}
\newblock
\APACjournalVolNumPages{International Journal of Intelligent Engineering and
  System}{12}{3}{129--137}.
\newblock

\newblock

\PrintBackRefs{\CurrentBib}

\bibitem [\protect \citeauthoryear {%
Sch{\"a}fer%
\ \BBA {} Leser%
}{%
Sch{\"a}fer%
\ \BBA {} Leser%
}{%
{\protect \APACyear {2017}}%
}]{%
ECG}
\APACinsertmetastar {%
ECG}%
\begin{APACrefauthors}%
Sch{\"a}fer, P.%
\BCBT {}\ \BBA {} Leser, U.%
\end{APACrefauthors}%
\unskip\
\newblock
\APACrefYearMonthDay{2017}{}{}.
\newblock
{\BBOQ}\APACrefatitle {Multivariate time series classification with WEASEL+
  MUSE} {Multivariate time series classification with weasel+ muse}.{\BBCQ}
\newblock
\APACjournalVolNumPages{arXiv preprint arXiv:1711.11343}{}{}{}.
\newblock

\newblock

\PrintBackRefs{\CurrentBib}

\bibitem [\protect \citeauthoryear {%
S{\"u}bakan%
, Kurt%
, Cemgil%
\BCBL {}\ \BBA {} Sankur%
}{%
S{\"u}bakan%
\ \protect \BOthers {.}}{%
{\protect \APACyear {2014}}%
}]{%
subakan}
\APACinsertmetastar {%
subakan}%
\begin{APACrefauthors}%
S{\"u}bakan, Y.C.%
, Kurt, B.%
, Cemgil, A.T.%
\BCBL {} Sankur, B.%
\end{APACrefauthors}%
\unskip\
\newblock
\APACrefYearMonthDay{2014}{}{}.
\newblock
{\BBOQ}\APACrefatitle {Probabilistic sequence clustering with spectral
  learning} {Probabilistic sequence clustering with spectral learning}.{\BBCQ}
\newblock
\APACjournalVolNumPages{Digital Signal Processing}{29}{}{1--19}.
\newblock

\newblock

\PrintBackRefs{\CurrentBib}

\bibitem [\protect \citeauthoryear {%
Tenenhaus%
, Gauchi%
\BCBL {}\ \BBA {} M{\'e}nardo%
}{%
Tenenhaus%
\ \protect \BOthers {.}}{%
{\protect \APACyear {1995}}%
}]{%
PLS1995}
\APACinsertmetastar {%
PLS1995}%
\begin{APACrefauthors}%
Tenenhaus, M.%
, Gauchi, J\BHBI P.%
\BCBL {} M{\'e}nardo, C.%
\end{APACrefauthors}%
\unskip\
\newblock
\APACrefYearMonthDay{1995}{}{}.
\newblock
{\BBOQ}\APACrefatitle {R{\'e}gression PLS et applications} {R{\'e}gression pls
  et applications}.{\BBCQ}
\newblock
\APACjournalVolNumPages{Revue de statistique appliqu{\'e}e}{43}{1}{7--63}.
\newblock

\newblock

\PrintBackRefs{\CurrentBib}

\bibitem [\protect \citeauthoryear {%
Tuncel%
\ \BBA {} Baydogan%
}{%
Tuncel%
\ \BBA {} Baydogan%
}{%
{\protect \APACyear {2018}}%
}]{%
cmu}
\APACinsertmetastar {%
cmu}%
\begin{APACrefauthors}%
Tuncel, K.S.%
\BCBT {}\ \BBA {} Baydogan, M.G.%
\end{APACrefauthors}%
\unskip\
\newblock
\APACrefYearMonthDay{2018}{}{}.
\newblock
{\BBOQ}\APACrefatitle {Autoregressive forests for multivariate time series
  modeling} {Autoregressive forests for multivariate time series
  modeling}.{\BBCQ}
\newblock
\APACjournalVolNumPages{Pattern recognition}{73}{}{202--215}.
\newblock

\newblock

\PrintBackRefs{\CurrentBib}

\bibitem [\protect \citeauthoryear {%
Yao%
, Fu%
\BCBL {}\ \BBA {} Lee%
}{%
Yao%
\ \protect \BOthers {.}}{%
{\protect \APACyear {2011}}%
}]{%
yao2011}
\APACinsertmetastar {%
yao2011}%
\begin{APACrefauthors}%
Yao, F.%
, Fu, Y.%
\BCBL {} Lee, T.C.%
\end{APACrefauthors}%
\unskip\
\newblock
\APACrefYearMonthDay{2011}{}{}.
\newblock
{\BBOQ}\APACrefatitle {Functional mixture regression} {Functional mixture
  regression}.{\BBCQ}
\newblock
\APACjournalVolNumPages{Biostatistics}{12}{2}{341--353}.
\newblock

\newblock

\PrintBackRefs{\CurrentBib}

\end{thebibliography}



\end{document}